\newcommand*\widefbox[1]{\fbox{\hspace{2em}#1\hspace{2em}}}
\newcommand{\ket}[1]{\vert#1\rangle}
\newcommand{\braket}[2]{\langle#1\vert#2\rangle}
\newcommand{\ketbra}[2]{\vert #1 \rangle \hspace{-.4mm} \langle #2 \vert}
\newcommand{\sbt}{\,\begin{picture}(-1,1)(-1,-3)\circle*{3}\end{picture}\ } 
\newcommand{\id}{\mathds{1}}
\DeclareMathOperator{\tr}{tr}
\newcommand{\ie}{\textit{i.e.}}
\renewcommand{\H}{\mathcal{H}}
\renewcommand{\L}{\mathcal{L}}
\newcommand{\Scal}{\mathcal{S}}
\newcommand{\inp}{\textup{\texttt{i}}}
\newcommand{\out}{\textup{\texttt{o}}}
\newcommand{\att}{\textup{\texttt{a}}}
\newcommand{\xtt}{\textup{\texttt{x}}}
\newcommand{\xt}{\texttt{x}}
\newcommand{\yt}{\texttt{y}}
\newcommand{\zt}{\texttt{z}}
\newcommand{\doublewidetilde}[1]{{%
  \mathpalette\double@widetilde{#1}%
}}
\newcommand{\double@widetilde}[2]{%
  \sbox\z@{$\m@th#1\widetilde{#2}$}%
  \ht\z@=.9\ht\z@
  \widetilde{\box\z@}%
}
\newcommand{\map}[1]{\widetilde{#1}}  
\newtheorem{definition}{Definition}
\newtheorem{theorem}{Theorem}
\newtheorem{lemma}{Lemma}
\theoremstyle{definition}
\newtheorem{example}{Example}
\newtheorem{observation}{Observation}
\definecolor{brow}{rgb}{0.8515625,0.67578125,0.5234375} 
\definecolor{blu}{rgb}{0.39453125,0.6171875,0.734375} 
\begin{document}
\setlength\fboxrule{1pt}
\title{Characterising transformations between quantum objects, `completeness' of quantum properties, and transformations without a fixed causal order}

\author{Simon Milz}
\orcid{0000-0002-6987-5513}
\email{MILZS@tcd.ie}
\affiliation{School of Physics, Trinity College Dublin, Dublin 2, Ireland}
\affiliation{Trinity Quantum Alliance, Unit 16, Trinity Technology and Enterprise Centre, Pearse Street, Dublin 2, D02YN67, Ireland}
\affiliation{Institute for Quantum Optics and Quantum Information, Austrian Academy of Sciences, Boltzmanngasse 3, 1090 Vienna, Austria}
\affiliation{Faculty of Physics, University of Vienna, Boltzmanngasse 5, 1090 Vienna, Austria}

\author{Marco T\'ulio Quintino}
\orcid{0000-0003-1332-3477}
\email{Marco.Quintino@lip6.fr}
\affiliation{{Sorbonne Universit\' {e}, CNRS, LIP6, F-75005 Paris, France}}

\date{15th July 2024}

\begin{abstract}
Many fundamental and key objects in quantum mechanics are linear mappings between particular affine/linear spaces. This structure includes basic quantum elements such as states, measurements, channels, instruments, non-signalling channels and channels with memory, and also higher-order operations such as superchannels, quantum combs, n-time processes, testers, and process matrices which may not respect a definite causal order. Deducing and characterising their structural properties in terms of linear and semidefinite constraints is not only of foundational relevance, but plays an important role in enabling the numerical optimisation over sets of quantum objects and allowing simpler connections between different concepts and objects. Here, we provide a general framework to deduce these properties in a direct and easy to use way. While primarily guided by practical quantum mechanical considerations, we also extend our analysis to mappings between \textit{general} linear/affine spaces and derive their properties, opening the possibility for analysing sets which are not explicitly forbidden by quantum theory, but are still not much explored. Together, these results yield versatile and readily applicable tools for all tasks that require the characterisation of linear transformations, in quantum mechanics and beyond. As an application of our methods, we discuss how the existence of indefinite causality naturally emerges in higher-order quantum transformations and provide a simple strategy for the characterisation of mappings that have to preserve properties in a `complete' sense, i.e., when acting non-trivially only on parts of an input space.
\end{abstract}
\maketitle

\clearpage 
\tableofcontents
\clearpage 

\section{Introduction}
Many fundamental objects in quantum mechanics can, at their most basic level, be understood as (linear) transformations of other basic objects. For example, measurements are transformations of states to probabilities, while quantum channels are transformations of quantum states to quantum states. This simple understanding of quantum objects as transformations can straightforwardly be extended, leading to a whole host of \textit{higher-order} transformations (see Fig.~\ref{fig::chain} for an example). To name but a few, transformations of channels to channels (yielding so-called superchannels~\cite{chiribella08quantum_supermaps}), sequences of quantum channels with memory to quantum channels (yielding so-called quantum combs~\cite{chiribella07circuit_architecture} and quantum strategies~\cite{gutoski07quantum_games}), a sequence of channels to states (yielding so-called multi-time processes~\cite{pollock15markovian}) and collections of channels to probabilities (yielding testers~\cite{chiribella09networks,ziman08_process_POVM,bavaresco21} and process matrices~\cite{oreshkov11}) have been devised in recent years, each with their own respective physical motivation. On the other hand, such higher-order transformations can equivalently be motivated as the correct descriptor in many physical situations, where states, measurements and channels alone would prove to be insufficient practical tools. Consequently, they have, amongst others, found active use in the fields of open quantum system dynamics~\cite{pollock15markovian}, quantum circuit architecture~\cite{chiribella07circuit_architecture}, the investigation of channels with memory~\cite{kretschmann05channels_memory}, as well as the study of causal indefiniteness~\cite{oreshkov11} and the dynamics of causal order~\cite{castro-ruiz_dynamics_2018}. 
\begin{figure}[ht!]
    \centering
    \includegraphics[width = 0.96\linewidth]{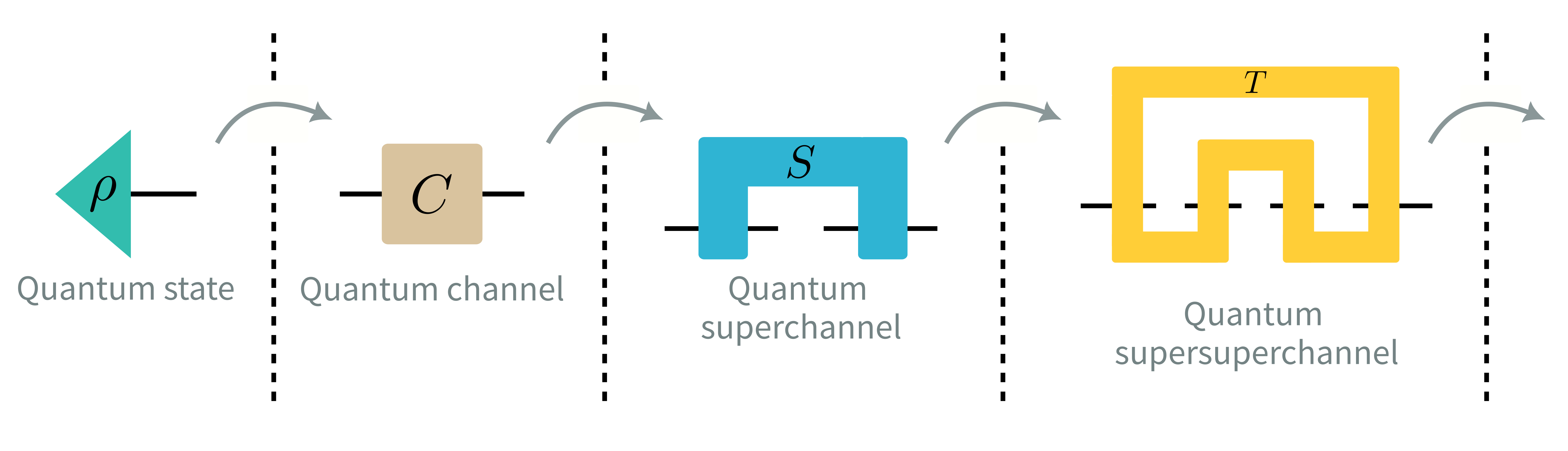}
\caption{\textbf{Chain of higher-order operations.} A pictorial illustration of an exemplary chain of higher-order operations. Channels describe transformations between states, superchannels describe transformation between channels, and supersuperchannels describe transformations between superchannels. The methods we provide allow for the characterization of all mappings in this chain, but more generally all mappings between \textit{arbitrary} quantum objects. The precise definition and a characterisation of such objects and the mappings between them can be found in Sec.~\ref{sec::Emergence}.}
\label{fig::chain}
\end{figure}

Independent of the respective concrete motivation, in any of these investigations it is, as a first step, frequently necessary to deduce the structural properties of the considered transformations, i.e., a characterisation of the transformation in a chosen representation\footnote{Here, and throughout, unless stated otherwise, we mean a characterisation of the \textit{Choi matrix} of the transformation when we consider its `properties'.} that goes beyond its original definition. For example, for the case of process matrices, one is interested in the structure of mappings that map pairs of independent quantum channels (or, equivalently, any two-party non-signalling channel) to unit probability, in order to analyse the set of processes that abide by causality locally, but not necessarily globally~\cite{oreshkov11}. Having their at hand then does not only allow one to deduce that this latter set fundamentally differs from the set of causally ordered processes, but also enables numerical optimisation over causally indefinite processes. On the more axiomatic side, recent works have discussed the properties of the hierarchy of transformations that emerges from starting at quantum states and 'going up the ladder' of transformations, i.e., transformations of states, transformations of transformations of states, etc.~\cite{perinotti16higher, bisio_theoretical_2019, simmons_higher-order_2022, hoffreumon_projective_2022}. 

Here, we stay agnostic with respect to the origin of the respective transformations and provide a general framework to answer the question: What are the properties of linear transformations between affine/linear spaces? For example, this question directly captures the case of quantum channels -- completely positive mappings of the affine space of matrices with unit trace onto itself -- but also \textit{all} conceivable transformations between quantum objects alluded to before. Concretely, we phrase the properties of the respective spaces as well as the transformations between them in terms of linear projectors and employ their representation as matrices via the Choi isomorphism for their explicit characterisation. The 'projector approach' and the methods presented here may be viewed as deeper analysis and a generalisation of the ideas first presented in Ref.~\cite{araujo15witnessing}, which were developed to present a basis-independent characterisation for process matrices and to study quantum processes which do not respect a definite causal order.
A similar approach to the characterisation of higher-order maps has, for example, been taken in~\cite{castro-ruiz_dynamics_2018,milz_resource_2022,hoffreumon_projective_2022,simmons_higher-order_2022}. Moreover, in-depth investigations into the structure and type theory of the hierarchy of conceivable higher-order quantum maps can be found in Refs.~\cite{perinotti16higher,bisio_theoretical_2019,hoffreumon_projective_2022,simmons_higher-order_2022,apadula22nosignalling}, where, in particular Ref.~\cite{hoffreumon_projective_2022} not only employs a similar approach to ours, but also provides a detailed analysis of the logical and set-theoretic structure of the projectors that define the properties of the transformations we analyse. 

This current work is more modest in scope; leveraging the link product~\cite{chiribella09networks} and the linearity of the involved transformations, we provide a straightforward, systematic way to derive the properties of arbitrary transformations between sets that naturally emerge in quantum mechanics. In turn, this allows us to re-derive the properties of a vast array of relevant quantum transformations appearing in different parts of the literature in a unified and direct way. We further demonstrate the effectiveness and ease of this framework by explicitly discussing affine dual sets as well as probabilistic quantum operations and the signalling structures of quantum maps. As an additional application of our methods, we analyse the emergence of indefinite causal order in higher-order quantum operations.

In many cases of interest, a transformation does not only have to preserve certain properties (like, for example, the trace of a state) when acting entirely on the input space, but also when acting non-trivially on only a part of it. This is the reasoning behind the introduction of \emph{complete} positivity (in contrast to mere positivity), which requires not only positivity of a map $\map{C}$, but also positivity of all of its trivial extensions $\map{C} \otimes \map{\id}$, where $\map{\id}$ is the identity map on an arbitrary anciliary space. While most commonly encountered in the context of positivity and complete positivity, the analogous question can equivalently be raised for other properties of quantum transformations. For example, it is obvious that $\map{C} \otimes \map{\id}$ is trace preserving if $\map{C}$ is, but it is a priori unclear if `completeness' of properties is automatically implied for more general properties. Even more so, for many relevant cases, it not even immediate or unique, how a specific property should be extended to a larger input space. Parts of this question have -- in different contexts -- been considered in the literature, either for concrete questions at hand~\cite{araujo16purification,burniston_necessary_2020, milz_resource_2022}, or for specific types of extensions~\cite{simmons_higher-order_2022, hoffreumon_projective_2022,Wilson22Locality,Wilson23linearity}. Here, using the characterisation of quantum transformations, we provide a simple strategy to derive the structure of transformations that have to preserve a structural property in a `complete' sense. In particular, this approach is agnostic to the chosen extension, and allows for the identification of sufficient conditions under which a transformation is automatically completely admissible. The versatility and applicability of this approach is then demonstrated by means of concrete examples, that not only recover known cases from the literature, but also demonstrate how the requirements imposed by `completeness' fundamentally depend on the chosen extension of a property.

Finally, owing to the simplicity of our approach, we are also able to drop the assumptions generally fulfilled in quantum mechanics -- like, for example, the self-adjointness of the involved projectors, or the membership of the identity matrix to the respective input and output spaces -- and derive the properties of transformation between \textit{general} spaces, a result that might be of interest in its own right. Together, our results provide a unified framework to discuss and characterise \textit{all} (quantum) objects and linear transformations thereof as well as the restrictions imposed by `completeness', thus offering a versatile tool for a wide array of problems that naturally occur in the field of quantum mechanics in beyond.  

\section{Warming up: quantum states and quantum channels}
\label{sec::QstatesQchannels}
The fundamental set of objects that quantum mechanics is concerned with are quantum states $\rho \in \L(\H)$, unit trace (i.e., $\tr[\rho]=1$), positive semidefinite (i.e., $\rho \geq 0$) linear operators acting on a Hilbert space $\H$. Here, and throughout, we consider $\H$ to be finite dimensional, such that $\H \cong \mathbb{C}^d$ for some $d \in \mathbb{N}$. 

Transformations of quantum states are then described by \textit{linear maps} $\map{T}:\L(\H_\inp)\to\L(\H_\out)$, where we adopt the convention of referring to the input space as $\H_\inp$, the output space as $\H_\out$ and maps between operators with a tilde. For better bookkeeping, we always explicitly distinguish between input and output spaces, even if $\H_\inp \cong \H_\out$.\footnote{The only exception to this rule will be the projectors $\widetilde P$ that we introduce below.} A transformation $\map{T}$ is a valid \textit{quantum channel}, i.e., it  represents a deterministic transformation between quantum states that can be physically implemented, if it is a completely positive (CP)\footnote{A linear map $\map{T}:\L(\H_\inp)\to\L(\H_\out)$ is positive when $\map{T}[\rho]\geq0$ for every positive semidefinite linear operator $ \rho\geq0, \rho\in \L(\H_\inp)$. A map is CP if it is positive for all trivial extensions, that is, $\map{T}\otimes\map{\id}(\sigma)\geq0$ for every positive semidefinite $ {\sigma\in\L(\H_\inp\otimes\H_\text{aux})}$ where $\map{\id}:\L(\H_\text{aux})\to\L(\H_\text{aux})$ is the identity map on an arbitrary finite space $\H_\text{aux}$, i.e., $\map{\id}[\rho]=\rho, \forall \rho\in\L(\H_\text{aux})$.} and trace-preserving\footnote{A linear map $\map{T}:\L(\H_\inp)\to\L(\H_\out)$ is TP when $\tr[\map{T}[\rho]]=\tr[\rho]$ for every linear operator $\rho\in \L(\H_\inp)$.} (TP) linear map.

While the mathematical characterisation of matrices that represent quantum states $\rho$ is clear, it is, a priori, unclear, what the corresponding properties of a representation of a quantum \textit{channel} -- a transformation between sets of quantum states -- are.  Here, we aim to provide a simple way of characterising mappings between objects that routinely occur in quantum mechanics. To illustrate the general concept, we first provide an answer for the well-known case of CPTP maps. To this end, we exploit the fact that linear maps admit a convenient representation as linear operators via the \textit{Choi-Jamio{\l}kowski isomorphism} (CJI)~\cite{pills67, jamiolkowski72, choi75}: Let $\{\ket{j}\}_j$ be the canonical computational basis for $\H_\inp$. The \textit{Choi operator/matrix} $T\in\L(\H_\inp\otimes\H_\out)$ of a linear map $\map{T}:\L(\H_\inp)\to\L(\H_\out)$ is then defined as
\begin{equation}
\label{eqn::Choi}
    T:= \sum_{jk} \ketbra{j}{k}\otimes \map{T}\big[\ketbra{j}{k}\big].
\end{equation}
Direct calculation shows that the action of $\map T$ can be written in terms of its Choi matrix $T$ as
\begin{equation}
\label{eqn::ChoiDef}
    \map{T}[\rho]=\tr_{\inp}[(\rho^\tau \otimes \id_\out) \; T]
\end{equation}
where $\rho^\tau$ is the transpose of $\rho$ in the computational basis and $\tr_\inp$ is the partial trace over $\H_\inp$.
\begin{figure}[t]
    \centering
    \subfigure[~]
    {
    \includegraphics[scale=1.15]{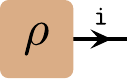}
    \label{fig::State}} \hspace{2cm}
    \subfigure[~]
    {
    \includegraphics[scale=1.15]{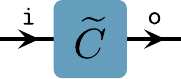}
    \label{fig::Channel}} 
    \caption{\textbf{Quantum states and Quantum Channels.} (a) Quantum state $\rho \in \L(\H_\inp)$. (b) Quantum channel $\map{C}: \L(\H_\inp) \rightarrow \L(\H_\out)$ that maps quantum states in $\L(\H_\inp)$ onto quantum states in $\L(\H_\out)$. Throughout, lines are labelled by the Hilbert space/space of linear operators they correspond to.}
\end{figure}

To characterise the properties of $T$, we note that a linear map $\map{T}:\L(\H_\inp)\to\L(\H_\out)$ is TP if and only if, $\tr_\out[T]=\id_\inp$ and CP if and only if $T\geq0$. Hence, CPTP maps (quantum channels) $\map{C}$ are described by a Choi matrix $C\in\L(\H_\inp \otimes\H_\out)$ that satisfies
\begin{align}
    C\geq& 0 \quad \text{and} \quad \tr_\out[C]=\id_\inp.
\end{align}
    In anticipation of later considerations, we can phrase this equivalently as
\begin{align}
    C\geq& 0 \label{eq:pos1}\\
   {}_\out C=&_{\inp\out}C \label{eq:lin1} \\
    \tr[C]=&d_\inp, \label{eq:aff1}
\end{align}
where $_\xt C:=\tr_\xt[C] \otimes \frac{\id_\xt}{d_\xt}$ is the trace-and-replace map and $d_\xt$ is the dimension of $\H_\xt$. 
Notice that, for consistency, one should keep track of the ordering of the operators, for instance, if $C\in\L(\H_\inp\otimes\H_\out)$, $_\inp C=\frac{\id_\inp}{d_\inp}\otimes\tr_\inp[C]$ and $_\out C=\tr_\out[C]\otimes \frac{\id_\out}{d_\out}$. Whenever there is risk of ambiguity, or we desire to emphasise some property, we will use subscripts (rather than explicit ordering) to indicate what space an object is defined/acts on. 

The characterisation of quantum channels given by Eqs.~\eqref{eq:pos1}--~\eqref{eq:aff1} has an interesting structure, which will be the starting point to analyse the structure of more general transformations throughout this work. Eq.~\eqref{eq:pos1} is a positivity constraint, Eq.~\eqref{eq:lin1} is a linear constraint, and Eq.~\eqref{eq:aff1} is an affine constraint. Consequently, the set of linear operators satisfying Eq.~\eqref{eq:lin1} form a linear subspace of $\L(\H_\inp \otimes\H_\out)$, and can thus be described by a projective map $\map{P}:\L(\H_\inp \otimes\H_\out)\to\L(\H_\inp\otimes\H_\out)$, where
\begin{equation}
\label{eqn::ChannProj}
   {}_\out C={}_{\inp\out}C  \iff C=\map{P}[C] \quad \text{with} \quad \map{P}[C]:=C-{}_\out C+{}_{\inp\out}C\, .
\end{equation}
We can easily verify that $\map{P}$ is indeed a projective map, that is, ${\map{P}}^2 := \map{P}\circ\map{P} = \widetilde P$. 
Putting everything together, an operator $C\in\L(\H_1\otimes\H_2)$ is the Choi operator of a quantum channel if and only if
\begin{alignat}{2}
\label{eqn::1}
     C&\geq 0, \quad \qquad&  &\text{(Positive semidefinite)}\\
     \label{eqn::2}
    C&=\map{P}[C], &  &\text{(Linear subspace)} \\
     \label{eqn::3}
   \tr[C]&=d_\inp, &  &\text{(Fixed trace)}.
\end{alignat}

Put differently, besides the positivity and overall trace constraint, the set of quantum channels is fully defined by the projector $\map{P}$.

While positivity of the Choi matrix simply follows from the requirement of complete positivity for the map (a property that we will assume throughout), for more general mappings, working out both the trace constraint and the correct projector can be somewhat cumbersome. For example, it is a priori unclear what properties a mapping from quantum channels to quantum channels (a so-called supermap~\cite{chiribella08quantum_supermaps} or superchannel) would possess, and similar for any mapping `higher up in the hierarchy'. Below, we extend the above  concepts and methods to provide a direct and systematic way to derive the linear and affine constraints for transformations between general quantum objects (see Fig.~\ref{fig::gen_setup} for a graphical depiction).
\begin{figure}[ht!]
    \centering
    \includegraphics[width = 0.96\linewidth]{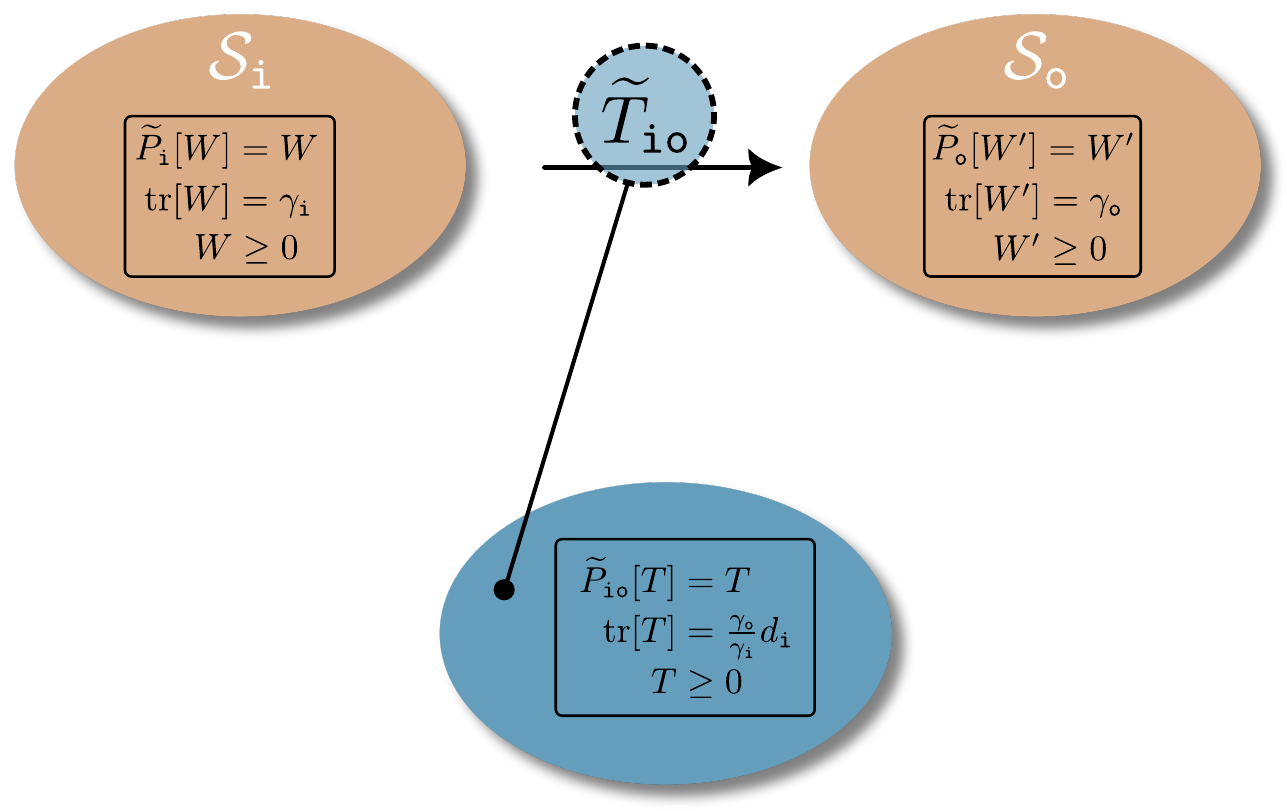}
    \caption{\textbf{General Transformations between affine (quantum) sets} All sets $\Scal_\inp$ and $\Scal_\out$ we consider are defined by a linear constraint (given by $\map{P}_\inp$ and $\map{P}_\out$, respectively) and in most cases an affine and a positivity constraint. Our aim is to characterise the set of linear transformations $\map{T}_{\inp\out}$ between them. Predominantly, this characterisation will be carried out for the Choi matrices $T$ of said transformations, and as it turns out, the corresponding set of matrices is, again, defined by a projector $\map{P}_{\inp\out}$, as well as an affine and a positivity constraint. The concrete construction of $\map{P}_{\inp\out}$ depends on the respective properties of $\map{P}_\inp$ and $\map{P}_{\out}$. Thm.~\ref{thm::genFormProj} provides this construction for the special case most often encountered in quantum mechanics, while the general case is discussed in Thm.~\ref{thm::FullyGenProj}. Likewise, the case where there are no affine constraints on $\Scal_\inp$ and $\Scal_\out$ is discussed in Thms.~\ref{thm::LinSpaceProj} and~\ref{thm::TransChoiGen}. }
    \label{fig::gen_setup}
\end{figure}

\section{Linear transformations between quantum objects}
\label{sec::GenTrans}
\subsection{Sets of quantum objects}
The characterisation~\eqref{eqn::ChannProj} of the set of quantum channels via a projector provides a potent way to derive the structural properties of mappings that occur frequently in quantum mechanics. We can use this structure to represent a very general class of (deterministic) quantum objects, such as quantum states, quantum channels, quantum superchannel~\cite{chiribella08quantum_supermaps}, quantum combs~\cite{chiribella07circuit_architecture,chiribella09networks}, channels with memory~\cite{kretschmann05channels_memory}, quantum strategies~\cite{gutoski07quantum_games}, non-Markovian processes~\cite{milz21markov}, causal quantum operations~\cite{backman01causal_operations}, non-signalling channels~\cite{piani06nonsignalling}, process matrices~\cite{oreshkov11,araujo15witnessing}, and more generally mappings between \textit{any} kinds of linear spaces~\cite{perinotti16higher,bisio_theoretical_2019, simmons_higher-order_2022, hoffreumon_projective_2022,apadula22nosignalling}. Before discussing the fully general case, i.e., mappings between general linear spaces (see Sec.~\ref{sec::GenAppr}), we start with a discussion of scenarios that are commonly encountered in quantum mechanics.
\begin{definition}[Set of quantum objects] \label{def:quantum_object}
A set of linear operators $\mathcal{S}\subseteq\L(\H)$ is a quantum object set if it can be described by:

\begin{subequations}
\begin{empheq}[box=\fcolorbox{brow}{white}]{alignat=3}
\text{A linear operator }  W&\in\L(\H) &&\text{ belongs to $\mathcal{S}$ if and only if:} \nonumber \\
  W &\geq0 && \text{(Positive semidefinite),}\\
\map{P}[W]&=W &&  \text{(Belongs to a particular linear subspace),}\\
\tr[W]&=\gamma && \text{(Fixed trace)},
\end{empheq}
\end{subequations}
where $\map{P}:\L(\H)\to\L(\H)$ is a linear projective map, that is $\map{P}^2:=\map{P}\circ\map{P}=\map{P}$.
\end{definition}
For example, both quantum states and the set of Choi matrices of quantum channels satisfy the above definition. For quantum states, we have $\map{P} = \map{\id}$ (where $\map{\id}$ denotes the identity map\footnote{The identity map is defined by $\map{\id}[X] = X$ for all $X\in \L(\H)$.}) and $\gamma = 1$, while for quantum channels, $\map{P}$ is given by Eq.~\eqref{eqn::ChannProj} and $\gamma = d_\inp$. 

\subsection{Transformations between quantum objects}

Let us consider two arbitrary sets of quantum objects $\Scal_\inp \subseteq \L(\H_\inp) $ and $ \Scal_\out \subseteq \L(\H_\out)$ where we use \inp\,  and \out\, as general placeholders for 'input' and 'output'. 
Our main question then is:
{\begin{center}
\textit{How can the set of quantum transformations $\widetilde T_{\inp\out}$ from $\Scal_\textup{\inp}$ to $ \Scal_\textup{\out}$ be characterised?}     
\end{center}}
Since we require $\map{T}_{\inp\out}$  to map elements from $\Scal_\textup{\inp}$ to $ \Scal_\textup{\out}$, we require that for every $W\in\Scal_\inp$, we have that $ \map{T}_{\inp\out}[W]\in\Scal_\out$, where we use additional subscripts on $\map{T}$ to signify its input and output space. Also, in order to be consistent with the linearity of quantum theory, the transformation $\map{T}_{\inp\out}$ is required to be a linear map from the linear space spanned by $\Scal_\inp$ to the linear space spanned by $\Scal_\out$.  Additionally, since all elements of $\Scal_\inp$ and $\Scal_\out$ are positive, we would at least require that $\map{T}_{\inp\out}$ is positive on all $W\in \Scal_\inp$. In line with standard considerations in quantum mechanics, throughout, we go beyond this minimal requirement\footnote{Positivity can also be argued for in order to ensure that all \textit{probabilistic} quantum object (see Sec.~\ref{sec::Probabilistic} for more details) are mapped to positive objects as well. However, this argument requires that the probabilistic quantum objects, i.e., all $W^\sharp\in \L(\H_\inp)$ which satisfy $W^\sharp \leq W$ for some $W\in \Scal_\inp$, span the full space $\L(\H_\inp)$. This is the case if $\Scal_i$ contains at least one full rank state, which we generally assume (see below).} and demand that $\map{T}_{\inp\out}$ is a positive map on all of $\L(\H_\inp)$. Finally, similarly to quantum channels acting on quantum states, we desire $\map{T}_{\inp\out}$ to be a valid transformation even when it is applied to only a part of a larger quantum object, which (at least) requires that $\map{T}_{\inp\out}:\L(\H_\inp)\to\L(\H_\out)$ is completely positive. In turn, this implies that all Choi matrices we encounter throughout are positive semidefinite (see Sec.~\ref{sec::ChoiCharSpec}). 

\begin{definition}[Quantum Transformations]
Let $\map{P}_\inp:\L(\H_\inp)\to\L(\H_\inp)$ and  $\map{P}_\out:\L(\H_\out)\to\L(\H_\out)$ be linear projective maps and $\Scal_\inp \subseteq \L(\H_\inp) $ and $ \Scal_\out \subseteq \L(\H_\out)$ be sets of quantum objects defined by
\begin{align}
\setlength{\arrayrulewidth}{1pt}
\renewcommand{\arraystretch}{1.2}
\begin{tabular}{|c|c|c|}
\arrayrulecolor{brow}
\hhline{|-|~|-|}
$W\in \L(\H_\inp) $ belongs to $\Scal_\inp$ iff  & \hspace*{20mm}& $W'\in \L(\H_\out) $ belongs to $\Scal_\out$ iff  \\
$W\geq0$, &\hspace*{5mm} \Large{$\mathbf{\longrightarrow}$}  \hspace*{5mm} & $W'\geq0$, \\
$\map{P}_\inp[W]=W$, && $\map{P}_\out[W']=W'$, \\
$\tr[W]=\gamma_\inp$. && $\tr[W']=\gamma_\out$. \\
\hhline{|-|~|-|}
\end{tabular}
\end{align} 
A linear map $\map{T}_{\inp\out}:\L(\H_\inp)\to\L(\H_\out)$ is a quantum transformation from $\Scal_\inp$ to $\Scal_\out$ if:
\begin{subequations}
\arrayrulecolor{blu}
\begin{empheq}[box=\fcolorbox{blu}{white}]{align}
	i:& \quad \map{T}_{\inp\out} \text{ is completely positive }\\
	ii:& \quad \forall W\in\Scal_\inp, \text{ we have that } \map{T}_{\inp\out}[W]\in\Scal_\out
\end{empheq}
\end{subequations}
\end{definition}
General linear mappings of this type have previously been employed in the quantum information literature, for example for the analysis of the dynamics of quantum causal structures~\cite{castro-ruiz_dynamics_2018} as well as, under the guise of 'admissible adapters', in the resource theory of causal connection~\cite{milz_resource_2022}, or as 'structure-preserving maps'~\cite{hoffreumon_projective_2022}, in the study of transformations between general quantum objects. More detailed structural investigations of the hierarchy of transformations such maps engender have been carried out in~\cite{perinotti16higher, bisio_theoretical_2019, simmons_higher-order_2022, hoffreumon_projective_2022}.

Importantly, for the concrete characterisation of $\map{T}_{\inp\out}$ (or, equivalently, its Choi matrix $T_{\inp\out}$) only the linear and affine constraints on $\Scal_\inp$ and $\Scal_\out$ play a role. The positive semidefiniteness constraint on both sets on the other hand only enters in the requirement for $\map{T}_{\inp\out}$ to be CP (or, equivalently, its Choi matrix $T_{\inp\out}$ to be positive semidefinite). Concretely, this holds true, since in the cases we consider, the positivity restriction does not alter the span of the sets, i.e., both $\Scal_\inp$ and $\Scal_\out$ span the same spaces ($\map{P}_\inp[\L(\H_\inp)]$ and $\map{P}_\out[\L(\H_\out)]$, respectively) with or without the positivity constraints imposed on their elements. Consequently, positivity of the respective elements does not enter as an additional constraint on $\map{T}_{\inp\out}$. As a result, in what follows, we rarely ever explicitly assume positivity for the elements of $\Scal_\inp$ and $\Scal_\out$ and mostly consider transformations between affine sets. Positivity of the respective elements, as well as complete positivity of the maps between $\Scal_\inp$ and $\Scal_\out$ can then always be imposed by hand without any added complications. On the other hand, while a similar argument could seemingly be made for the affine constraints -- since they generally do not change the span of $\Scal_\inp$ and $\Scal_\out$, either -- they fix a rescaling factor, in the sense that $\tr[\map{T}_{\inp\out}[W]] = \gamma_\out/\gamma_\inp\tr[W]$ for all $W \in \Scal_\inp$, thus playing a crucial role for the specific properties of $\map{T}_{\inp\out}$.
\vspace{-2mm}
\subsection{Map characterisation of quantum transformations}
We now present our first theorem -- which has, in slightly different form, already been derived in Refs.~\cite{castro-ruiz_dynamics_2018, milz_resource_2022, simmons_higher-order_2022, hoffreumon_projective_2022} -- to characterise quantum transformations. In this first characterisation, we aim to completely characterise the linear map $\map{T}_{\inp\out}$ without making reference to its Choi operator, but directly to its map properties. \vspace{-1.5mm}
\begin{theorem}[Transformation between affine sets: map version]
\label{thm::genFormProjOp}
Let $\map{P}_\inp:\L(\H_\inp)\to\L(\H_\inp)$ and  $\map{P}_\out:\L(\H_\out)\to\L(\H_\out)$ be linear projective maps and $\Scal_\inp \subseteq \L(\H_\inp) $ and $ \Scal_\out \subseteq \L(\H_\out)$ be affine sets defined by \vspace{-2.5mm}
\begin{align}
\setlength{\arrayrulewidth}{1pt}
\renewcommand{\arraystretch}{1.2}
\arrayrulecolor{brow}
\begin{tabular}{|c|c|c|}
\hhline{|-|~|-|}
$W\in \L(\H_\inp) $ belongs to $\Scal_\inp$ iff  & \hspace*{20mm}& $W'\in \L(\H_\out) $ belongs to $\Scal_\out$ iff  \\
$\map{P}_\inp[W]=W$, &\hspace*{5mm} \Large{$\mathbf{\longrightarrow}$}  \hspace*{5mm}& $\map{P}_\out[W']=W'$, \\
$\tr[W]=\gamma_\inp$. && $\tr[W']=\gamma_\out$.  \\
\hhline{|-|~|-|}
\end{tabular}
\end{align} \vspace{-1.5mm}
For $\gamma_\inp \neq 0$\footnote{We emphasise that assuming that $\tr[W]\neq0$ is not a strong restriction, and is quite natural for practical applications. Since all quantum objects are positive semidefinite, the only traceless object is the zero operator. In Sec.~\ref{sec::GenAppr} we discuss a more general version of this theorem.}, a linear map $\map{T}_{\textup{\inp \out}}:\L(\H_\textup{\inp}) \rightarrow \L(\H_\textup{\out})$ satisfies
$
        \map{T}_{\textup{\inp \out}}[W]\in \mathcal{S}_\textup{\out},$ for all $ W \in \mathcal{S}_\textup{\inp} 
$
iff \vspace{-1mm}
\begin{subequations}
\begin{empheq}[box=\fcolorbox{blu}{white}]{align}
\label{eqn::GenProjOp1}
   \map{P}_\textup{\out} \circ \map{T}_{\textup{\inp\out}} \circ \map{P}_\textup{\inp} &= \map{T}_{\textup{\inp\out}} \circ \map{P}_\textup{\inp},\\
    \label{eqn::GenProjOp2}
   \text{and} \quad \tr \circ \map{T}_{\textup{\inp\out}} \circ \map{P}_\textup{\inp} &= \frac{\gamma_\out}{\gamma_\inp} \tr \circ \map{P}_\textup{\inp}\, .
\end{empheq}
\end{subequations}
\end{theorem}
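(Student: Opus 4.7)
The proof is a biconditional, and I would handle the two directions separately, with most of the effort going into the ``only if'' direction. The basic strategy is to reduce statements about arbitrary operators in $\L(\H_\inp)$ to statements about elements of $\Scal_\inp$, using the fact that every element of the linear subspace $\map{P}_\inp[\L(\H_\inp)]$ can be written as an affine combination of elements of $\Scal_\inp$ (provided $\Scal_\inp$ is nonempty, which is implicit in the setup once $\gamma_\inp\neq 0$; an arbitrary representative $W_0\in\Scal_\inp$ will be fixed once and for all).

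For the easier ``if'' direction, I would assume Eqs.~\eqref{eqn::GenProjOp1}--\eqref{eqn::GenProjOp2} and pick any $W\in\Scal_\inp$. Because $\map{P}_\inp[W]=W$, applying Eq.~\eqref{eqn::GenProjOp1} to $W$ immediately gives $\map{P}_\out\big[\map{T}_{\inp\out}[W]\big]=\map{T}_{\inp\out}[W]$, and applying Eq.~\eqref{eqn::GenProjOp2} gives $\tr\big[\map{T}_{\inp\out}[W]\big]=(\gamma_\out/\gamma_\inp)\tr[W]=\gamma_\out$. Hence $\map{T}_{\inp\out}[W]\in\Scal_\out$.

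For the ``only if'' direction, assume $\map{T}_{\inp\out}[W]\in\Scal_\out$ for every $W\in\Scal_\inp$. I need to verify that both identities hold as functionals on arbitrary $X\in\L(\H_\inp)$, so set $Y:=\map{P}_\inp[X]$; by idempotence $\map{P}_\inp[Y]=Y$, but generically $\tr[Y]\neq\gamma_\inp$, so $Y\notin\Scal_\inp$ directly. The key trick is to correct the trace by an affine shift: define
\begin{equation*}
\lambda:=1-\tr[Y]/\gamma_\inp,\qquad Z:=Y+\lambda W_0,
\end{equation*}
so that $\map{P}_\inp[Z]=Z$ and $\tr[Z]=\gamma_\inp$, i.e.\ $Z\in\Scal_\inp$. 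Since $W_0,Z\in\Scal_\inp$, linearity gives $\map{T}_{\inp\out}[Y]=\map{T}_{\inp\out}[Z]-\lambda\map{T}_{\inp\out}[W_0]$ with both terms on the right lying in $\Scal_\out$. Applying $\map{P}_\out$ to this identity and using $\map{P}_\out[\map{T}_{\inp\out}[Z]]=\map{T}_{\inp\out}[Z]$ and $\map{P}_\out[\map{T}_{\inp\out}[W_0]]=\map{T}_{\inp\out}[W_0]$ recovers Eq.~\eqref{eqn::GenProjOp1}; taking the trace instead and using $\tr\circ\map{T}_{\inp\out}=\gamma_\out$ on $\Scal_\inp$ yields $\tr[\map{T}_{\inp\out}[Y]]=\gamma_\out(1-\lambda)=(\gamma_\out/\gamma_\inp)\tr[Y]$, which is Eq.~\eqref{eqn::GenProjOp2}.

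The only real subtlety, and hence the main obstacle, is the affine-correction step: one must know that $\Scal_\inp$ is nonempty in order to pick $W_0$, and one must have $\gamma_\inp\neq 0$ to define $\lambda$ (both hypotheses of the theorem). Everything else is bookkeeping in the linearity of $\map{T}_{\inp\out}$ and the idempotence of the two projectors. If $\Scal_\inp$ were empty the statement would be vacuous on one side and, without the $\gamma_\inp\neq 0$ hypothesis, the rescaling of the trace in Eq.~\eqref{eqn::GenProjOp2} would not be well defined, foreshadowing the more delicate treatment of the purely linear (no affine constraint) case that the paper promises in Sec.~\ref{sec::GenAppr}.
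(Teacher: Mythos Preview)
Your proof is correct and follows essentially the same approach as the paper. The paper phrases the converse direction slightly more abstractly---it simply observes that $\gamma_\inp\neq 0$ implies $\text{span}(\Scal_\inp)=\map{P}_\inp[\L(\H_\inp)]$ and then invokes linearity to extend the identities from $\Scal_\inp$ to its span---whereas your additive correction $Z=Y+\lambda W_0$ is precisely an explicit witness of that span statement (and, incidentally, avoids the case split on $\tr[\map{P}_\inp[X]]=0$ that a multiplicative rescaling would require).
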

\begin{proof}
We start by showing that if Eqs.~\eqref{eqn::GenProjOp1} and ~\eqref{eqn::GenProjOp2} hold, then
$       \map{T}_{\textup{\inp \out}}[W]\in \mathcal{S}_\textup{\out},$ for all $ W \in \mathcal{S}_\textup{\inp} $. 
Let $W\in\Scal_\inp$. Then, by definition, we have $\map{P}_\inp[W] = W$. Now, since Eq.~\eqref{eqn::GenProjOp1} holds, for all $W \in \Scal_\inp$ we have
\begin{align}
 \map{P}_\textup{\out} [\map{T}_{\textup{\inp\out}}[W]] =  \map{P}_\textup{\out} \circ \map{T}_{\textup{\inp\out}} \circ \map{P}_\textup{\inp}[W] = \map{T}_{\textup{\inp\out}} \circ \map{P}_\textup{\inp}[W] = \map{T}_{\textup{\inp\out}}[W].
\end{align}
Additionally, since $\tr[W] = \gamma_\inp$ for every $W\in\Scal_\inp$, from Eq.~\eqref{eqn::GenProjOp2} we obtain
\begin{align}
\tr\circ\map{T}_{\textup{\inp\out}}[W] =\tr\circ\map{T}_{\textup{\inp\out}}\circ\map{P}_\inp[W] = \frac{\gamma_\out}{\gamma_\inp}\tr[W] =\gamma_\out,
\end{align}
and hence, Eqs.~\eqref{eqn::GenProjOp1}  and~\eqref{eqn::GenProjOp2} together imply that  $\map{T}_{\textup{\inp \out}}[W]\in \mathcal{S}_\textup{\out},$ for all $ W \in \mathcal{S}_\textup{\inp} $. 

For the converse direction, we note that since $\gamma_\inp \neq 0$, the affine constraint has no influence on the span of $\Scal_\inp$, such that $\text{span}(\Scal_\inp) = \map{P}_\inp[\L(\H_\inp)]$. By assumption, $\map{P}_\out\circ \map{T}_{\inp\out}[W] = \map{T}_{\inp\out}[W]$ holds for all $W \in \Scal_\inp$, and by linearity, we have $\map{P}_\out\circ \map{T}_{\inp\out}[M] = \map{T}_{\inp\out}[M]$ for all $M \in \text{span}(\Scal_\inp)$. For any arbitrary $X\in \L(\H_\inp)$ we have $M:=\map{P}_\inp[X] \in \text{span}(\Scal_\inp)$, and thus 
\begin{gather}
    \map{P}_\out\circ \map{T}_{\inp\out} \circ \map{P}_\inp[X] = \map{T}_{\inp\out} \circ  \map{P}_\inp [X].
\end{gather}
Since this holds for arbitrary $X \in \L(\H_\inp)$, we see that 
Eq.~\eqref{eqn::GenProjOp1} is satisfied. Similarly, if $\map{T}_{\inp \out}$ is a map from $\Scal_\inp$ to $\Scal_\out$, by linearity, we see that $\tr[\map{T}_{\inp \out}[M]] = \gamma_\out/\gamma_\inp \tr[M]$ for all $M \in \text{span}(\Scal_\inp)$. Thus, for arbitrary $X \in \L(\H_\inp)$ we have 
\begin{gather}
\tr\circ \map{T}_{\inp \out} \circ \map{P}_\inp[X] = \frac{\gamma_\out}{\gamma_\inp} \tr \circ \map{P}_\inp[X]\, ,
\end{gather}
where, again, we have used that $\map{P}_\inp[X] \in \text{span}(\Scal_\inp)$. Since the above equation holds for all $X \in \L(\H_\inp)$, we thus recover Eq.~\eqref{eqn::GenProjOp2}, concluding the proof.
\end{proof}
We emphasise that the above Theorem covers the case where $\gamma_\out = 0$, for which we have $\map{P}_\textup{\out} \circ \map{T}_{\textup{\inp\out}} \circ \map{P}_\textup{\inp} = \map{T}_{\textup{\inp\out}} \circ \map{P}_\textup{\inp}$ and $\tr \circ \map{T}_{\textup{\inp\out}} \circ {P}_\textup{\inp} = 0$. However, in this case, imposing positivity on the elements of $\Scal_\out$ would, unlike in all cases we consider, lead to explicit further simplifications (see App.~\ref{app::gam0}). On the other hand, the scenario $\gamma_\inp =0$ is not directly covered by the above theorem and in principle requires special consideration. For this scenario, it is easy to see that $\gamma_\inp = 0$ implies $\gamma_\out = 0$. Then, one can readily define a new projector $\map{P}_\inp'$ that projects onto a vector space of traceless matrices (thus incorporating the requirement $\gamma_\inp = 0$), such that $W\in \Scal_\inp$ iff $\map{P}[W] = W$. With this, a map $\map{T}_{\inp\out}$ maps between $\Scal_\inp$ and $\Scal_\out$ if and only if $\map{P}_\textup{\out} \circ \map{T}_{\textup{\inp\out}} \circ \map{P}_\textup{\inp}' = \map{T}_{\textup{\inp\out}} \circ \map{P}_\textup{\inp}'$ and $\tr \circ \map{T}_{\textup{\inp\out}} \circ {P}_\textup{\inp}' = 0$. The details can be found in App.~\ref{app::gam0}. From now on, whenever not explicitly mentioned, we will exclude both of these scenarios to avoid unnecessary technical complications and assume $\gamma_\inp, \gamma_\out \neq 0$.

While providing necessary and sufficient conditions for quantum transformations between arbitrary quantum sets, Thm.~\ref{thm::genFormProjOp} is not particularly insightful when it comes to the structural properties of $\map{T}_{\inp\out}$ and does not easily allow for an incorporation of the properties that many projectors $\map{P}_\inp$ encountered in quantum mechanics have. In the following subsection, we provide a specialised version of Thm.~\ref{thm::genFormProjOp} in terms of the Choi state $T_{\inp\out}$ that takes commonly assumed properties of $\map{P}_\inp$ into account and will be of more direct use.

\subsection{Choi characterisation of particular quantum transformations}
\label{sec::ChoiCharSpec}
	For most of practical cases, the projectors associated to the sets of quantum objects respect additional properties which allow us to present a more specialised and useful characterisation of quantum transformations. In particular, there are three properties which the projector $\map{P}$ associated to a quantum set $\Scal\subseteq\L(\H)$ often respects
\begin{enumerate}
	\item Unitality: $\map{P}[\id]=\id$
	\item Self-adjointness\footnote{
Let $\map{P}: \L(\H_\textup{\xt}) \rightarrow \L(\H_\textup{\yt})$ be a linear operator. Its adjoint is the unique map $\map{P}^\dagger: \L(\H_\textup{\yt}) \rightarrow  \L(\H_\textup{\xt})$ respecting
$\tr[B^\dagger \widetilde P[A]] = \tr[\map{P}^\dagger[B]^\dagger A]$ for all  $A \in \L(\H_\xt)$ and all $B \in \L(\H_\yt).
$
} $\map{P}=\map{P}^\dagger$
	\item Commutation with the transposition:  $\map{P}[W^\tau]={\map{P}[W]}^\tau$, for every $ W\in\L(\H)$ 
\end{enumerate}
We notice that these three properties are respected by the projectors onto the sets of quantum states, quantum channels, superchannels, quantum combs, process matrices, and non-signalling channels, to name but a few.\footnote{
For a quantum set $\Scal \subseteq \L(\H)$ defined by a unital and self-adjoint projector, it holds that $W \in \Scal$ if and only if 
$
    W = \map{P}(X) + \frac{\id}{\tr{(\id)}}\Big(\gamma-\tr(X)\Big)
$ for some matrix $X\in\L(\H)$,   a parametrisation which may be very convenient. The particular case of this parametrisation for quantum channels was already considered in Ref.~\cite{araujo15witnessing} to obtain a projective characterisation of process matrices. }
 We now present a characterisation theorem tailored for this particular case in terms of Choi matrices, while the more general case with no extra assumptions is discussed in Sec.~\ref{sec::GenAppr}.
 
As a first step for this characterisation, we recall that the action of a map $\map{T}_{\inp\out}:\L(\H_\inp) \rightarrow \L(\H_\out)$ in terms of its Choi matrix $T_{\inp\out}:\L(\H_\inp \otimes \H_\out)$ can be written as (see Eq.~\eqref{eqn::ChoiDef}) 
 \begin{gather}
    \label{eqn::LinkIntro}
     \map{T}_{\inp\out}[X_\inp] =: \tr_\inp[(X_\inp^\tau \otimes \id_\out) T_{\inp\out}] =: T_{\inp\out} \star X_\inp , 
 \end{gather}
 where we have employed the \emph{link product} $\star$ (see Sec.~\ref{sec::LinOp} for more details) and added subscripts to emphasise what spaces the respective elements are defined on. As mentioned above, complete positivity of $\map{T}_{\inp\out}$ is equivalent to $T_{\inp\out} \geq 0$~\cite{chiribella09networks}. With this, the characterisation of the map $\map{T}_{\inp\out}$ amounts to a characterisation of the matrix $T_{\inp\out}$, which can be obtained via a projector, denoted by $\map{P}_{\inp\out}$.
This characterisation has also been given in Refs.~\cite{castro-ruiz_dynamics_2018, milz_resource_2022, simmons_higher-order_2022, hoffreumon_projective_2022} and is provided here in the notation we employ.\footnote{We remark that the characterisation presented in Ref.~\cite{castro-ruiz_dynamics_2018} {differs from ours since it} misses two terms that do not necessarily cancel out.}

\begin{theorem}[Transformation between affine sets: specialised Choi version]
\label{thm::genFormProj}
Let $\map{P}_\inp:\L(\H_\inp)\to\L(\H_\inp)$ and  $\map{P}_\out:\L(\H_\out)\to\L(\H_\out)$ be linear projective maps and $\Scal_\inp \subseteq \L(\H_\inp) $ and $ \Scal_\out \subseteq \L(\H_\out)$ be affine sets defined by
\begin{align}
\setlength{\arrayrulewidth}{1pt}
\renewcommand{\arraystretch}{1.2}
\begin{tabular}{|c|c|c|}
\hhline{|-|~|-|}
$W\in \L(\H_\inp) $ belongs to $\Scal_\inp$ iff  & \hspace*{20mm}& $W'\in \L(\H_\out) $ belongs to $\Scal_\out$ iff  \\
$\map{P}_\inp[W]=W$, &\hspace*{5mm} \Large{$\mathbf{\longrightarrow}$}  \hspace*{5mm}& $\map{P}_\out[W']=W'$, \\
$\tr[W]=\gamma_\inp$. && $\tr[W']=\gamma_\out$.  \\
\hhline{|-|~|-|}
\end{tabular}
\end{align} 
Additionally, we assume that the maps $\map{P}_\inp$ and $\map{P}_\out$ are self-adjoint and unital, and that $\map{P}_\inp$ commutes with the transposition map, \ie,
\begin{subequations}
\begin{empheq}[box=\widefbox]{alignat=3}
   &\map{P}_\inp=\map{P}_\inp^\dagger,  && \map{P}_\out=\map{P}_\out^\dagger, \\
   &\map{P}_\inp[\id]=\id,  && \map{P}_\out[\id]=\id, \\
   &\map{P}_\inp[W^\tau]=\map{P}_\inp&&[W]^\tau, \quad\forall W\in\L(\H_\inp).
\end{empheq}
\end{subequations}
For $\gamma_\inp \neq 0$, a linear map $\map{T}_{\textup{\inp \out}}:\L(\H_\textup{\inp}) \rightarrow \L(\H_\textup{\out})$ satisfies
$
        \map{T}_{\textup{\inp \out}}[W]\in \Scal_\textup{\out},$ for all $ W \in \Scal_\textup{\inp} 
$
if and only if
\begin{subequations}
\begin{empheq}[box=\fcolorbox{blu}{white}]{align}
\label{eqn::GenProj1}
    \map{P}_{\textup{\inp\out}}[T_{\textup{\inp \out}}]:=& T_{\textup{\inp \out}} - (\map{P}_\textup{\inp}\otimes\map{\id}_\out) [T_{\textup{\inp\out}}] + (\map{P}_\textup{\inp} \otimes\map{P}_\textup{\out}) [T_{\textup{\inp\out}}]  - (\map{P}_\textup{\inp}\otimes\map{\id}_\out)[{}_\textup{\out}T_{\textup{\inp\out}}] +  {}_{\textup{\inp\out}}T_{\textup{\inp\out}}=T_{\textup{\inp \out}}\, ,\\
\label{eqn::GenProj2}
 \tr [T_{\textup{\inp\out}}] =& \frac{\gamma_\out}{\gamma_\inp} d_\textup{\inp}\,,
\end{empheq}
\end{subequations}
holds for its Choi matrix $T_{\inp\out}$, where $\map{\id}_\out$ is the identity map, $d_\inp$ is the dimension of $\H_\inp$, and $\map{P}_{\inp\out}:\L(\H_\inp\otimes\H_\out)\to\L(\H_\inp\otimes\H_\out)$ is a self-adjoint, unital projector that commutes with the transposition. 
\end{theorem}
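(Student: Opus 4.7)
The plan is to apply Theorem~\ref{thm::genFormProjOp} and translate its two map-level conditions into the Choi-level conditions stated in \eqref{eqn::GenProj1}--\eqref{eqn::GenProj2}. The extra structural assumptions (self-adjointness, unitality, and transpose-commutation of $\map{P}_\inp$ and $\map{P}_\out$) are exactly what is needed to move projectors past partial traces and transposes when rewriting everything in terms of $T_{\inp\out}$ via $\map{T}_{\inp\out}[X] = \tr_\inp[(X^\tau \otimes \id_\out) T_{\inp\out}]$. A separate step is then needed to verify that the candidate $\map{P}_{\inp\out}$ defined implicitly by \eqref{eqn::GenProj1} is indeed a self-adjoint, unital, transposition-commuting projector.

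First I would compute the Choi matrices of the composed maps appearing in Eqs.~\eqref{eqn::GenProjOp1}--\eqref{eqn::GenProjOp2}. Self-adjointness of $\map{P}_\inp$ together with its commutation with the transpose lets one pass $\map{P}_\inp$ from the argument onto the Choi leg $\inp$, giving that the Choi of $\map{T}_{\inp\out} \circ \map{P}_\inp$ equals $(\map{P}_\inp \otimes \map{\id}_\out)[T_{\inp\out}]$. Since $\map{P}_\out$ acts on a factor disjoint from the partial trace, the Choi of $\map{P}_\out \circ \map{T}_{\inp\out}$ equals $(\map{\id}_\inp \otimes \map{P}_\out)[T_{\inp\out}]$. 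Hence Eq.~\eqref{eqn::GenProjOp1} becomes
\begin{equation*}
  (\map{P}_\inp \otimes \map{P}_\out)[T_{\inp\out}] \;=\; (\map{P}_\inp \otimes \map{\id}_\out)[T_{\inp\out}], \qquad (\star)
\end{equation*}
while unpacking Eq.~\eqref{eqn::GenProjOp2} on an arbitrary $X \in \L(\H_\inp)$ and using self-adjointness plus unitality of $\map{P}_\inp$ (in particular $\tr\circ\map{P}_\inp = \tr$) on both sides yields the equivalent Choi-level statement
\begin{equation*}
  \map{P}_\inp[\tr_\out T_{\inp\out}] \;=\; \tfrac{\gamma_\out}{\gamma_\inp}\, \id_\inp. \qquad (\diamond)
\end{equation*}

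I would next show that the pair $(\star)$--$(\diamond)$ is equivalent to \eqref{eqn::GenProj1}--\eqref{eqn::GenProj2}. Taking the trace of $(\diamond)$ immediately gives \eqref{eqn::GenProj2}. Conversely, applying $\tr_\out$ to \eqref{eqn::GenProj1} and using $\tr \circ \map{P}_\out = \tr$ collapses three of the five terms and, combined with \eqref{eqn::GenProj2}, reproduces $(\diamond)$. Finally, assuming $(\diamond)$ and \eqref{eqn::GenProj2}, the last two terms of $\map{P}_{\inp\out}$ — namely $-(\map{P}_\inp\otimes\map{\id}_\out)[{}_\out T_{\inp\out}]$ and $+{}_{\inp\out}T_{\inp\out}$ — both evaluate to $\pm(\gamma_\out/\gamma_\inp)\,\id_{\inp\out}/d_\out$ and therefore cancel, reducing \eqref{eqn::GenProj1} to exactly $(\star)$; reversing the argument gives the converse. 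This establishes the equivalence of both characterisations.

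It remains to verify the structural properties of $\map{P}_{\inp\out}$. Self-adjointness and transpose-commutation are immediate: each constituent ($\map{P}_\inp$, $\map{P}_\out$, and the trace-and-replace maps ${}_\out$ and ${}_{\inp\out}$) has both properties, and these are preserved under tensor products, compositions, and linear combinations. Unitality is a one-line check: plugging $T = \id_{\inp\out}$ into \eqref{eqn::GenProj1} and using $\map{P}_\inp[\id]=\id$, $\map{P}_\out[\id]=\id$, ${}_\out\id=\id$, ${}_{\inp\out}\id=\id$ gives $\id - \id + \id - \id + \id = \id$. The main technical obstacle is idempotency $\map{P}_{\inp\out}^2 = \map{P}_{\inp\out}$, which is a direct but tedious expansion resting on a short list of absorption identities among $\map{A} := \map{P}_\inp \otimes \map{\id}_\out$, $\map{B} := \map{P}_\inp \otimes \map{P}_\out$, and the trace-and-replace maps: $\map{A}^2=\map{A}$, $\map{B}^2=\map{B}$, $\map{A}\map{B}=\map{B}\map{A}=\map{B}$, ${}_\out\map{A} = \map{A}\,{}_\out$, ${}_\out\map{B} = \map{A}\,{}_\out$ (which crucially uses unitality of $\map{P}_\out$), and ${}_{\inp\out}\,\map{A} = {}_{\inp\out}\,\map{B} = {}_{\inp\out}\circ\map{A}\circ{}_\out = {}_{\inp\out}$ (using $\tr \circ \map{P}_\inp = \tr$ and $\tr \circ \map{P}_\out = \tr$). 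With these in hand, every cross-term in $\map{P}_{\inp\out}^2$ collapses onto one of the five base summands and careful sign-bookkeeping restores the five-term expression defining $\map{P}_{\inp\out}$; no new ideas are needed, only careful accounting.
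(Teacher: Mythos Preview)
Your proposal is correct and follows essentially the same route as the paper. The paper's argument (deferred to Sec.~\ref{sec::ProofLink}) derives precisely your intermediate conditions $(\star)$ and $(\diamond)$ via the link product (their Eqs.~\eqref{eqn::Proj1} and~\eqref{eqn::traceEq}), then combines them into the five-term projector just as you do; the only cosmetic difference is that you explicitly invoke Thm.~\ref{thm::genFormProjOp} as the starting point, whereas the paper rederives those map-level conditions directly in link-product language, and the paper handles the idempotency check slightly more tersely by noting ${}_\xt \circ \map{P}_\xt = \map{P}_\xt \circ {}_\xt = {}_\xt$ rather than listing all the absorption identities.
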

\begin{proof} 
The derivation of Eqs.~\eqref{eqn::GenProj1} and~\eqref{eqn::GenProj2}  can be found in Sec.~\ref{sec::ProofLink} where we discuss the link product and the relevant mathematical tools to easily and systematically deal with Choi matrices of general linear transformations. Here, we show the remaining properties of the projector $\map{P}_{\inp\out}$, i.e., that it is self-adjoint, unital, commutes with the transposition and is, indeed a projector. To see this latter property, first note that a self-adjoint, unital projector $\map{P}_\xt$ is trace-preserving, since $\tr[\map{P}_\xt[M]] = \tr[\map{P}_\xt[\mathbbm{1}_\xt] M] =  \tr[M]$ for all $M \in \L(\H_\xt)$ and $\xt \in \{\inp, \out\}$. Consequently, ${}_\xt(\map{P}_\xt[M]) = \map{P}_\xt[{}_\xt M] = {}_\xt M$ for all $M \in \L(\H_\xt)$, and thus ${}_\xt \circ \map{P}_\xt [M] = \map{P}_\xt \circ {}_\xt[M] = {}_\xt M$. Additionally, ${}_{\xt\xt}M = {}_\xt M$, and, by assumption ${\map{P}_\xt}^2 = {\map{P}_{\xt}}$ for $\xt\in \{\inp, \out\}$. With this using Eq.~\eqref{eqn::GenProj1}, it is easy to see that 
\begin{gather}
{\map{P}_{\inp\out}}^2 = (\map{\id} - \map{P}_\inp \otimes \map{\id}_\out + \map{P}_\inp \otimes \map{P}_\out - \map{P}_\inp \otimes {}_\out + {}_\inp \otimes {}_\out)^2 = \map{P}_{\inp\out}\,
\end{gather}
holds, i.e., $\map{P}_{\inp\out}^2 = \map{P}_{\inp\out}$. Finally, 
since both ${}_\inp\sbt$ and ${}_\out\sbt$ are self-adjoint, unital and commute with the transposition, these properties also hold for $\map{P}_{\inp\out}$ whenever they hold for{$\map{P}_\inp$ and} $\map{P}_\out$. 
\end{proof}

Naturally, the above Theorem is not as general as the one for maps given in Thm.~\ref{thm::genFormProjOp}, since it requires additional properties of $\map{P}_\inp$ and $\map{P}_\out$. However, it allows for a direct characterisation of the properties of a concrete representation of linear mappings, and applies to most scenarios that are relevant in quantum mechanics (see Ex.~\ref{Ex::Gibbs} for a concrete example where these properties are not satisfied, though.). Its generalisation, which is equivalent to Thm.~\ref{thm::genFormProjOp}, can be found in Sec.~\ref{sec::GenAppr}. Also in Sec.~\ref{sec::GenAppr}, we provide a version of Thm.~\ref{thm::genFormProj} for mappings that are \textit{not} trace-rescaling, that is, we discuss transformations between \textit{linear} subspaces instead of affine subspaces, which is both of independent interest and highlights the role that the affine constraints on the sets $\Scal_\inp$ and $\Scal_\out$ play for the properties of $T_{\inp\out}$. As was the case for Thm.~\ref{thm::genFormProjOp}, the case $\gamma_\inp = 0$ is explicitly excluded in the above Theorem. It is discussed in detail in App.~\ref{app::gam0Choi}, as a special instance of the general case (i.e., where we impose no restrictions on $\map{P}_\inp$ and $\map{P}_\out$). Before discussing this general case in detail and providing the technical details for the derivation of the above Theorem, we now first show its concrete application for commonly encountered scenarios in quantum mechanics.

\section{Applications to particular quantum transformations and transformations without a fixed causal order}
\label{sec::Emergence}
We now apply Thm.~\ref{thm::genFormProj} to obtain a quantum set characterisation for several quantum transformations used in the literature. Later in this section we also discuss the simplest quantum transformation which may disrespect a standard notion of fixed causal order. 

\begin{example}[Quantum states to quantum states (Quantum Channels)]
\label{ex::QStatesQStates}
\begin{figure}[t!]
    \centering
    \includegraphics{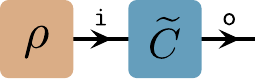}
    \caption{\textbf{Quantum Channels} A quantum channel $\map{C}$ which maps quantum states $\rho$ onto quantum states $\rho$'.} 
    \label{fig::Quantum Channel}
\end{figure}
In Sec.~\ref{sec::QstatesQchannels}, we derived the properties of quantum channels $\map{C}$ that map quantum states $\rho \in \L(\H_\inp)$ onto quantum states $\rho' \in \L(\H_\out)$. Since quantum states are unit trace, we have $\gamma_\inp = \gamma_\out = 1$, and there are no linear constraints on quantum states, such that $\map{P}_\inp = \map{\id}_\inp$ and $\map{P}_\out = \map{\id}_\out$ (i.e., the identity channel). Naturally, $\map{\id}_\xt$ is unital, self-adjoint, and commutes with the transposition, such that Thm.~\ref{thm::genFormProj} applies. Employing Eqs.~\eqref{eqn::GenProj1} and~\eqref{eqn::GenProj1}, we directly obtain (for less cluttered notation, we omit the subscripts on $C_{\inp\out}$):
\begin{align}
\setlength{\arrayrulewidth}{1pt}
\renewcommand{\arraystretch}{1.2}
\begin{tabular}{|c|c|c|}
\hhline{|-|~|-|}
\bf{(Quantum state)} & \hspace*{20mm}& \bf{(Quantum state)}   \\
$\rho\in \L(\H_\inp) $ belongs to $\Scal_\inp$ iff  & \hspace*{20mm}& $\rho'\in \L(\H_\out) $ belongs to $\Scal_\out$ iff  \\
$\rho\geq0$, &\hspace*{5mm} \Large{$\mathbf{\longrightarrow}$}  \hspace*{5mm}& $\rho'\geq0$, \\
$\map{P}_\inp[\rho]$  $:= \map{\mathbbm{1}}[\rho]=\rho$, && $\map{P}_\out[\rho']$ $:= \map{\mathbbm{1}}[\rho']=\rho'$, \\
$\tr[\rho]=1$. && $\tr[\rho']=1$. \\
\hhline{|-|~|-|}
\end{tabular}
\end{align} 
\begin{subequations}
\begin{empheq}[box=\fcolorbox{blu}{white}]{align}
 \text{\bf{(Quantum}} & \text{  \bf{channel)}}\nonumber \\
C\geq\,&0, \\
    \label{eqn::ChanExProj}
     \map{P}_{\inp\out}[C]:=\,&C - {}_\out C + {}_{\inp\out}C =C,
     \\
 \tr [C] =\,& d_\inp\,,
\end{empheq}
\end{subequations}
which coincides exactly with the properties~\eqref{eq:lin1} and~\eqref{eq:aff1} derived in Sec.~\ref{sec::QstatesQchannels}. Additionally, demanding that $\map{C}$ is completely positive then imposes $C_{\inp\out} \geq 0$, i.e., Eq.~\eqref{eq:pos1}. For ease of notation, in the subsequent examples, we denote the projector of Eq.~\eqref{eqn::ChanExProj} by $\map{P}^{(C)}$ and add subscripts whenever we want to clarify what spaces it acts on. \hfill $\blacksquare$
\end{example}
\begin{example}[Quantum channels to states]
\begin{figure}[t!]
    \centering
    \includegraphics[scale=0.8]{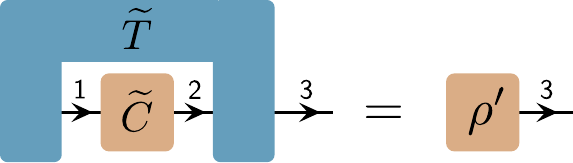}
    \caption{A quantum transformation $\map{T}$ which maps quantum channels $\map{C}$ into quantum states $\rho$.}
    \label{fig::Mmap}
\end{figure}
The next simplest transformation one could consider is a mapping from quantum channels to quantum states, i.e., transformations of the form $\map{T}[C_{12}] = \rho_3$ (see Fig.~\ref{fig::Mmap}).\footnote{Here, and in what follows, whenever there is no risk of confusion, we drop the semantic distinction between transformations and their respective Choi matrices.} Such transformations are particular types of quantum combs~\cite{chiribella07circuit_architecture, chiribella09networks}, and have been considered amongst others in the study of open quantum system dynamics with initial correlations under the name of $\mathcal{M}$-maps~\cite{modi_operational_2012}. Keeping track of the involved spaces, for this case, we have to identify $\H_\textup{\inp} \cong \H_1 \otimes \H_{2}$ and $\H_\textup{\out} \cong \H_3$. Since the resulting quantum states $\rho_3 \in \L(\H_3)$ are unit trace, while $\tr[C_{12}] = d_{1}$, we have $\gamma_\inp = 1$ and $\gamma_\out = d_1$.  Additionally, the role of $\map{P}_\textup{\inp}$ is now played by the projector $\map{P}^{(C)}_{12}$  of Eq.~\eqref{eqn::ChanExProj}, while $\map{P}_\textup{\out}$ is again given by $\map{\id}_3$ (since there are no linear restrictions on quantum states). Given that all involved projectors are self-adjoint, unital and commute with the transposition, Thm.~\ref{thm::genFormProj} applies. With this, using Eqs.~\eqref{eqn::GenProj1} and~\eqref{eqn::GenProj2}, we obtain for the Choi state $T\in \L_{123}(\H_1 \otimes \H_2\otimes \H_3)$ of the map $\widetilde T$: 
\begin{align}
\setlength{\arrayrulewidth}{1pt}
\renewcommand{\arraystretch}{1.2}
\begin{tabular}{|c|c|c|}
\hhline{|-|~|-|}
\bf{(Quantum channel)} & \hspace*{20mm}& \bf{(Quantum state)}   \\
$C\in \L(\H_1\otimes\H_2) $ belongs to $\Scal_\inp$ iff  & \hspace*{20mm}& $\rho'\in \L(\H_3) $ belongs to $\Scal_\out$ iff  \\
$C\geq0$, &\hspace*{5mm} \Large{$\mathbf{\longrightarrow}$}  \hspace*{5mm}& $\rho'\geq0$, \\
$\map{P}_\inp[C]:=C-{}_2C+{}_{12}C=C$, && $\map{P}_\out[\rho']$ $=\map{\mathbbm{1}}[\rho']$ $=\rho'$, \\
$\tr[C]=d_1$. && $\tr[\rho']=1$. \\
\hhline{|-|~|-|}
\end{tabular}
\end{align} 
\begin{subequations}
\begin{empheq}[box=\fcolorbox{blu}{white}]{align}
 \text{\bf{(Quantum ch}} & \text{\bf{annel to quantum state)}}\nonumber \\
T\geq\,&0, \\
     \map{P}_{\inp\out}[T]:=\,&T - {}_3T + {}_{23}T =T,\\
 \tr [T] =\,& d_2\,.
\end{empheq}
\end{subequations}

The above coincides with $_{3}T = {}_{23}T$ and $\tr[T] = d_2$ which, in turn, are the causality/trace constraints of a one-slot comb with a final output line~\cite{chiribella09networks} (we discuss causality constraints in more detail below). Additionally, choosing $\H_1$ to be trivial, \ie, $\H_1\cong\mathbb{C}$, we recover the characterisation of quantum channels. As before, demanding complete positivity from $\map{T}$ translates to the additional requirement $T\geq 0$. \hfill $\blacksquare$
\end{example}

\begin{example}[Quantum channels to quantum channels (Quantum Superchannels)]
\label{ex::supchann}
Let us now consider the question raised at the end of Sec.~\ref{sec::QstatesQchannels}, namely the characterisation of transformations $\map{T}[C_{23}] = C_{14}'$ that map quantum channels $C_{23} \in \L(\H_2 \otimes \H_3)$ onto quantum channels $C_{14}' \in \L(\H_1 \otimes \H_4)$ (see Fig.~\ref{fig::SuperChannel}). In this case, we identify $\H_\textup{\inp} \cong \H_2 \otimes \H_3$ and $\H_\textup{\out} \cong \H_1 \otimes \H_4$. The projectors on the input and output space of $\map{T}$ are, respectively, given by the projectors $\map{P}^{(C)}_{23}$ and $\map{P}^{(C)}_{14}$ of Eq.~\eqref{eqn::ChanExProj}, which are self-adjoint, unital, and commute with the transposition, such that Thm.~\ref{thm::genFormProj} applies. In addition, for channels, we have $\gamma_\inp = \tr[C_{23}] = d_2$ and  $\gamma_\out = \tr[C_{14}'] = d_1$. Thus, employing Eqs.~\eqref{eqn::GenProj1} and~\eqref{eqn::GenProj2}, we obtain for the properties of $T\in \L(\H_1 \otimes \H_2 \otimes\H_3 \otimes\H_4)$: 
\begin{align}
\setlength{\arrayrulewidth}{1pt}
\renewcommand{\arraystretch}{1.2}
\begin{tabular}{|c|c|c|}
\hhline{|-|~|-|}
\bf{(Quantum channel)} & \hspace*{20mm}& \bf{(Quantum channel)}   \\
$C_{}\in \L(\H_2\otimes\H_3) $ belongs to $\Scal_\inp$ iff  & \hspace*{20mm}& $C'_{}\in \L(\H_1\otimes\H_4) $  belongs to $\Scal_\out$ iff  \\
$C_{}\geq0$, &\hspace*{5mm} \Large{$\mathbf{\longrightarrow}$}  \hspace*{5mm}& $C_{}'\geq0$, \\
$\map{P}_\inp[C]:=C_{}-{}_3C_{}+{}_{23}C_{}=C_{}$, && $\map{P}_\out[C']:=C'_{}-{}_4C'_{}+{}_{14}C'_{}=C'_{}$, \\
$\tr[C_{}]=d_2$. && $\tr[C']=d_1$. \\
\hhline{|-|~|-|}
\end{tabular}
\end{align} 
\begin{subequations}\label{eqn::Superchannel}
\begin{empheq}[box=\fcolorbox{blu}{white}]{align}
 \text{\bf{(Quantum}} & \text{  \bf{superchannel)}}\nonumber \\
T\geq&0, \\
\label{eqn::supChanProj}
     \map{P}_{\inp\out}[T]:=&T - {}_4T   + {}_{34}T  - {}_{234}T + {}_{1234}T =T,\\
 \tr [T] =& d_1d_3\,,
\end{empheq}
\end{subequations}
It is easy to see that the above is, in addition to $\tr[T] = d_{1}d_3$, equivalent to ${}_4T = {}_{34}T$ and ${}_{234}T = {}_{1234}T$, which, in the ordering of spaces we have chosen, coincides with the causality/trace constraints of a quantum comb with one slot (corresponding to the spaces labelled by $2$ and $3$), and an initial input (labelled by $1$) and final output (labelled by $4$)~\cite{chiribella2009theoretical}. This, in turn, reflects the well-known fact that there are no causally disordered superchannels~\cite{chiribella08quantum_supermaps}. Additionally, choosing $\H_1$ to be trivial, we recover the conditions~\eqref{eqn::Superchannel} on transformations of channels to states from above. 

Finally, here, it is insightful to discuss in what way the properties of $T$ would change if the trace conditions on the elements of $\Scal_\inp$ and $\Scal_\out$ were dropped. Then, the transformation ${\map{T}}': \L(\H_2\otimes \H_3) \rightarrow \L(\H_1\otimes \H_4)$ would still have to satisfy $(\map{P}^{(C)}_{14} \circ \map{T}') [C_{23}] = \map{T}' [C_{23}]$ for all $C_{23}= \map{P}^{(C)}_{23}[C_{23}]$, but it is not necessarily trace-rescaling. The corresponding characterisation for this case will be given in Thm.~\ref{thm::LinSpaceProj}. Using Eq.~\eqref{eqn::LinProj1} from Thm.~\ref{thm::LinSpaceProj} (see Sec.~\ref{sec::ProofLink}), one obtains
\begin{gather}
\begin{split}
 T' &= T' - \map{P}^{(C)}_{23}[T'] + (\map{P}^{(C)}_{23} \otimes \map{P}^{(C)}_{14})[T'] \\
 &= T' -{}_4T' + {}_{14}T' + {}_{34}T' - {}_{134}T' - {}_{234}T' + {}_{1234}T',
 \end{split}
\end{gather}
with no additional restriction on the trace of $T'$. Even setting aside the absence of an additional trace constraint on $T'$, the above Equation is significantly different from Eq.~\eqref{eqn::supChanProj}, underlining the importance of the affine constraints on $\Scal_\inp$ and $\Scal_\out$ for the properties of the transformations between them.\hfill $\blacksquare$
\begin{figure}[ht]
    \centering
    \includegraphics[scale=0.8]{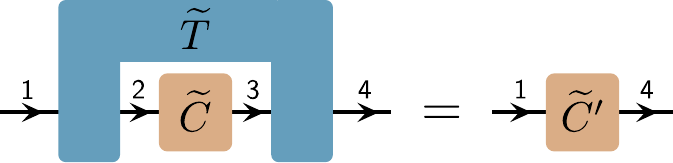}
    \caption{\textbf{Superchannels.} A superchannel $\map{T}$ maps a CPTP map $\map{C}$ onto a CPTP map $\map{C}'$.}
    \label{fig::SuperChannel}
\end{figure}
\FloatBarrier
\end{example}

\begin{example}[Quantum superchannels to quantum channels (Two-slot combs)]
Let us now consider the question of transforming superchannels into channels (see Fig.~\ref{fig::2slot_comb}), such mathematical objects are known in the literature as two-slot quantum combs~\cite{chiribella07circuit_architecture,chiribella08quantum_supermaps,chiribella09networks}, two-round quantum strategies~\cite{gutoski07quantum_games}, and two-slot sequential superchannels~\cite{quintino22deterministic}. In this case, we identify $\H_\textup{\inp} \cong \H_2 \otimes \H_3\otimes \H_4\otimes \H_5$ and $\H_\textup{\out} \cong \H_1 \otimes \H_6$.  Thus, employing the projectors from the previous examples, we obtain for the properties of $T\in \L(\H_1 \otimes \H_2 \otimes\H_3 \otimes\H_4\otimes\H_5\otimes\H_6)$: 
\begin{align}
\setlength{\arrayrulewidth}{1pt}
\renewcommand{\arraystretch}{1.2}
\begin{tabular}{|c|c|c|}
\hhline{|-|~|-|}
\bf{(Quantum superchannel)} & \hspace*{-3mm}& \bf{(Quantum channel)}   \\
$C_{}\in \L(\H_2\otimes\H_3\otimes\H_4\otimes\H_5) $  & & $C'_{}\in \L(\H_1\otimes\H_6) $   \\
belongs to $\Scal_\inp$ iff &\hspace*{-1mm} \Large{$\mathbf{\longrightarrow}$}  \hspace*{-1mm}&  belongs to $\Scal_\out$ iff \\
$C_{}\geq0$, && $C_{}'\geq0$, \\
$\map{P}_\inp[C]:=C - {}_5C   + {}_{45}C  - {}_{345}C + {}_{2345}C = C$, && $\map{P}_\out[C']:=C'_{}-{}_6C'_{}+{}_{16}C'_{}=C'_{}$, \\
$\tr[C_{}]=d_2d_4$. && $\tr[C']=d_1$. \\
\hhline{|-|~|-|}
\end{tabular}
\end{align} 
\begin{subequations}\label{eqn::2slot_comb}
\begin{empheq}[box=\fcolorbox{blu}{white}]{align}
 \text{\bf{(Quantum}} & \text{  \bf{two-slot comb)}}\nonumber \\
T\geq&0, \\
\label{eqn::Two_comb_Proj}
     \map{P}_{\inp\out}[T]:=&T - {}_6T   + {}_{56}T  - {}_{456}T + {}_{3456}T - {}_{23456}T + {}_{123456}T =T,\\
 \tr [T] =& d_1d_3d_5.
\end{empheq}
\end{subequations}

\begin{figure}[ht]
    \centering
    \includegraphics[width=0.65\linewidth]{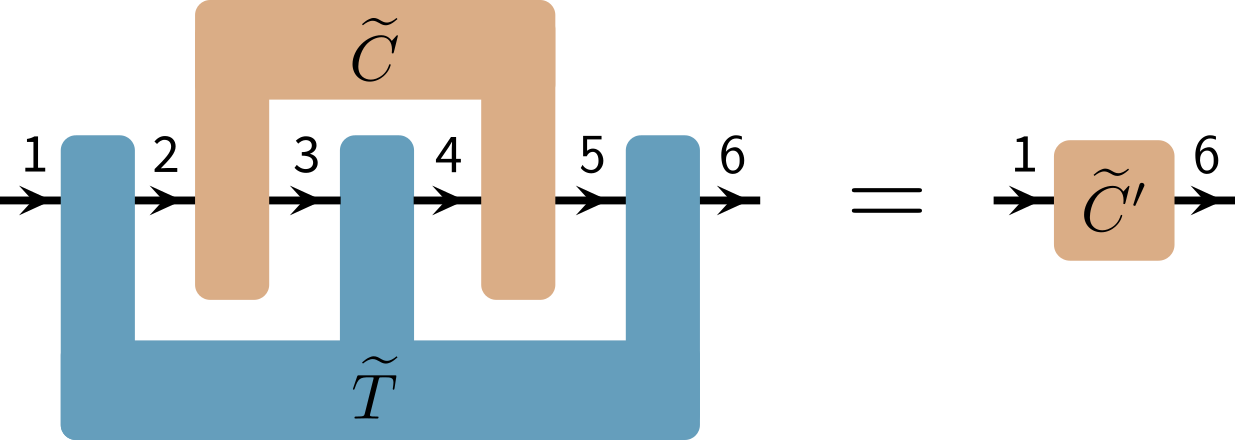}
    \caption{\textbf{Two-slot combs.} A two-slot comb $\map{T}$ maps a superchannel $\map{C}$ onto a CPTP map $\map{C}'$.}
    \label{fig::2slot_comb}
\end{figure}
\FloatBarrier

Two-slots quantum combs are known to be causally ordered since they may always be decomposed as a quantum circuit~\cite{chiribella07circuit_architecture,chiribella08quantum_supermaps,chiribella09networks,gutoski07quantum_games}, that is, as an ordered sequence of quantum channels, as illustrated in Fig.~\ref{fig::2slot_comb_ordered}. As discussed in the next example, this causally ordered property also holds for any $(k+1)$-slot quantum comb, which is a quantum transformation from a $k$-slot comb into a quantum channel~\cite{chiribella09networks}. \hfill $\blacksquare$
\begin{figure}[ht]
    \centering
    \includegraphics[width=0.9\linewidth]{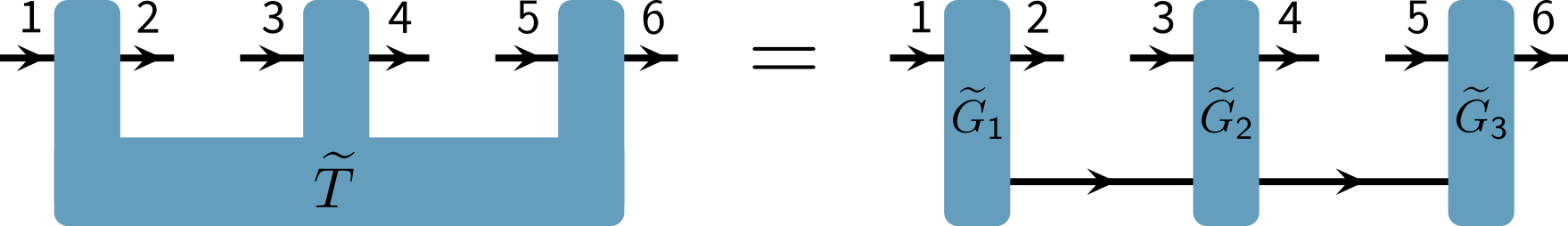}
    \caption{\textbf{Two-slot combs.} A two-slot comb $\map{T}$ can always be decomposed as a sequence of CPTP maps $\{\map{G}_1, \map{G}_2, \map{G}_3\}$.}
    \label{fig::2slot_comb_ordered}
\end{figure}
\end{example}

\begin{example}[Quantum $(k-1)$-slot combs to quantum channels ($k$-slot combs)]
Let us now consider the question of transforming $(k-1)$-slot combs into channels (see Fig.~\ref{fig::kslot_comb}), such mathematical objects are known in the literature as $k$-slot quantum combs~\cite{chiribella07circuit_architecture,chiribella08quantum_supermaps,chiribella09networks}, $k$-round quantum strategies~\cite{gutoski07quantum_games}, and $k$-slot sequential superchannels~\cite{quintino22deterministic}.
\begin{definition}
A $0$-slot quantum comb is a quantum channel. For an arbitrary $k\in\mathbb{N}$, a $k$-slot quantum comb is recursively defined as a quantum transformation which maps $(k-1)$-quantum combs into quantum channels.
\end{definition}
In this case, we identify $\H_\textup{\inp} \cong \H_2 \otimes \H_3\otimes \ldots \H_{2{k}+1}$ and $\H_\textup{\out} \cong \H_1 \otimes \H_{{2k+2}}$.  Thus, employing the results from previous examples, we obtain for the properties of $T\in \L(\H_1 \otimes \H_2 \ldots \otimes\H_{2k+2})$: 
\begin{align}
\setlength{\arrayrulewidth}{1pt}
\renewcommand{\arraystretch}{1.2}
\begin{tabular}{|c|c|c|}
\hhline{|-|~|-|}
\bf{(($k-1$)-slot comb)} && \bf{(Quantum channel)}   \\
$C_{}\in \L(\H_2\otimes\H_3\otimes \ldots \otimes\H_{2k+1}) $  & & $C'_{}\in \L(\H_1\otimes\H_{2k+2}) $   \\
 belongs to $\Scal_\inp$ iff &\hspace*{2mm}\Large{$\mathbf{\longrightarrow}$} \hspace*{2mm}&  belongs to $\Scal_\out$ iff  \\
$C_{}\geq0$, && $C_{}'\geq0$, \\
$\map{P}_\inp[C]:=C - {}_{(2k+1)}C   + {}_{(2k)(2k+1)}C  -\ldots$  && $\map{P}_\out[C']:=C'_{}-{}_{(2k+2)}C'_{}$ \\
\phantom{asdfffffff}$- {}_{3\ldots(2k+1)}C + {}_{2\ldots(2k+1)}C = C,$  && \phantom{asdfffffffffffffff}$+{}_{1(2k+2)}C'_{}=C'_{}$,\\
$\tr[C_{}]=d_2d_4\ldots d_{2k}$. && $\tr[C']=d_1$.\\
\hhline{|-|~|-|}
\end{tabular}
\end{align} 
\begin{subequations}\label{eqn::kslot_comb}
\begin{empheq}[box=\fcolorbox{blu}{white}]{align}
 \text{\bf{(Quantum}} & \text{  $\mathbf{k}$\bf{-slot comb)}}\nonumber \\
T\geq&0 \\ \nonumber
\label{eqn::k_slot_comb}
    \map{P}_{\inp\out}[T]:=&T - {}_{(2k+2)}T   + {}_{(2k+1)(2k+2)}T  - {}_{(2k)(2k+1)(2k+2)}T + \phantom{f} \\
    &\ldots - {}_{23\ldots (2k+2)}T + {}_{12\ldots(2k+2)} T =T, \nonumber \\
 \tr [T] =& d_1d_3d_5\ldots d_{(2k+1)} \nonumber\,,
\end{empheq}
\end{subequations}

\begin{figure}[ht]
    \centering
    \includegraphics[width=0.95\linewidth]{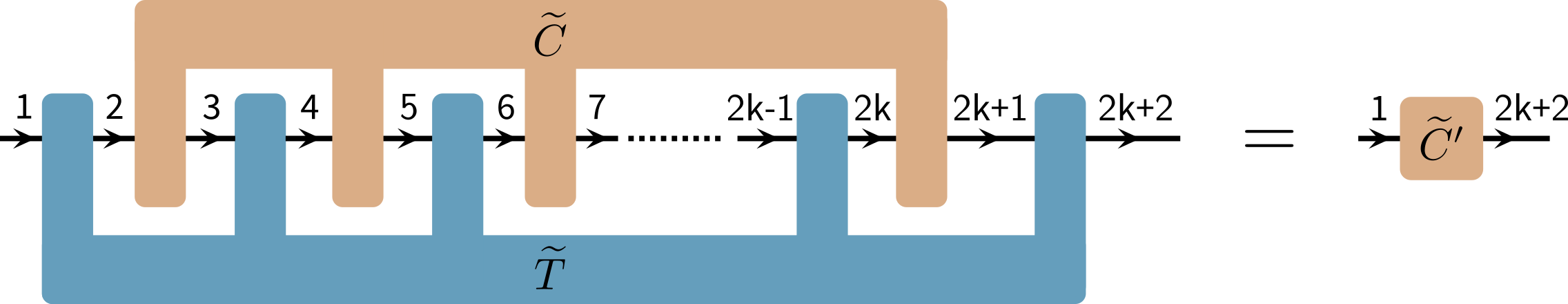}
    \caption{\textbf{$k$-slot combs.} A $k$-slot comb $\map{T}$ maps a $(k-1)$-slot comb $\map{C}$ onto a CPTP map $\map{C}'$.}
    \label{fig::kslot_comb}
\end{figure}
\FloatBarrier

All $k$-slots quantum combs may be decomposed as a quantum circuit~\cite{chiribella07circuit_architecture,chiribella08quantum_supermaps,chiribella09networks,gutoski07quantum_games}, that is, as an ordered sequence of quantum channels, as illustrated in Fig.~\ref{fig::2slot_comb_ordered}.\hfill $\blacksquare$
\begin{figure}[ht]
    \centering
    \includegraphics[width=0.99\linewidth]{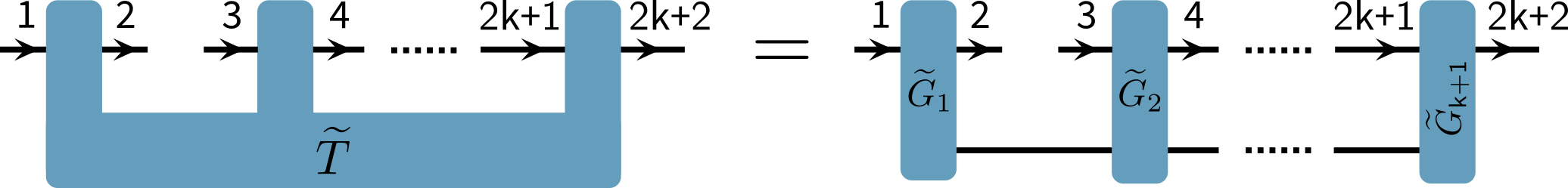}
    \caption{\textbf{$k$-slot combs.} A $k$-slot comb can always be decomposed as an ordered sequence of quantum channels.}
    \label{fig::kslot_comb_ordered}
\end{figure}
\end{example}

\subsection{Transformations without a fixed causal order}

All quantum transformations addressed in previous could be decomposed in terms of an ordered quantum circuit. For this reason, they are referred to as objects with a fixed causal order.
As proven in Ref.~\cite{chiribella09networks}, a quantum transformation admits a decomposition as a fixed ordered quantum circuit if it is a quantum $k$-slot comb. This motivates the definition of transformations with a fixed order.
\begin{definition}
A quantum transformation $\map{T}$ has a fixed causal order if it can be written as a $k$-slot comb for some well-suited dimensions, where we may also set some of the dimensions to be one.
\end{definition}
While the concept of fixed causal order is well-established, there are non-equivalent definitions for which transformations have a \textit{definite} causal order. One of the first articles to define definite causal order considered the scenario of process matrices which transform a pair of independent channels to unit probability~\cite{oreshkov11} (this case is discussed in the next example). There, a quantum transformation does not have a definite causal order if it cannot be written as a convex combination of quantum transformations with fixed order, a definition which is well-established for this `bipartite' process matrix scenario. When more general quantum transformations are considered, e.g., transformations of a pair of quantum channels into a channel, different arguments led to alternative notions of indefinite causal order ~\cite{oreschkov2016definition,Wechs2019MultipartiteCausality}. Additionally, when more than two parties are involved, the concept of classical dynamical control of causal order plays a non-trivial role~\cite{wechs21control}. Here, we will stay clear of these (important) subtleties in the definition of causal definiteness, and only examine if the obtained transformations can lie outside the set of \textit{fixed} causal order processes, i.e., outside the set of quantum combs.

\begin{example}[Non-signalling channels to unit probability (Process Matrix)]
As a pertinent example, let us consider the well-studied case of (bipartite) process matrices~\cite{oreshkov11, araujo15witnessing}, i.e., the set of transformations $\map{T}$ that map pairs of CPTP maps to unit probability (this, in turn, implies that they constitute the dual affine set of the set of (bipartite) non-signalling CPTP maps, see Sec.~\ref{sec::DualAff}). Specifically, let $R_{12}$ and $N_{34}$ be the Choi states of CPTP maps $\map{R}: \L(\H_1) \rightarrow  \L(\H_2)$ and $\map{N}: \L(\H_3) \rightarrow  \L(\H_4)$. Then, the set of process matrices is given by all linear maps $\map{T}: \L(\H_1 \otimes \H_2\otimes \H_3 \otimes \H_4) \rightarrow \mathbbm{C}$ such that $\map{T}[R_{12} \otimes N_{34}] = 1$ for all CPTP maps $R_{12}, N_{34}$. In this case, the input space is given by $\H_\textup{\inp} = \H_1\otimes \H_2 \otimes \H_3 \otimes \H_4$, while $\H_\textup{\out} = \mathbbm{C}$. The corresponding projectors simply follow from the previous examples as $\map{P}_\textup{\inp} = \map{P}^{(C)}_{12} \otimes \map{P}^{(C)}_{34}$ and $\map{P}_\textup{\out} = \map{\id}$ (these are, again, self-adjoint and unital projectors that commute with the transpose, such that Thm.~\ref{thm::genFormProj} can be applied). More explicitly, we have
\begin{align}
\map{P}_\inp[M]=& \map{P}^{(C)}_{12} \otimes \map{P}^{(C)}_{34}[M] \\
=& \map{P}^{(C)}_{12}[M-{}_4M+{}_{34}M] \\
=& (M-{}_4M+{}_{34}M) -{}_2(M-{}_4M+{}_{34}M) +{}_{12}(M-{}_4M+{}_{34}M)  \\
=&M-{}_4M+{}_{34}M -{}_2M +{}_{24}M -{}_{234}M +{}_{12}M -{}_{124}M +{}_{1234}M.
\end{align}
 With this, we can employ Eqs.~\eqref{eqn::GenProj1} and~\eqref{eqn::GenProj2} to obtain the properties of process matrices $T\in \L(\H_1\otimes \H_2 \otimes \H_3 \otimes \H_4)$ that send pairs of CPTP maps to unit probability:
\begin{align}
\hspace*{-12mm}
\setlength{\arrayrulewidth}{1pt}
\renewcommand{\arraystretch}{1.2}
\hspace{0.6cm}\begin{tabular}{|c|c|c|}
\hhline{|-|~|-|}
\bf{(Non-signalling channel)} & \hspace*{6mm}& \bf{(Complex number)}\\
$M_{}\in \L(\H_1\otimes\H_2\otimes\H_3\otimes\H_4) $ belongs to $\Scal_\inp$ iff  & \hspace*{6mm}& $c\in \L(\mathbb{C}) $  belongs to $\Scal_\out$ iff  \\
$M_{}\geq0$, && $c\geq0$, \\
$\map{P}_\inp[M]:=M-{}_2M  - {}_4M + {}_{24}M + {}_{34}M- $ &\hspace*{2mm} \Large{$\mathbf{\longrightarrow}$}  \hspace*{2mm}& $\map{P}_\out[c]:=c$, \\
$\phantom{\map{P}_\inp[M]}{}_{234}M + {}_{12}M - {}_{124}M+{}_{1234}M =M_{}$, && $\tr[c]=1$.\\
$\tr[M_{}]=d_1d_3$. &&  \\
\hhline{|-|~|-|}
\end{tabular}
\end{align} 
\begin{subequations}\label{eqn::ProcMat}
\begin{empheq}[box=\fcolorbox{blu}{white}]{align}
 \text{\bf{(Process}} & \text{  \bf{matrix)}}\nonumber \\
T\geq\,&0 ,\\
     \map{P}_{\inp\out}[T]:=\,& {}_2T  + {}_4T - {}_{24}T - {}_{34}T   + {}_{234}T - {}_{12}T + {}_{124}T =T,\\
 \tr [T] =\,& d_2d_4\,.
\end{empheq}
\end{subequations}
The above properties of $T$ exactly coincide with the characterisation of process matrices given in Ref.~\cite{araujo15witnessing}. \hfill $\blacksquare$
\label{ex::ProcMat}
\begin{figure}[t!]
    \centering
    \subfigure[~]
    {
    \includegraphics[scale=0.65]{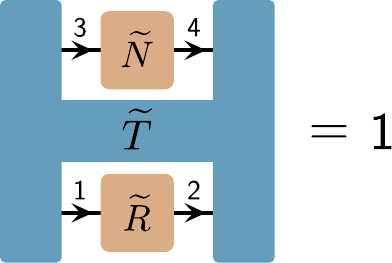}
    \label{fig::ProcMat1}} \hspace{2cm}
    \subfigure[~]
    {
    \includegraphics[scale=0.65]{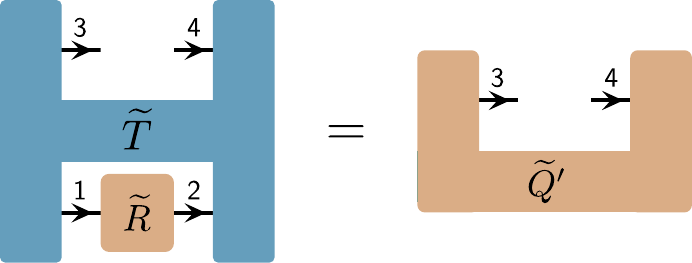}
    \label{fig::ProcMat2}} 
    \caption{(a) \textbf{Process Matrix.} As a mapping from two independent channels $\map{R}$ and $\map{N}$ to the number $1$ (b) \textbf{Process Matrix.} As a mapping from a channel $\map{R}$ to a one-slot superchannel $\map{Q}'$ without past and future.}
    \label{fig::ProcMa}
\end{figure}
\end{example}
In particular this latter result is of interest, since the set of process matrices can be considered the dual affine set of the set of all tensor products of CPTP maps, where the dual affine of a set are all operators that map the elements of the set to $1$.\footnote{More generally, any process matrix $T$ will map any matrix of the form $\sum_i \lambda_i M^{(i)} \otimes N^{(i)}$, where $\sum_i \lambda_i = 1$, and $M^{(i)}, N^{(i)}$ CPTP to $1$. Since the set of all valid CPTP maps that can be decomposed in this way is the set of non-signalling maps~\cite{Gutoski09, chiribella09_switch}, process matrices form exactly the dual set of non-signalling maps. We will further investigate this connection in Sec.~\ref{sec::DualAff}.} Such dual affine sets play an important role in quantum mechanics (and more generally, linear algebra), and evidently, the projectors we introduced can be used to characterise them comprehensively. Below, we will analyse the characterisation of dual sets (affine or not) in more detail. Before doing so, we provide an alternative characterisation of process matrices. 

\begin{example}[Quantum channels to superchannels without initial input and final output (Process matrices revisited)]
As a penultimate example, let us consider process matrices from a different perspective. Interestingly, they can be characterised in an alternative (yet equivalent) way, namely as mappings that map quantum channels (say, $\map{R}: \L(\H_1) \rightarrow \L(\H_2)$) to one-slot combs with no initial input and final output, i.e., superchannels with trivial input and output space (see Fig.~\ref{fig::ProcMat2}). Concretely, this requirement reads $\map{T}[R] = {\map{Q}}'$, where ${\map{Q}}'$ is a one-slot comb whenever $\map{R}$ is a CPTP map. Since such one-slot combs are special cases of superchannels, they are characterised by the projector of Eq.~\eqref{eqn::supChanProj} and they have a fixed causal order, that is, they can be represented by a causally ordered quantum circuit. This latter fact, in turn, chimes nicely with the intuitive definition of process matrices as mappings that obey local but not necessarily global causality~\cite{oreshkov11}; considering the slot corresponding to $\H_1$ and $\H_2$ as Alice's laboratory, independent of what deterministic operation (i.e., CPTP map) she performs locally, Bob (i.e., the slot corresponding to $\H_3$ and $\H_4$) will always encounter a causally ordered scenario (given by the one-slot comb $\map{Q}'$). Naturally, one would obtain the same definition of process matrices with the roles of Alice and Bob reversed. 

Let us now show that this alternative definition of process matrices indeed leads to the same characterisation as the one provided in the previous Example. First, since $R_{12} \in \L(\H_1\otimes \H_2)$ and $Q'_{34} \in \L(\H_3\otimes \H_4)$, we identify $\H_\textup{\inp} \cong \L(\H_1\otimes \H_2)$ and $\H_\textup{\out} \cong \L(\H_3\otimes \H_4)$. The projector $\map{P}_\textup{\inp}$ is given by the projector $\map{P}^{(C)}_{12}$ on the space of channels, while $\map{P}_\textup{\out}$ follows directly from the projector onto the set of superchannels provided in Eq.~\eqref{eqn::supChanProj} by setting $\H_1 \cong \H_4 \cong \mathbbm{C}$, such that $\map{P}_\textup{\out}[T] = {}_4T$. Additionally, we have that $\gamma_\inp = \tr[R] = d_2$, while $\gamma_\out = \tr [Q'] = d_3$ (see Eq.~\eqref{eqn::Superchannel}). Since all involved projectors are self-adjoint, unital, and commute with the transposition, Thm.~\ref{thm::genFormProj} applies, and we obtain the characterisation of T as
\begin{align}
\setlength{\arrayrulewidth}{1pt}
\renewcommand{\arraystretch}{1.2}
\hspace{-0.3cm}\begin{tabular}{|c|c|c|}
\hhline{|-|~|-|}
\bf{(Quantum channel)} & \hspace{0mm}& \bf{(Superchannel without past and future)}   \\
$R_{}\in \L(\H_1\otimes\H_2) $ belongs to $\Scal_\inp$ iff  & \hspace*{10mm}& $Q'_{}\in \L(\H_3\otimes\H_4) $  belongs to $\Scal_\out$ iff  \\
$R_{}\geq0$, & \Large{$\mathbf{\longrightarrow}$} & $Q_{}'\geq0$, \\
$\map{P}_\inp[R]:=R_{}-{}_2R_{}+{}_{12}R_{}=R_{}$, && $\map{P}_\out[Q']:= {}_4Q'_{}=Q'_{}$, \\
$\tr[R_{}]=d_1$. && $\tr[Q']=d_3$. \\
\hhline{|-|~|-|}
\end{tabular}
\end{align} 
\begin{subequations}\label{eqn::ProcMat2}
\begin{empheq}[box=\fcolorbox{blu}{white}]{align}
 \text{\bf{(Process}} & \text{  \bf{matrix revisited)}}\nonumber \\
T\geq\, &0,\\
\label{eqn::ProcMatProjector}
     \map{P}_{\inp\out}[T]:=\, & {}_2T  + {}_4T - {}_{24}T - {}_{34}T   + {}_{234}T - {}_{12}T + {}_{124}T =T,\\
 \tr [T] =\, & d_2d_4\,,
\end{empheq}
\end{subequations}
which coincides exactly with the characterisation of process matrices given in Eq.~\eqref{eqn::ProcMat}. Besides yielding an equivalent characterisation of process matrices, the above derivation also sheds an interesting light on the emergence of causal indefiniteness; graphically, a mapping from CPTP maps to one-slot combs is very similar to a mapping from CPTP maps to CPTP maps (i.e., superchannels), with the only difference that the incoming and outgoing wires of the former case are inverted with respect to the latter (to see this, compare Figs.~\ref{fig::SuperChannel} and~\ref{fig::ProcMat2}). 
This graphical similarity notwithstanding, \textit{all} one-slot superchannels have a fixed causal order, while process matrices do not always have an ordered quantum circuit decomposition (since the projector in Eq.~\eqref{eqn::ProcMatProjector} is not onto the space of $2$-slot combs). In particular, process matrices can be causally non-separable, that is, they cannot be written as a convex combination of ordered quantum circuits, or even as a quantum circuit with classical dynamical order~\cite{wechs21control}, and may even violate causal inequalities~\cite{oreshkov11, branciard_simplest_2015}.

Finally, let us remark that the equivalence between the two characterisations of process matrices ceases to hold if the trace-rescaling property is dropped. In this case, the requirement that process matrices map non-signalling maps to $\mathbbm{C}$(i.e., the case considered in the previous Example) yields no restrictions on the corresponding map $T'$, i.e., $\map{P}_\textup{\inp \out} = \map{\id}_\textup{\inp \out}$ (as can be seen by direct insertion into Eq.~\eqref{eqn::LinProj1} in Sec.~\ref{sec::ProofLink}). On the other hand, dropping the trace-rescaling conditions on maps that map CPTP maps to the space spanned by one-slot combs (i.e., the ones considered in this Example), we obtain, using Thm.~\ref{thm::LinSpaceProj}: 
\begin{gather}
    T' = {}_2T' + {}_4T' -{}_{24}T' - {}_{12}T' + {}_{124}T',
\end{gather}
which is a non-trivial constraint on the map $T'$.\hfill $\blacksquare$
\end{example}

\begin{example}[Superchannels to superchannels]
We now discuss an a priori more involved case that features less prominently in the literature: mappings from superchannels to superchannels (see Fig.~\ref{fig::suptosup}). Above, we already derived the projector onto the space of superchannels as well as their trace (see Eqs.~\eqref{eqn::Superchannel}). 
\begin{figure}[ht!]
    \centering
    \includegraphics[scale=0.65]{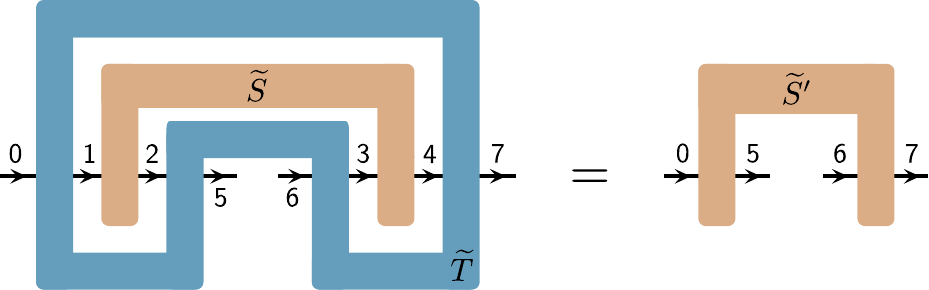}
    \caption{\textbf{Superchannel to superchannel.} The higher-order quantum map $\widetilde T$ (with Choi matrix $T \in \L(\H_0 \otimes \H_1 \otimes\H_2 \otimes\H_5 \otimes\H_6 \otimes\H_3 \otimes\H_4 \otimes \H_7)$ maps superchannels $\widetilde S$ (with Choi state $S \in \L(\H_1 \otimes \H_2 \otimes\H_3 \otimes\H_4)$) onto superchannels ${\widetilde S}'$ (with Choi state $S'\in \L(\H_0 \otimes \H_5 \otimes \H_6 \otimes \H_7$)).}
    \label{fig::suptosup}
\end{figure}
Here, proper bookkeeping of the involved spaces becomes slightly involved, but the respective properties of mappings from superchannels to superchannels can be readily deduced, using Thm.~\ref{thm::genFormProj}. Specifically, following the labelling convention of Fig.~\ref{fig::suptosup}, we set $\H_\inp := \H_1 \otimes \H_2 \otimes \H_3 \otimes \H_4$ and $\H_\out := \H_0 \otimes \H_5 \otimes \H_6 \otimes \H_7$. Consequently, for the Choi matrices $S$ ($S'$) of the input (output) superchannels we have $S \in \L(\H_\inp)$ ($S' \in \L(\H_\out)$ ), while the Choi matrix $T$ of the transformation between them acts on $\H_\inp \otimes \H_\out$. Now, using Thm.~\ref{thm::genFormProj}, we obtain
\begin{align}
\setlength{\arrayrulewidth}{1pt}
\renewcommand{\arraystretch}{1.2}
\hspace{-0.3cm}\begin{tabular}{|c|c|c|}
\hhline{|-|~|-|}
\bf{(Superchannel)} & \hspace{0mm}& \bf{(Superchannel)}   \\
$S_{}\in \L(\H_1 \otimes \H_2 \otimes \H_3 \otimes \H_4) \in \Scal_\inp$ iff  & \hspace*{10mm}& $S'_{}\in \L(\H_0 \otimes \H_5 \otimes \H_6 \otimes \H_7) \in \Scal_\out$ iff  \\
$S_{}\geq0$, & \Large{$\mathbf{\longrightarrow}$} & $S_{}'\geq0$, \\
$\map{P}_\inp[S]:=S - {}_4S   + {}_{34}S $,
 && $\map{P}_\out[S']:= S' - {}_7S'   + {}_{67}S'  $, \\
 \phantom{$\map{P}_\inp[S]:====$} $- {}_{234}S + {}_{1234}S =S$, && \phantom{$\map{P}_\inp[S]:====$}$- {}_{567}S' + {}_{0567}S' =S'$, \\
$\tr[S_{}]=d_1d_3$. && $\tr[S']=d_0d_6$. \\
\hhline{|-|~|-|} 
\end{tabular}
\end{align} 
\begin{subequations}
\begin{empheq}[box=\fcolorbox{blu}{white}]{align}
 \text{\bf{(Mapping bet}} & \text{\bf{ween superchannels)}}\nonumber \\
T\geq& \ 0, \\
     \map{P}_{\inp\out}[T]:=& \ T - {}_7T + {}_{47}T + {}_{67}T - {}_{347}T - {}_{467}T - {}_{567}T 
     + {}_{2347}T \nonumber \\ 
     \phantom{:=} &+ {}_{3467}T + {}_{4567}T - {}_{12347}T - {}_{23467}T - {}_{34567}T + {}_{123467}T, \nonumber \\
      \phantom{:=} &+ {}_{234567}T - {}_{1234567}T + {}_{01234567}T = T. \label{eqn::suptosupproj}\\
 \tr [T] =& d_0d_6d_2d_4\,.
\end{empheq}
\end{subequations}
While not a priori particularly insightful (albeit indispensable when numerically optimising over transformations of superchannels) in its own right, Eq.~\eqref{eqn::suptosupproj} allows one to directly deduce that transformations from superchannels to superchannels do not necessarily display a fixed causal order, i.e., they are not limited to quantum combs and cannot necessarily be implemented by means of a quantum circuit. In particular, counting input and output spaces, if $T$ corresponded to a supermap with a fixed causal order, it would have to satisfy the properties of a $3$-slot comb (see Fig.~\ref{fig::suptosup}). For example, for the space $\H_7$ to be the final output space of such a $3$-slot comb, $T$ would have to satisfy ${}_7T = {}_{\xt7}T$, where $\xt \in \{0,2,4,6\}$. From Eq.~\eqref{eqn::suptosupproj}, we directly see that this is not the case for any $\xt$, and, for instance, we have ${}_7T - {}_{47}T = {}_{67}T - {}_{467}T - {}_{567}T + {}_{4567}T \neq 0$ and analogously for $\xt = 0,2,6$. In a similar vein, this can be checked for the other potential final output spaces $\{1, 5, 3\}$, with the same result. Consequently, there exists  valid general maps from superchannels to superchannels which do not have a fixed causal order. 
\hfill $\blacksquare$
\end{example}

\begin{example}[Unital channels to unital channels]
As a last example, we will now consider transformations mapping unital channels to unital channels. A linear map $\map{C}:\L(\H_\inp)\to\L(\H_\out)$ is unital if it preserves the identity operator, i.e., $\map{C}(\id_\inp)=\id_\out$, in Choi picture if $C\in\L(\H_\inp\otimes\H_\out)$ is the Choi operator of the channel $\map{C}$, $\map{C}$ is unital if and only if $\tr_\inp(C)=\id_\out$. Quantum channels which are unital are also referred to it as \textit{bistochastic channels}. Transformations between them were first analysed in Ref.~\cite{chiribella20timeflip}, where the authors introduced the ``quantum time flip'', object which was explored in an experimental context in Ref.~\cite{Stromberg2024timeflip,Guo2024timeflip}.

Here, it is important to notice that, since unital maps do not span the set of all CPTP maps, the most general quantum transformation mapping unital channels into unital channels may transform quantum channels into objects which are \textit{not} quantum channels. For concreteness, we now illustrate this fact with the example discussed in Ref.~\cite{chiribella20timeflip}. Let $\H_\inp := \H_1 \otimes \H_2$ and $\H_\out := \H_0 \otimes\H_3$, and $\map{T}:\L(\H_\inp)\to\L(\H_\out)$ be defined as $\map{T}(C)=FCF^\dagger$, where $F:\H_1\to\H_2$ is the flip operator (also known as swap operator) defined as $F:=\sum_{ij}\ketbra{ij}{ji}$. We notice that, if we apply $\map{T}$ on unital channels $B\in\L(\H_\inp)$, the output $\map{T}(B)$ is a quantum channel. However, if we apply the map $\map{T}$ to non-unital channels, such as a trace and replace one $C=\id_1\otimes \ketbra{0}{0}$, we have that $\map{T}(\id_1\otimes \ketbra{0}{0}_2)=\ketbra{0}{0}_1\otimes \id_3$, which is not a quantum channel, because $\tr_3(\ketbra{0}{0}_0\otimes \id_3)\neq \id_0$. Hence, such class of transformations go beyond the process matrix formalism~\cite{oreshkov11} but can still be characterised via Thm.~\ref{thm::genFormProj}, illustrating the generality of our methods.

Similarly to the other cases, we now invoke Thm.~\ref{thm::genFormProj}, to obtain
\begin{align}
\setlength{\arrayrulewidth}{1pt}
\renewcommand{\arraystretch}{1.2}
\hspace{-0.3cm}\begin{tabular}{|c|c|c|}
\hhline{|-|~|-|}
\bf{(Unital channel)} & \hspace{0mm}& \bf{(Unital channel)}   \\
$C_{}\in \L(\H_1 \otimes \H_2) \in \Scal_\inp$ iff  & \hspace*{10mm}& $C'_{}\in \L(\H_0 \otimes \H_3) \in \Scal_\out$ iff  \\
$C_{}\geq0$, & \Large{$\mathbf{\longrightarrow}$} & $C_{}'\geq0$, \\
$\map{P}_\inp[C]:=C - {}_1C - {}_2C  + 2{}_{12}C =C$,
 && $\map{P}_\out[S']:= C'- {}_0C' - {}_3C'  + 2{}_{03}C' =C'  $, \\
$\tr[S_{}]=d_1$. && $\tr[S']=d_0$. \\
\hhline{|-|~|-|} 
\end{tabular}
\end{align} 
\begin{subequations}
\begin{empheq}[box=\fcolorbox{blu}{white}]{align}
 \text{\bf{(Mapping b}} & \text{\bf{etween unital channels)}}\nonumber \\
T\geq& \ 0, \\
     \map{P}_{\inp\out}[T]:=& \ T-_0T+_{01}T+_{02}T-2_{012}T-_3T+_{13}T+_{23}T-2_{123}T\nonumber \\ 
     \phantom{:=} &+_{03}T-_{013}T-_{023}T+3_{0123}T  =T \\
 \tr [T] =& d_0d_3\,.
\end{empheq}
\end{subequations}

When compared to the appendix ``Characterisation of the operations on bistochastic channels'' of Ref.~\cite{chiribella20timeflip}, the equations above provide an alternative (but equivalent) characterisation of transformations between unital channels 
\hfill $\blacksquare$
\end{example}

\section{`Completeness' of Quantum properties}
\label{sec::Completeness}
\subsection{Completely admissible transformations}

Up to this point, we have discussed the properties of transformations $\widetilde T_{\inp\out}$ whenever they act non-trivially on the \textit{full} input space $\L(\H_\inp)$. However, just like in the case of positivity, one may wonder if a transformation is also admissible (in a sense to be defined below) when only acting non-trivially on a part of an input object $W_{\inp\att} \in \L(\H_\inp \otimes \H_\att)$. To give a simple concrete example, a trace-preserving map $\widetilde T_{\inp \out}$ is also trace preserving when only acting non-trivially on a part of a quantum state $\rho_{\inp \att} \in \L(\H_{\inp} \otimes \H_{\att})$, i.e., $\tr[\widetilde T_{\inp \out}\otimes \widetilde{\id}_\att[\rho_{\inp \att}]] = \tr[\rho_{\inp \att}]$ for all auxiliary spaces $\L(\H_\att)$. A priori, though, it is unclear if, and even in what sense, this `completeness' holds for more involved cases, like the ones discussed in the previous section. To tackle this question, we first require the notion of an \textit{extension} of a projector:
\begin{definition}[Extension of a projector]
    \label{def::Extension}
    For a given projector $\widetilde{P}_\xtt$ on $\L(\H_\xtt)$, we call a family of projectors $\{\widetilde{P}_{\xtt\att}\}_\att$ on the respective spaces $\L(\H_{\xtt} \otimes \H_\att)$ an extension of $\widetilde{P}_\xtt$ if it satisfies $\widetilde P_{\xtt \att} \cong \widetilde{P}_\xtt$ when $\H_\att \cong \mathbbm{C}$. Additionally, whenever $\widetilde{P}_\xtt$ is self-adjoint (unital, commuting with the transposition), all projectors of the extension are assumed to be self-adjoint (unital, commuting with the transposition)
\end{definition}
 Whenever there is no risk of confusion, we also call the individual projectors $\widetilde P_{\xtt \att}$ an extension of $\widetilde P_\xtt$, with the understanding that they are an element of a whole set of projection operators. A priori, the above definition occurs somewhat void of meaning, since it does not really restrict the set of extensions $\{\widetilde{P}_{\xtt\att}\}_\att$. We emphasise that this is by design, since we aim to remain agnostic with respect to how one chooses a particular extension in a concrete physical scenario; choosing a more restrictive definition of an extension\footnote{A seemingly obvious additional restriction would be of the form $\tr_\att \circ \map{P}_{\xtt\att} \propto \map{P}_{\xtt}$. While this is the case in all the examples we provide, this assumes a particular relationship between all the spaces defined by $\{\map{P}_{\xtt\att}\}_\att$, which we do not fundamentally require for our proofs/constructions.} might  exclude many cases of interest. Since this level of generality does not pose any additional technical issues, we thus postpone more explicit definitions of the extensions to the examples below (see Exs.~\ref{ex::supchann_comp} and~\ref{ex::ProcMatComp}). 
 
 For the moment, as a concrete simple example, one can consider the set of quantum states on $\L(\H_\xtt) \equiv \L(\H_1)$ defined by the projector $\widetilde P_\xtt = \widetilde{\id}_1$. The natural extension of this projector to projectors on the spaces $\L(\H_1 \otimes \H_\att)$ for arbitrary auxiliary spaces $\L(\H_\att)$ is the set $\{\widetilde{\id}_{1} \otimes \widetilde{\id}_{\att}\}_\att$. 
 
 Analogously, considering the set of CPTP maps $\widetilde T: \L(\H_1) \rightarrow \L(\H_2)$ defined by the projector $\widetilde P_{\xtt}[T] = T - {}_2T + {}_{12}T$, with $\H_{\xtt} \equiv \H_1 \otimes \H_2$, we obtain the `natural' extension to projectors onto the sets of CPTP maps $\widetilde T: \L(\H_1 \otimes \H_{\att_1}) \rightarrow \L(\H_2 \otimes \H_{\att_2})$ acting on an extended space as 
 \begin{gather}
     \{\widetilde P_{\xtt\att}|\widetilde P_{\xtt\att}[X] = X - {}_{2\att_2}X + {}_{1\att_1 2\att_2}X, \quad \forall X \in \L(\H_\xtt \otimes \H_{\att})\}_\att,
 \end{gather}
where $\H_\att \equiv \H_{\att_1} \otimes \H_{\att_2}$. 
 
Intuitively, $\widetilde P_{\xtt \att}$ can be considered as the `version of' $\widetilde P_\xtt$ on a larger space. However, the concrete choice of extension is not always unique and can depend on the physical situation that is considered (see Ex.~\ref{ex::ProcMatComp} below). In the literature, the question of completeness has already 
appeared under the name of `completely admissible', and it was shown that different well-motivated extensions lead to different sets of completely admissible transformations~\cite{milz_resource_2022}. While considering more restricted scenarios, the question of completely admissible was addressed for bipartite process matrices with process and future applied on local bipartite channels~\cite{araujo16purification} and for performing measurements on quantum channels~\cite{burniston_necessary_2020}.
Additionally, different ways to extend and compose quantum systems were introduced, as so-called non-signalling extensions $\widetilde P_\xtt \rightarrow \widetilde P_{\xtt} \otimes \widetilde P_{\att}$ for some projectors $\widetilde P_\att$~\cite{hoffreumon_projective_2022}, the 'prec' $\prec$ extension \cite{hoffreumon_projective_2022}, as well as full signalling 'par' $\rotatebox[origin=c]{180}{\&}$ extension 
\cite{simmons_higher-order_2022} and others~\cite{simmons_higher-order_2022}. While these mentioned examples fit into Def.~\ref{def::Extension}, here, for the moment, we shall not assume an explicit functional form of $\widetilde P_{\xtt \att}$ with respect to $\widetilde P_{\xtt}$ and rather leave their explicit structure a priori unspecified. With this, we are now in a position to define the notion of 'completely admissible' with respect to an extension and generalise the definition first presented at Ref.~\cite{milz_resource_2022}:
\begin{definition}[Completely admissible]
\label{def::CompStrucPres}
Let $\map{P}_\inp:\L(\H_\inp)\to\L(\H_\inp)$ and  $\map{P}_\out:\L(\H_\out)\to\L(\H_\out)$ be linear projective maps with respective extensions $\{\widetilde P_{\inp\att}\}_\att$ and $\{\widetilde P_{\out\att}\}_\att$. Let $\Scal_{\inp\att} \subseteq \L(\H_{\inp\att}) $ and $ \Scal_{\out\att} \subseteq \L(\H_{\out\att})$ be sets of quantum objects defined by
\begin{align}
\setlength{\arrayrulewidth}{1pt}
\renewcommand{\arraystretch}{1.2}
\begin{tabular}{|c|c|c|}
\arrayrulecolor{brow}
\hhline{|-|~|-|}
$W_{\inp\att}\in \L(\H_{\inp\att}) $ belongs to $\Scal_{\inp\att}$ iff  & \hspace*{20mm}& $W'_{\out\att}\in \L(\H_{\out\att}) $ belongs to $\Scal_{\out\att}$ iff  \\
$W_{\inp\att}\geq0$, &\hspace*{5mm} \Large{$\mathbf{\longrightarrow}$}  \hspace*{5mm} & $W'_{\out\att}\geq0$, \\
$\map{P}_{\inp\att}[W_{\inp \att}]=W_{\inp\att}$, && $\map{P}_{\out\att}[W'_{\out\att}]=W'_{\out\att}$,\\
$\tr[W_{\inp \att}] = \gamma_{\inp\att}$. && $\tr[W'_{\out \att}] = \gamma_{\out\att}$.\\
\hhline{|-|~|-|}
\end{tabular}
\end{align} 
for all auxiliary spaces $\H_\att$. A linear map $\map{T}_{\inp\out}:\L(\H_\inp)\to\L(\H_\out)$ is completely admissible with respect to the extensions $\{\widetilde P_{\inp\att}\}_\att$ and $\{\widetilde P_{\out\att}\}_\att$ if:
\begin{subequations}
\arrayrulecolor{blu}
\begin{empheq}[box=\fcolorbox{blu}{white}]{align}
	i:& \quad \map{T}_{\inp\out} \text{ is completely positive }\\
	ii:& \quad \forall W_{\inp\att}\in\Scal_{\inp\att}, \text{ we have that } (\map{T}_{\inp\out}\otimes \map{\id}_\att)[W_{\inp\att}]\in\Scal_{\out\att}
\end{empheq}
\end{subequations}
\end{definition}
In principle, depending on the respective extensions $\{\widetilde P_{\inp\att}\}_\att$ and $\{\widetilde P_{\out\att}\}_\att$, this definition might not yield \textit{any} transformations $\widetilde T_{\inp\out}$ that satisfy it. However, for all cases we consider (and all relevant cases in the literature), this is not the case, and we thus opt for the notational simplicity the generality of the above definition provides.

Notably, Def.~\ref{def::CompStrucPres} restricts the transformations $\map{T}_{\inp\out} \otimes \map{\id}_\att$ similarly to the requirements of a quantum transformation imposed in Def.~\ref{def:quantum_object}. Consequently, we can use a similar reasoning as the one that led to Thm.~\ref{thm::genFormProj} to obtain the properties of $\map{T}_{\inp\out}\otimes \map{\id}_\att$:

\begin{lemma}[Completely admissible transformations: specialised Choi version]
\label{lem::CompStrucPresTransf}
Let $\map{P}_\inp:\L(\H_\inp)\to\L(\H_\inp)$ and  $\map{P}_\out:\L(\H_\out)\to\L(\H_\out)$ be linear projective maps with respective extensions $\{\widetilde P_{\inp\att}\}_\att$ and $\{\widetilde P_{\out\att}\}_\att$. Let $\Scal_{\inp\att} \subseteq \L(\H_{\inp\att}) $ and $ \Scal_{\out\att} \subseteq \L(\H_{\out\att})$ be sets of quantum objects defined by
\begin{align}
\setlength{\arrayrulewidth}{1pt}
\renewcommand{\arraystretch}{1.2}
\arrayrulecolor{brow}
\begin{tabular}{|c|c|c|}
\hhline{|-|~|-|}
$W_{\inp\att}\in \L(\H_{\inp\att}) $ belongs to $\Scal_{\inp\att}$ iff  & \hspace*{20mm}& $W'_{\out\att}\in \L(\H_{\out\att}) $ belongs to $\Scal_{\out\att}$ iff  \\
$\map{P}_{\inp\att}[W_{\inp\att}]=W_{\inp\att}$, &\hspace*{5mm} \Large{$\mathbf{\longrightarrow}$}  \hspace*{5mm}& $\map{P}_{\out\att}[W'_{\out\att}]=W'_{\out\att}$, \\
$\tr[W_{\inp\att}] = \gamma_{\inp\att}$. && $\tr[W'_{\out\att}] = \gamma_{\out\att}$. \\
\hhline{|-|~|-|}
\end{tabular}
\end{align} 
Additionally, we assume that all the maps $\map{P}_{\inp\att}$ and $\map{P}_{\out\att}$ are self-adjoint and unital, and that $\map{P}_{\inp\att}$ commutes with the transposition map, \ie, for all $\H_\att$ we have
\begin{subequations}
\begin{empheq}[box=\widefbox]{alignat=3}
   &\map{P}_{\inp\att}=\map{P}_{\inp\att}^\dagger,  && \map{P}_{\out\att}=\map{P}_{\out\att}^\dagger, \\
   &\map{P}_{\inp\att}[\id]=\id,  && \map{P}_{\out\att}[\id]=\id, \\
   &\map{P}_{\inp\att}[W^\tau_{\inp\att}]=\map{P}_{\inp\att}&&[W_{\inp\att}]^\tau, \quad\forall W_{\inp\att}\in\L(\H_{\inp\att}).
\end{empheq}
\end{subequations}
A linear map $\map{T}_{\textup{\inp \out}}:\L(\H_\textup{\inp}) \rightarrow \L(\H_\textup{\out})$ is completely admissible with respect to the extensions $\{\widetilde P_{\inp\att}\}_\att$ and $\{\widetilde P_{\out\att}\}_\att$ if and only if
\begin{subequations}
\begin{empheq}[box=\fcolorbox{blu}{white}]{align}
\label{eqn::CompStrucPres1}
    &(\map{P}_\textup{\inp\att}\otimes\map{\id}_{\out\att'}) [T_{\textup{\inp \out}} \otimes \Phi^{+}_{\att \att'}] = (\map{P}_\textup{\inp\att} \otimes\map{P}_{\textup{\out}\att'}) [T_{\textup{\inp \out}} \otimes \Phi^{+}_{\att \att'}],  \\
\label{eqn::CompStrucPres2}
   \text{and} \quad &(\map{P}_{\inp\att}\otimes \map{\id}_\out)[{}_{\out\att}T_{\inp\out}] = {}_{\inp\out\att}T_{\inp\out} \quad \& \quad \tr[T_{\inp\out}] = \frac{\gamma_{\out\att}}{\gamma_{\inp\att}} d_\inp
\end{empheq}
\end{subequations}
holds for its Choi matrix $T_{\inp\out}$ for all $\H_\att$, where $\H_\att \cong \H_{\att'}$, $\Phi^{+}_{\att \att'}$ is the Choi matrix\footnote{The concrete form of $\Phi^{+}_{\att \att'}$ depends on the basis that is chosen for the CJI. In the examples below, this choice will always be clear from context.} of the identity map $\map{\id}_\att$ and $\widetilde P_{\out\att'} \cong \map{P}_{\out\att}$. 
\end{lemma}
The proof of the above Lemma follows along the same lines as that of Thm.~\ref{thm::genFormProj} and can be found in App.~\ref{app::Complete}. For improved clarity, here, in contrast to Thm.~\ref{thm::genFormProj}, we have split the properties of the transformation $T_{\inp\out}$ into two parts: Eq.~\eqref{eqn::CompStrucPres2}, which stems from the trace-rescaling properties, and Eq.~\eqref{eqn::CompStrucPres1}, which concerns the remaining structural requirements. We emphasise that Eq.~\eqref{eqn::CompStrucPres2} directly implies that $\gamma_{\out\att}/\gamma_{\inp\att} = \gamma_{\out}/\gamma_{\inp}$ has to hold for all $\H_\att$ in order to allow for the existence of an admissible transformation $T_{\inp\out}$. This requirement will be fulfilled in all the examples we consider below, and we will assume it throughout.
 
While only providing the constraints on $T_{\textup{\inp \out}} \otimes \Phi^{+}_{\att \att'}$, in many relevant cases, Lem.~\ref{lem::CompStrucPresTransf} allows one to directly deduce a characterisation of $T_{\textup{\inp \out}}$ alone. To see this, we make the following Observation: 

\begin{observation}
\label{obs::Complete}
 Let $\widetilde T_{\inp\out}$ be a completely admissible transformation (with respect to the extensions $\{\widetilde P_{\inp\att}\}_\att$ and $\{\widetilde P_{\out\att}\}_\att$) with $\tr[T_{\inp\out}] = \gamma_\out/\gamma_\inp \cdot d_\inp$. For a given $\H_\att$, Eqs.~\eqref{eqn::CompStrucPres1} and~\eqref{eqn::CompStrucPres2} yield a set of linear constraints on $T_{\inp \out}$ via
 \begin{align}
     \Omega_\att[T_{\inp\out}] := (\map{P}_\textup{\inp\att}\otimes\map{\id}_{\out\att'} - \map{P}_\textup{\inp\att} \otimes\map{P}_{\textup{\out}\att'}) [T_{\textup{\inp \out}} \otimes \Phi^{+}_{\att \att'}] = 0 \quad \Leftrightarrow \quad \sum_{\alpha = 1}^{d_\att^4} R_\att^{(\alpha)} \otimes \Delta^{(\alpha)}_{\att\att'} = 0,\\
     \Xi_\att[T_{\inp\out}] := (\map{P}_{\inp\att}\otimes \map{\id}_\out)[{}_{\out\att}T_{\inp\out}] - {}_{\inp\out\att}T_{\inp\out} = 0 \quad \Leftrightarrow \quad \sum_{\beta = 1}^{d_\att^4} Q_\att^{(\beta)} \otimes \Delta^{(\beta)}_{\att\att'} = 0,
 \end{align}
 where $\{\Delta^{(\alpha)}_{\att\att'}\}_\alpha^{d_\att^4}$ is an orthonormal Hermitian basis of $\L(\H_\att \otimes \H_{\att'})$, while $R_\att^{(\alpha)} = \tr_{\att\att'}[\Delta^{(\alpha)} \Omega_{\att}[T_{\inp\out}]]$ and $Q_\att^{(\beta)} = \tr_{\att\att'}[\Delta_{\att\att'}^{(\beta)} \Xi_{\att}[T_{\inp\out}]]$. In principle, this leads to a set of $2d_\att^4$ linear  equations 
 \begin{gather}
 \label{eqn::LinEqComp}
     R_\att^{(\alpha)} = 0  \quad \text{and} \quad Q_\att^{(\beta)} = 0, \quad \forall \alpha, \beta, \H_\att,
 \end{gather} 
 that can often be phrased as restrictions on $T_{\inp\out}$ for every space $\H_\att$, implying a potentially infinite number of linear constraints (that might not be satisfiable simultaneously). As we will see in the concrete examples below, in most practical cases, the physically motivated extensions $\{\widetilde P_{\inp\att}\}_\att$ and $\{\widetilde P_{\out\att}\}_\att$ are such that a) Eq.~\eqref{eqn::LinEqComp} can indeed be read as a direct requirement on $T_{\inp\out}$, b) the ensuing restrictions on $T_{\inp\out}$ can be satisfied concurrently, and c) it suffices to compute them for an arbitrary (non-trivial) fixed extension space $\H_\att$.
\end{observation}
Let us illustrate the above considerations with some concrete examples.
\begin{example}[Superchannels revisited: Completely trace preserving maps\footnote{Although not discussed in detail, App. C of Ref.~\cite{araujo16purification} uses related, but different methods, to obtain the same results as in this example.}]
\label{ex::supchann_comp}
\begin{figure}[ht!]
    \centering
    \includegraphics[width = 0.4\linewidth]{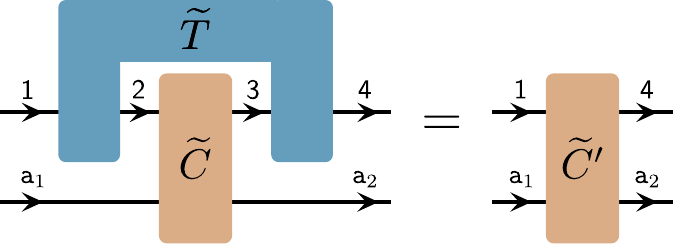}
    \caption{\textbf{Complete trace preservation.} A superchannel $\widetilde T$ should map TP maps to TP maps, even when only acting non-trivially on a part of them (here, the spaces $\H_1$ and $\H_2$). For the particular case of superchannels, this requirement does not yield any additional constraints on $\widetilde T$.}
    \label{fig::superchannel_Comp}
\end{figure}
For the case of superchannels, a natural requirement of completely admissibility is to demand that a superchannel $\widetilde T$ yields a TP map even when acting non-trivially on only a part of a TP map $\widetilde C: \L(\H_1\otimes \H_{\att_1}) \rightarrow \L(\H_2 \otimes \H_{\att_2})$ (see Fig.~\ref{fig::superchannel_Comp}). Thus, we have $\H_\inp = \H_2\otimes \H_3$, $\H_\out = \H_1\otimes \H_4$, and $\H_\att = \H_{\att_1} \otimes \H_{\att_2}$. As mentioned above, the projectors $\map{P}_{\inp\att}$ and $\map{P}_{\out\att'}$ are of the form 
\begin{gather}
    \map{P}_{\inp\att} [X]= X - {}_{3\att_2}X + {}_{2a_13\att_2}X \quad \text{and} \quad \map{P}_{\out\att'} [Y]= Y - {}_{4\att_2'}Y + {}_{1a'_14\att'_2}Y, 
\end{gather}
where $\H_{\att_1} \cong \H_{\att_1'}$  and $\H_{\att_2}  \cong \H_{\att_2'}$. The Choi state of the identity channel $\map{\id}_{\att\att'}$ is given by $\Phi^+_{\att\att'} := \Phi^+_{\att_1\att_1'} \otimes \Phi^+_{\att_2\att_2'}$, where $\Phi^+_{\att_1\att_1'}$ ($\Phi^+_{\att_2\att_2'}$) is the maximally entangled state in the computational basis of $\H_{\att_1}$ ($\H_{\att_2}$). We have $\gamma_{\inp\att} = d_2{d_{\att_1}}$ and $\gamma_{\out\att} = d_1{d_{\att_1}}$, such that $\gamma_{\out\att}/\gamma_{\inp\att} = d_1/d_2 = \gamma_\out/\gamma_\inp$ holds and we have $\tr[T] = d_1d_3$.  In addition, Eq.~\eqref{eqn::CompStrucPres1} yields 
\begin{gather}
    (-{}_4T + {}_{34}T) \otimes \Phi^+_{\att_1\att_1'} \otimes \id_{\att_2\att_2'}/d_{\att_2} + ({}_{14}T - {}_{134}T - {}_{234}T + {}_{1234}T) \otimes \id_{\att_1 \att_1'\att_2\att_2'} /(d_{\att_1}d_{\att_2}) = 0.
\end{gather}
By comparing linearly independent terms, as detailed in Obs.~\ref{obs::Complete}, we see that this implies:
\begin{gather}
    -{}_4T + {}_{34}T = 0 \quad \text{and} \quad {}_{14}T - {}_{134}T - {}_{234}T + {}_{1234}T = 0.
\end{gather}
It is easy to deduce that this is equivalent to 
\begin{gather}
\label{eqn::SupComp}
    T = T - {}_{4}T + {}_{34}T - {}_{234}T + {}_{1234}T, 
\end{gather}
which corresponds exactly to the projector onto the set of superchannels derived in Eq.~\eqref{eqn::supChanProj}. Following the same logic, from the trace-rescaling property Eq.~\eqref{eqn::CompStrucPres1} we obtain ${}_{14}T = {}_{134}T$, which is already implied by Eq.~\eqref{eqn::SupComp}.

Consequently, in this case, the trace-rescaling property does not add any additional restrictions on admissible superchannels. Overall, we thus obtain the (well-known) fact that superchannels as considered in Ex.~\ref{ex::supchann} are already completely admissible (in this case, completely TP-preserving). However, previous proofs of this fact either had to -- additionally -- demand complete positivity (in which case superchannels can be represented by a causally ordered circuit, making them completely TP preserving), or finding other bespoke ways to prove this statement. Here, using Lem.~\ref{lem::CompStrucPresTransf} and Obs.~\ref{obs::Complete}, we can derive the projector onto the set of completely admissible superchannels in a direct and systematic manner. Additionally, by treating the different requirements on completely admissible superchannels individually, we see that demanding a superchannel to be completely trace-rescaling does not yield any additional restrictions on $T_{\inp\out}$. Interestingly, this is in stark contrast to the derivation of the properties of superchannels (without the demand of complete admissibility), where the projector onto the set of superchannels would differ from Eq.~\eqref{eqn::supChanProj} if one did \textit{not} demand the trace-rescaling property.
 \hfill $\blacksquare$
 \end{example}

To see that complete admissibility can, indeed, add new restrictions on transformations, as well as depend on the choice of extension, we provide a second explicit example, this time considering process matrices.
\begin{example}[Process matrix revisited: Completely trace preserving and/or completely non-signalling preserving]
\label{ex::ProcMatComp}
\begin{figure}[ht!]
    \centering
    \subfigure[~]
    {
    \includegraphics[scale = 0.5]{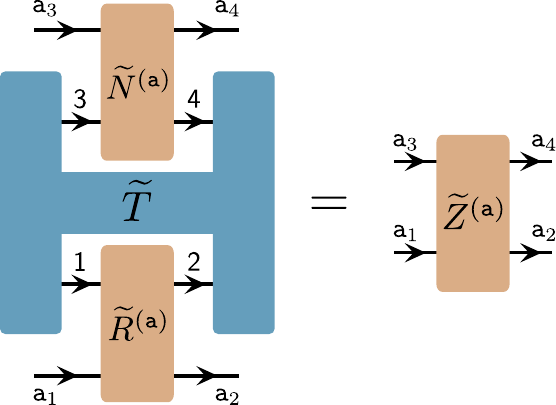}
    \label{fig::proc_mat_Comp}
    } \hspace{2cm}
    \subfigure[~]
    {
    \includegraphics[scale = 0.5]{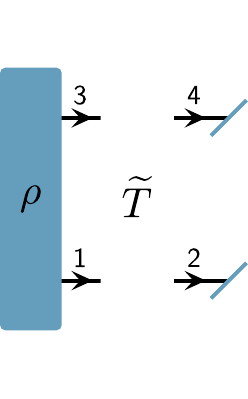}
    \label{fig::proc_mat_Comp2}
    }
    \caption{\textbf{Completely admissible process matrix.} \textbf{(a)} A (bipartite) process matrix maps pairs of channels to unit probability (with corresponding projector $\map{P}_\out = \map{\id}: \mathbbm{C} \rightarrow \mathbbm{C}$). When acting non-trivially only on a part of those channels, the resulting object is a mapping $\map{Z}^{(\att)}: \L(\H_{\att_1}\otimes \H_{\att_3}) \rightarrow \L(\H_{\att_2}\otimes \H_{\att_4})$. One possible extension $\map{P}^{(\mathrm{TP})}_{\out\att}$ follows from the requirement that $\map{Z}^{(\att)}$ is a valid channel. Another, more restrictive option for $\map{P}^{\mathrm{(NS)}}_{\out\att}$ would be the projectors onto channels $\map{Z}^{(\att)}$ that are non-signalling $\att_1 \nrightarrow \att_4$ and $\att_3 \nrightarrow \att_2$. Both are valid extensions and, depending on the physical situation that is considered can be the `correct' choice. \textbf{(b)} The only process matrices that are completely admissible with respect to $\map{P}^{\mathrm{(NS)}}_{\out\att}$ are state preparations followed by a discarding of the output system.}
\end{figure}
We have already discussed (bipartite) process matrices in Ex.~\ref{ex::ProcMat} as the set of transformations that map products of channels $\map{R}: \L(\H_1) \rightarrow \L(\H_2)$ and $\map{Q}: \L(\H_3) \rightarrow \L(\H_4)$ (and thus (bipartite) non-signalling channels\footnote{See Sec.~\ref{sec::NonSigChann} for a detailed discussion of non-signalling channels in the bi- and multi-partite case.}) to unit probability. The corresponding projector $\map{P}_\inp$ on $\L(\H_\inp) := \L(\H_1 \otimes \H_2 \otimes \H_3 \otimes \H_4)$ is given in Eq.~\eqref{eqn::ProcMat}. Its natural extension is the projector $\map{P}_{\inp\att}$ onto the set spanned by non-signalling channels $\L(\H_{\att_1} \otimes \H_1 \otimes \H_{3} \otimes \H_{\att_3}) \rightarrow \L(\H_{\att_2} \otimes \H_2 \otimes \H_{4} \otimes \H_{\att_4})$ with $1\att_1 \nrightarrow 4\att_4$ and $3\att_3 \nrightarrow 2\att_2$ (see Fig.~\ref{fig::proc_mat_Comp}). It can be obtained by the replacement $x \mapsto \att_x x$ for $x\in \{1, 2, 3, 4\}$ in Eq.~\eqref{eqn::ProcMat}, which yields
\begin{gather}
\begin{split}
    \map{P}_{\inp\att}[X]&:=X-{}_{\att_22}X  - {}_{\att_44}X + {}_{\att_22\att_44}X + {}_{\att_{3}3\att_44}X- {}_{\att_22\att_33\att_44}X \\
    &\phantom{;=}+ {}_{\att_11\att_22}X - {}_{\att_11\att_22\att_44}X+{}_{\att_11\att_22\att_33\att_44}X
\end{split}
\end{gather}
While this extension is natural (albeit not the only possible way), the extension of the output projector $\map{P}_\out := \map{\id}: \mathbbm{C} \rightarrow \mathbbm{C}$ is, a priori, not unique. Letting $\map{T}$ act non-trivially on only a part of `extended' channels $\map{R}^{(\att)} \otimes \map{Q}^{(\att)}$, with $\map{R}^{(\att)}: \L(\H_1 \otimes \H_{\att_1}) \rightarrow \L(\H_2 \otimes \H_{\att_2})$  and $\map{N}^{(\att)}: \L(\H_3 \otimes \H_{\att_3}) \rightarrow \L(\H_4 \otimes \H_{\att_4})$, yields a map $\map{Z}^{(\att)}: \L(\H_{\att_1} \otimes \H_{\att_3}) \rightarrow \L(\H_{\att_2} \otimes \H_{\att_4})$. The minimum requirement one could ask for, is that $\map{Z}^{(\att)}$ is trace-preserving, yielding the extended output projector
\begin{gather}
    \map{P}_{\out\att}^{(\mathrm{TP})}[X] = X - {}_{\att_2\att_4}X +{}_{\att_1\att_3\att_2\att_4}X.
\end{gather}
On the other hand, one might be more explicitly interested in the signalling properties of $\map{Z}^{(\att)}$, and require that it is non-signalling\footnote{For example, such a signalling constraint has been used in Ref.~\cite{milz_resource_2022} (in a slightly more general form) to define the notion of `completely admissible' transformations of process matrices.} $\att_1 \nrightarrow \att_4$ and $\att_3 \nrightarrow \att_2$, yielding the alternative projector extension
\begin{gather}
\begin{split}
    \map{P}_{\out\att}^{(\mathrm{NS})}[X]&:=X-{}_{\att_2}X  - {}_{\att_4}X + {}_{\att_2\att_4}X + {}_{\att_{3}\att_4}X- {}_{\att_2\att_3\att_4}X + {}_{\att_1\att_2}X - {}_{\att_1\att_2\att_4}X+{}_{\att_1\att_2\att_3\att_4}X,
\end{split}
\end{gather}
which is obtained from Eq.~\eqref{eqn::ProcMat} via the replacement $x \mapsto \att_x$ for $x\in \{1, 2, 3, 4\}$. Both of these choices satisfies $\gamma_{\out\att}/\gamma_{\inp\att} \cdot d_\inp = \gamma_{\out}/\gamma_{\inp} d_\inp = d_2 d_4$ as well as $\map{P}_{\out\att}^{(\mathrm{TP})} = \map{P}_{\out\att}^{(\mathrm{NS})} = \map{\id} = \map{P}_\out$ for $\H_{\att} = \mathbbm{C}$, and are thus, depending on the physical scenario one aims to investigate, valid extensions of $\map{P}_{\out}$. The resulting structural requirements on $T_{\inp \out}$ (besides $T_{\inp \out} \geq 0$ and $\tr[T_{\inp\out}] = d_2 d_4$) can, again, be computed using Lem.~\ref{lem::CompStrucPresTransf} and Obs.~\ref{obs::Complete}. 

For the case where the resulting map $\map{Z}^{(\att)}$ only needs to be trace-preserving, employing $\map{P}_{\inp\att}$ and $\map{P}_{\out\att}^{(\mathrm{NS})}$ in  Eq.~\eqref{eqn::CompStrucPres1} and comparing the linearly independent terms according to Obs.~\ref{obs::Complete} yields
\begin{gather}
\label{eqn::ProcMatCompProp}
\begin{split}
   &T = {}_4T + {}_2T - {}_{24}T, \quad {}_{34}T = {}_{234}T, \quad {}_{12}T = {}_{124}T \\
   \text{and} \quad &T = {}_2T  + {}_4T - {}_{24}T - {}_{34}T   + {}_{234}T - {}_{12}T + {}_{124}T.
\end{split}
\end{gather}
It is easy to see that the latter of these conditions implies the former three, and coincides with Eq.~\eqref{eqn::ProcMat}, which defined the projector onto the set of process matrices. Analogously, using Eq.~\eqref{eqn::CompStrucPres2} and Obs.~\ref{obs::Complete}, we see that the trace-rescaling property yields
\begin{gather}
\label{eqn::trace_rescal_proc_comp}
    T = {}_2T  + {}_4T - {}_{24}T - {}_{34}T   + {}_{234}T - {}_{12}T + {}_{124}T, 
\end{gather}
which coincides with the third condition in Eq.~\eqref{eqn::ProcMatCompProp}, and thus provides no new restrictions. Consequently, as also pointed out in Ref.~\cite{araujo16purification}, process matrices as derived in Ex.~\ref{ex::ProcMat} are also completely admissible with respect to the extensions $\widetilde P_{\inp\att}$ and $\map{P}_{\out\att}^{(\mathrm{TP})}$. 

However, this changes drastically when the extension of $\map{P}_\out$ is taken to be $\map{P}^{(\mathrm{NS})}_{\out\att}$. Now, employing Eq.~\eqref{eqn::CompStrucPres1} and Obs.~\ref{obs::Complete} yields the set of equations
\begin{gather}
\begin{split}
    &T = {}_2T, \quad T = {}_4 T, \quad T = {}_{24}T, \quad T = {}_4T-{}_{34}T + {}_{234}T, \quad T = {}_2T - {}_{12}T + {}_{124}T, \\
    \text{and} \quad &T = {}_2T + {}_4T - {}_{24}T - {}_{34}T + {}_{234}T - {}_{12}T + {}_{124}T. 
\end{split}
\end{gather}
Evidently, all of the above equations follow directly from the third one, $T = {}_{24}T$, implying that $T$ is of the form $T = \rho_{13} \otimes \id_{24}$, where $\rho_{13} \in \L(\H_1\otimes \H_3)$ is a quantum state; that is, $T$ corresponds to the situation where the two parties share a quantum state, and their respective outputs are discarded (see Fig.~\ref{fig::proc_mat_Comp2}). Since the requirements stemming from the trace-rescaling property only depend on $\map{P}_{\inp\att}$, here, they are the same as in Eq.~\eqref{eqn::trace_rescal_proc_comp} above, and thus yield no new restrictions, since Eq.~\eqref{eqn::trace_rescal_proc_comp} is already implied by $T = {}_{24}T$. Thus, process matrices $T$ that are completely admissible with respect to the extensions $\widetilde P_{\inp\att}$ and $\map{P}_{\out\att}^{(\mathrm{NS})}$ are of the form $\rho_{13} \otimes \id_{24}$, which is  a strict subset of all process matrices defined in Ex.~\ref{ex::ProcMat}. \hfill $\blacksquare$
\end{example} 
Lem.~\ref{lem::CompStrucPresTransf} and Obs.~\ref{obs::Complete} provide a systematic way to derive the requirements that follow from complete admissibility for many relevant cases. As we have seen, in some cases, complete admissibility does not impose any additional constraints, while in others it does. This raises the question, under what circumstances the extensions $\map{P}_{\inp\att}$ and $\map{P}_{\out\att}$ yield new restrictions on $T_{\inp\out}$.

\subsection{Sufficient conditions for complete admissibility}
In many cases considered in the previous Section, there is a simple relationship between the projectors $\map{P}_\inp$ and $\map{P}_\out$ and their respective extensions $\map{P}_{\inp\att}$ and $\map{P}_{\out\att}$ that allows for the direct computation of the properties of completely admissible maps $\map{T}_{\inp\out}$.  For example, in Ex.~\ref{ex::ProcMatComp}, $\map{P}_{\inp\att}$ is obtained from $\map{P}_{\inp}$ via the replacement ${}_x\sbt \mapsto {}_{\att_xx} \sbt$ for $x \in \{1,2,3,4\}$. The relationship between the projectors and their respective extensions allows one to decide whether the requirement of complete admissibility yields additional restrictions on the transformation $\map{T}_{\inp\out}$. Concretely, this question can be decided via the following two lemmata, where, for convenience, we consider the trace-rescaling property separately.\footnote{We emphasise that, as mentioned above, we always assume $\gamma_{\out\att}/\gamma_{\inp\att} = \gamma_{\out}/\gamma_{\inp}$ in what follows, since it is otherwise impossible for a transformation to be completely trace-rescaling.} 
\begin{lemma}[Sufficient condition for complete trace-rescaling]
\label{lem::CompTrRescal}
Let $\map{P}_\inp:\L(\H_\inp)\to\L(\H_\inp)$ be a linear, self-adjoint and unital projective map that commutes with the transposition, and let $\{\widetilde P_{\inp\att}\}_\att$ be its extension. If for all auxiliary spaces $\H_\att$ we have 
\begin{gather}
\label{eqn::CompTraceRescal}
    \map{P}_{\inp\att}[X_\inp \otimes \id_\att] = \map{P}_\inp[X_\inp] \otimes \id_\att \quad \forall X_\inp\in\H_\inp,
\end{gather}
then \textit{complete} trace-rescaling does not add any new restriction over `normal' trace-rescaling. 
\end{lemma}
\begin{proof}
Assuming Eq.~\eqref{eqn::CompTraceRescal} to hold, we obtain
\begin{gather}
(\map{P}_{\inp\att} \otimes \map{\id}_\out)[{}_{\out\att}T_{\inp\out}] = (\map{P}_\inp \otimes \map{\id}_{\out\att})[{}_{\inp\out\att} T_{\inp\out}].
\end{gather}
Now, if $T_{\inp\out}$ is trace-rescaling, we obtain from Eq.~\eqref{eqn::CompStrucPres2} in Lem.~\ref{lem::CompStrucPresTransf} (by setting $\H_\att \cong \mathbbm{C}$) that $\widetilde{P}_\inp[{}_\out T_{\inp\out}] = {}_{\inp\out}T_{\inp\out}$ holds. Inserting this into the above equation yields
\begin{gather}
(\map{P}_{\inp\att} \otimes \map{\id}_\out)[{}_{\out\att}T_{\inp\out}] = {}_{\inp\out\att} T_{\inp\out}, 
\end{gather}
which implies that $T_{\inp\out}$ is completely trace-rescaling.
\end{proof}
For example, the assumptions of this Lemma are satisfied for the case of superchannels (see Ex.~\ref{ex::supchann_comp}), where the extended input projector is of the form $\map{P}_{\inp\att}[X_{\inp\att}] = X_{\inp\att} - {}_{3\att_{2}}X_{\inp\att} + {}_{2\att_13\att_2}X_{\inp\att}$, which satisfies Eq.~\eqref{eqn::CompTraceRescal} for all $X_\inp \in \L(\H_2 \otimes \H_3)$. The same holds true for the projector extension $\map{P}_{\inp\att}$ of Ex.~\ref{ex::ProcMatComp}, where we considered complete admissibility for process matrices. We emphasise though, that this doe not have to hold in general. For example, when considering the set of channels that leave a fixed state $\eta_\inp \neq \id_\inp/d_\inp$ invariant (say, for instance, Gibbs-preserving channels~\cite{faist_gibbs-preserving_2015,lostaglio_introductory_2019}), the (non-unital) projector onto the input space is given by $\map{P}_\inp[X_\inp] = \tr[\eta'_\inp X_\inp]\eta'_\inp$, where $\eta_\inp' := \eta_\inp/\tr[\eta_\inp^2]$ (see also Ex.~\ref{Ex::Gibbs}). A possible extension of this projector would be $\map{P}_{\inp\att}[X_{\inp\att}] = \tr[(\eta'_\inp \otimes \xi'_\att)X_{\inp\out}](\eta'_\inp \otimes \xi_{\att}')$, where $\xi'_\att = \xi_\att/\tr[\xi_\att^2]$ and $\xi_\att \neq \id_\att/d_\att$ is some quantum state on $\H_\att$. Evidently, this projector extension does \textit{not} satisfy Eq.~\eqref{eqn::CompTraceRescal} and might thus add new restrictions to the trace re-scaling property.

With respect to the remaining restrictions that come due to the `completeness' requirement, we have the following Lemma:
\begin{lemma}[Sufficient conditions for complete admissibility]
\label{lem::SuffComp}
Let $\map{P}_\inp:\L(\H_\inp)\to\L(\H_\inp)$ and $\map{P}_\out:\L(\H_\out)\to\L(\H_\out)$ be linear, self-adjoint and unital projective maps that commute with the transposition, and let $\{\widetilde P_{\inp\att}\}_\att$ and $\{\widetilde P_{\out\att}\}_\att$ be their respective extensions. If for all auxiliary spaces $\H_\att$ we have  $P_{\inp\att} = \map{P}_\inp \otimes \map{P}_\att$ and $P_{\out\att} = \map{P}_\out \otimes \map{P}_\att$, where $\map{P}_\att$ is a projector that commutes with the transposition, then the requirement of complete admissibility does not add any new restrictions.
\end{lemma}
The proof of the Lemma follows directly from Lem.~\ref{lem::CompStrucPresTransf} and can be found in App.~\ref{app::proofSuffComp}. Here, we emphasise that the extensions $\map{P}_{\inp} \mapsto \map{P}_\inp \otimes \map{P}_\att$ and $\map{P}_{\out} \mapsto \map{P}_\out \otimes \map{P}_\att$ correspond exactly to the `no-signalling composition' considered in Ref.~\cite{hoffreumon_projective_2022}. If, additionally, the projectors $\map{P}_\att$ are unital (which we generally assume), then this extension also satisfies the assumptions of Lem.~\ref{lem::CompTrRescal}, explaining why such an extension will never lead to additional restrictions on $\widetilde T_{\inp\out}$ when complete admissibility is required. 

We emphasise that the above condition is only sufficient, but not necessary for $\widetilde T_{\inp\out}$ to be completely admissible with respect to the extensions $\{\map{P}_{\inp\att}\}_\att$ and $\{\map{P}_{\out\att}\}_\att$. For example, when considering the complete admissibility of process matrices in Ex.~\ref{ex::ProcMatComp}, the extensions $\map{P}_{\inp\att}$ and $\map{P}^{\mathrm{(TP)}}_{\out\att}$ satisfied neither of the conditions of the above Lemma, yet requiring complete admissibility did -- in total -- not add any new restrictions on the set of process matrices.

Together, the results of the present and the previous Section provide a simple framework to incorporate `completeness' into physical considerations, and to decide whether this addition leads to new sets of valid transformations. Importantly, complete admissibility is not a well-defined property per se, but is contingent on the respective projector extensions that are chosen, which, in turn, depend on the physical property one aims to preserve completely. Following the methods presented in Sec.~\ref{sec::GenAppr}, the above results on complete admissibility can readily be extended to more general projectors $\map{P}_\inp$ and $\map{P}_\out$ that are, for example, not self-adjoint, non-unital, or do not commute with the transposition. In particular the non-unital case, alluded to below Lem.~\ref{lem::CompTrRescal} can readily be treated by slightly changing Lem.~\ref{lem::CompTrRescal} alone. Here, since they are less frequently encountered in relevant physical setups, we will not consider complete admissibility for these more general scenarios explicitly. Rather, we now shift our attention to probabilistic quantum transformations.

\section{Probabilistic quantum transformations}
\label{sec::Probabilistic}

In the previous sections, we have -- except for short comments on the consequences of dropping trace rescaling conditions -- only addressed \textit{deterministic} quantum transformations, i.e., transformations that occur with unit probability. Specifically, these are transformations that are `built up' from quantum states (which can be prepared with unit probability); then, CPTP maps (transformations from states to states), superchannels (transformations from quantum channels to quantum channels), process matrices (transformations from channels to number $1$) are all deterministic, since they have a deterministic element as their 'base object'. More abstractly, here, we consider a transformation to be deterministic if it maps between affine quantum sets $\Scal_\inp$ and $\Scal_\out$ with $\gamma_\inp, \gamma_\out \neq 0$.

 However, quantum theory also admits probabilistic quantum transformations. For example, when considering quantum states, probabilistic transformations are described by quantum instruments~\cite{davis_quantum_1976, Lindblad1979}. Concretely, let $\rho\in\L(\H_\inp)$ be a quantum state, then a quantum instrument is a set of CP maps $\{\map{C}^{(i)}\}_i$ -- each of them corresponding to a possible measurement outcome -- with $\map{C}^{(i)}:\L(\H_\inp)\to \L(\H_\out) $ which add up to a quantum channel, that is, $\map{C}:=\sum_i\map{C}^{(i)}$ is CPTP. When the quantum instrument $\{\map{C}^{(i)}\}_i$ is applied on the state $\rho$, with probability $\tr\left[\map{C}^{(i)}[\rho]\right]$, the classical outcome $i$ is obtained and the state $\rho$ is transformed to 
\begin{align}
\rho':=\frac{\map{C}^{(i)}[\rho]}{\tr\left[\map{C}^{(i)}[\rho]\right]}.
\end{align}

In a similar vein, \textit{all} deterministic quantum transformations (in particular, all the quantum transformations we discussed above) have their probabilistic counterpart, given by sets of CP maps that add up to a deterministic quantum transformation.
\begin{definition}[Probabilistic Quantum Transformations]
Let $\map{P}_\inp:\L(\H_\inp)\to\L(\H_\inp)$ and  $\map{P}_\out:\L(\H_\out)\to\L(\H_\out)$ be linear projective maps and $\Scal_\inp \subseteq \L(\H_\inp) $ and $ \Scal_\out \subseteq \L(\H_\out)$ be sets of quantum objects defined by
\begin{align}
\setlength{\arrayrulewidth}{1pt}
\renewcommand{\arraystretch}{1.2}
\begin{tabular}{|c|c|c|}
\hhline{|-|~|-|}
$W\in \L(\H_\inp) $ belongs to $\Scal_\inp$ iff  & \hspace*{20mm}& $W'\in \L(\H_\out) $ belongs to $\Scal_\out$ iff  \\
$W\geq0$ &\Large{$\mathbf{\longrightarrow}$} & $W'\geq0$ \\
$\map{P}_\inp[W]=W$ && $\map{P}_\out[W']=W'$ \\
$\tr[W]=\gamma_\inp$ && $\tr[W']=\gamma_\out$  \\
\hhline{|-|~|-|}
\end{tabular}
\end{align} 
The set $\{\map{T}^{(i)}_{\inp\out}\}_i$, $\map{T}^{(i)}_{\inp\out}:\L(\H_\inp)\to\L(\H_\out)$ represents a probabilistic quantum transformation from $\Scal_\inp$ to $\Scal_\out$ when:
\begin{subequations}
\begin{empheq}[box=\fcolorbox{blu}{white}]{align}
	i:& \quad \map{T}^{(i)}_{\inp\out} \text{ is completely positive for every }i\\
	ii:& \quad \forall W\in\Scal_\inp, \text{ we have that } \sum_i \map{T}^{(i)}_{\inp\out}[W]\in\Scal_\out
\end{empheq}
\end{subequations}

	When a probabilistic quantum transformation $\{\map{T}^{(i)}_{\inp\out}\}_i$, $\map{T}^{(i)}_{\inp\out}:\L(\H_\inp)\to\L(\H_\out)$ is performed on a quantum object $W\in\Scal_\inp$, with probability $p(i)=\tr\left[\map{T}^{(i)}_{\inp\out}[W]\right]$, the classical outcome $i$ is obtained and $W$ is transformed to 
\begin{align}
W':=\frac{\map{T}^{(i)}_{\inp\out}[W]}{\tr\left[\map{T}^{(i)}_{\inp\out}[W]\right]}.
\end{align}
\end{definition}

Expressed in terms of Choi matrices, any $T_{\inp\out}^{(i)}$ for which there exists a deterministic transformation $T_{\inp\out}$  such that $T_{\inp\out}^{(i)} \leq T_{\inp\out}$ is a valid probabilistic transformation. In general, this latter requirement does not impose many restrictions on the structure of probabilistic elements. For example, if there exists a deterministic transformation $T_{\inp\out} \propto \id_{\inp\out}$ (which is the case for all examples we have considered so far), then for \textit{any} $T_{\inp\out}^{(i)} \geq 0$ there exists $\lambda \geq 0$, such that $\lambda T_{\inp\out}^{(i)} \leq T_{\inp \out}$. For more general cases, except for an irrelevant scaling factor $\lambda$, the only requirement for probabilistic transformations is that they entirely lie in the support of the deterministic transformations, i.e., $\map{P}_\texttt{S} [T_{\inp\out}^{(i)}] =  T_{\inp\out}^{(i)}$, where $\map{P}_\texttt{S}$ is the projector onto the space $\texttt{S} := \text{span}(\bigcup \{\text{supp}(T_{\inp\out})| T_{\inp\out} \ \text{deterministic transformation}\})$.

\section{Measuring quantum objects: dual affine sets, POVMs, and testers}
\label{sec::DualAff}

A particularly important set of probabilistic quantum transformations are measurements, i.e., probabilistic transformations with output space $\mathbbm{C}$. Considering measurements naturally leads to the concept of dual affine sets, which play a pivotal role in quantum mechanics (and beyond), and can also be characterised using the techniques we introduced in the previous sections.

Physically speaking, a quantum measurement is a process which allows one to extract classical outcomes from quantum objects. For instance, if $\rho\in\L(\H)$ is a quantum state, measurements on $\rho$ are described by means of a Positive Operator Valued Measure (POVM), which is a set $\left\{M^{(i)}\right\}_i$ of positive semidefinite operators $M^{(i)}\in\L(\H)$ that add up to the identity, \ie, $\sum_iM^{(i)}=\id$. When the POVM $\left\{M^{(i)}\right\}_i$ is performed on a quantum state $\rho$, the outcome $i$ is obtained with probability $\tr\left[ M^{(i)}\rho \right]$.
In similar spirit, one can also perform measurements on different quantum objects, for instance, one can perform measurements on quantum channels by means of the \textit{tester formalism} \cite{chiribella07circuit_architecture,chiribella09networks,bavaresco21}, also referred to as process POVMs~\cite{ziman08_process_POVM}.

Before going to more general scenarios, we present a brief discussion on how quantum testers can be used to measure a quantum channel. Let $C\in\L(\H_\inp\otimes\H_\out)$ be the Choi operator of a quantum channel and $\{T^{(i)}\}_i$, $T^{(i)}\in\L(\H_\inp\otimes\H_\out)$ be a set of operations such that the probability of measuring $i$ on the channel $C$ is given by $p(i)=\tr\Big[T^{(i)} C\Big]$. In order for $\{p(i)\}_i$ to be a positive number for every positive operator\footnote{Here, the operator $C$ is assumed to be an \textit{arbitrary} positive semidefinite operator, which may not satisfy the constraints of a quantum channel. This is because we require the quantity $\tr[T^{(i)} C]$ to be non-negative not only on channels, but also instrument elements or when acting non-trivially only on a part of a quantum channel (this is similar to a complete positivity argument for the case of quantum channels).} $C$, we need $T^{(i)}\geq0$ for all $i$, and in order to ensure normalisation for every quantum channel $C$, we need that $\sum_i \tr[T^{(i)} C]=1$, which is equivalent to imposing $\tr[T C]=1$ for every channel $C$, where $T:=\sum_i T^{(i)}$. 
The set of all operators $T$ respecting $\tr[T C]=1$ for every channel $C$ is the \textit{dual affine set} of the set of quantum channels, and a set of operators $\{T^{(i)}\}_i$ respecting 
\begin{align}
    T^{(i)}\geq&0 \\
    \sum_i \tr[T^{(i)} C]=&1, \quad \text{ for every channel $C$} 
\end{align}
is called a tester. Interestingly, all quantum testers may be realised within standard quantum circuits, that is, for any tester $\{T^{(i)}\}_i$, $T^{(i)}\in\L(\H_\inp\otimes\H_\out)$ there always exist a state $\rho\in\L(\H_\inp\otimes\H_\text{aux})$ and a POVM $\{M^{(i)}\}_i$, $M^{(i)}\in\L(\H_\text{aux}\otimes\H_\out)$ such that $\tr[T^{(i)} C]=\tr\left[M^{(i)}\; \left(\map{C}\otimes\map{\id}_\text{aux}[\rho]\right)\right]$. Although we might not always have a quantum circuit realisation for other quantum objects (such as process matrices), the concept of dual affine imposes the minimal normalisation constraint required by measuring general quantum objects and plays a fundamental role in general quantum measurements~\cite{ebler16,bavaresco21,hoffreumon_projective_2022} and general quantum assemblages~\cite{bavaresco19_SDI}.

\begin{definition}[Dual Affine set]
Let $\mathcal{S} \subseteq \L(\H)$. An operator $\overline{W}\in\L(\H)$ belongs to $\overline{\mathcal{S}}$, the \textbf{dual affine} set of $\mathcal{S}$ if\,\footnote{In this work we are mostly interested in self-adjoint operators, hence, when $W=W^\dagger$,  we have $\tr[\overline{W}^\dagger \,W]=\tr[\overline{W}\,W]$.}
\begin{equation}
	\tr[\overline{W}^\dagger \,W]=1, \quad \forall W\in\mathcal{S}.
\end{equation}
\end{definition}
Naturally, for \textit{any} set $\mathcal{S}$, its dual affine set is indeed affine, since $\sum_i \lambda_i \tr[
{\overline{W}^{(i)}}^\dagger W] = 1$  for all $\overline{W}^{(i)} \in \overline{\mathcal{S}}$ and  $W \in \mathcal{S}$ if $\sum_i\lambda_i = 1$. If the set $\mathcal{S}$ itself is affine, then we can derive the properties of elements in $\overline{\mathcal{S}}$ in a straightforward way.

We now present a Theorem -- also obtained in an independent way in Ref.~\cite{hoffreumon_projective_2022} -- that allows us to obtain a simple characterisation for dual affine sets of quantum objects. 
\begin{theorem}
\label{thm::DualAff}
Let $\map{P}:\L(\H)\to\L(\H)$ be a  linear projective map and $\mathcal{S}\subseteq \L(\H)$ be an affine set defined by
\begin{subequations}
\begin{empheq}[box=\fcolorbox{brow}{white}]{align}
		W \in \L(\H) \ \text{be} &\text{longs to}  \ \Scal \ \text{iff} \\
    W=\,&\map{P}[W] \\
	\tr[W]=\,&\gamma.
\end{empheq}
\end{subequations}
where $\widetilde P$ is self-adjoint, unital, and commutes with the transposition \ie,
\begin{subequations}
\begin{empheq}[box=\widefbox]{align}
		\map{P}=\,&\map{P}^\dagger \\
		\map{P}[\id]=\,&\id \\
		\map{P}[W^\tau]=\,&{\map{P}[W]}^\tau, \quad \forall W\in\L(\H)\, ,
\end{empheq}
\end{subequations}
and $\gamma \neq 0$. An operator $\overline{W}\in\L(\H)$ belongs to the dual affine set $\overline{\mathcal{S}}$ if  and only if
\begin{subequations}
\begin{empheq}[box=\fcolorbox{blu}{white}]{align}
\label{eqn::Dual1}
		\overline{W}=\, &\overline{W}-\map{P}[\overline{W}]+ \tr\left(\overline{W}\right) \frac{\id}{d} \\ \label{eqn::Dual2}
			\tr[\overline{W}]=\, &\frac{d}{\gamma} \, ,
\end{empheq}
\end{subequations}
where $d= \text{dim}(\H)$.
\end{theorem}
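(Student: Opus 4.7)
The strategy is to exploit self-adjointness of $\map{P}$ to turn the duality condition into a linear constraint on $\map{P}[\overline{W}]$. Since $\gamma \neq 0$, the affine set $\mathcal{S}$ is nonempty, so I fix some reference element $W_0 \in \mathcal{S}$ and write every $W \in \mathcal{S}$ as $W = W_0 + V$, where $V$ ranges over the translation space $\mathcal{T}_0 := \{V \in \L(\H) : \map{P}[V] = V,\ \tr V = 0\}$. Then $\overline{W} \in \overline{\mathcal{S}}$ is equivalent to the two conditions $\tr[\overline{W}^\dagger W_0] = 1$ and $\tr[\overline{W}^\dagger V] = 0$ for all $V \in \mathcal{T}_0$; using $\map{P}[V] = V$ together with $\map{P} = \map{P}^\dagger$, both can be rewritten by replacing $\overline{W}$ with $\map{P}[\overline{W}]$ inside the trace.

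The geometric heart of the argument is that $\mathcal{T}_0$ has codimension one inside $\map{P}[\L(\H)]$. By unitality, $\id \in \map{P}[\L(\H)]$, and $\tr[\id] = d \neq 0$ shows that the trace is a nonzero linear functional on $\map{P}[\L(\H)]$ with kernel exactly $\mathcal{T}_0$. Its one-dimensional Hilbert-Schmidt orthogonal complement within $\map{P}[\L(\H)]$ is therefore spanned by $\id$, since $\tr[\id \cdot V] = 0$ for all $V \in \mathcal{T}_0$. As $\map{P}[\overline{W}] \in \map{P}[\L(\H)]$, orthogonality to $\mathcal{T}_0$ forces $\map{P}[\overline{W}] = c\,\id$, and the normalization $\tr[\map{P}[\overline{W}]^\dagger W_0] = c^*\gamma = 1$ then fixes $c = 1/\gamma$ (for real $\gamma$, as in the cases of interest), so $\map{P}[\overline{W}] = \id/\gamma$.

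Trace preservation of $\map{P}$, which follows from self-adjointness and unitality (as already observed in the proof of Thm.~\ref{thm::genFormProj}), then gives $\tr[\overline{W}] = \tr[\map{P}[\overline{W}]] = d/\gamma$, which is exactly Eq.~\eqref{eqn::Dual2}. Substituting this into $\map{P}[\overline{W}] = \tr[\overline{W}]\,\id/d$ and rearranging produces Eq.~\eqref{eqn::Dual1}. The converse amounts to a one-line check: if Eqs.~\eqref{eqn::Dual1} and~\eqref{eqn::Dual2} hold, then for any $W \in \mathcal{S}$,
\begin{equation*}
\tr[\overline{W}^\dagger W] = \tr[\map{P}[\overline{W}]^\dagger W] = \frac{(\tr[\overline{W}])^*}{d}\tr[W] = \frac{d/\gamma}{d}\,\gamma = 1.
\end{equation*}

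I expect the main subtlety to be the codimension-one argument: unitality is precisely what ensures that $\id$ lies in $\map{P}[\L(\H)]$, and hence that the orthogonal complement of $\mathcal{T}_0$ has the expected one-dimensional form spanned by $\id$. Interestingly, the commutation of $\map{P}$ with the transposition does not appear to be needed in this derivation and is presumably listed in the hypotheses only for consistency with Thm.~\ref{thm::genFormProj}. As an alternative route, one can also read the theorem off as the special case $\H_\out \cong \mathbbm{C}$ of Thm.~\ref{thm::genFormProj}, where the five-term projector in Eq.~\eqref{eqn::GenProj1} collapses to exactly Eq.~\eqref{eqn::Dual1}; this requires some care with the transpose that appears in the Choi convention for maps into $\mathbbm{C}$, but avoids the explicit codimension argument altogether.
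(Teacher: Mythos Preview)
Your proof is correct and follows essentially the same route as the paper's direct argument: both reduce the dual-affine condition, via self-adjointness of $\map{P}$, to the statement $\map{P}[\overline{W}] = \id/\gamma$, and then read off Eqs.~\eqref{eqn::Dual1} and~\eqref{eqn::Dual2} from trace preservation. The only notable difference is in how that key equality is established: the paper observes that $\tr[\map{P}[\overline{W}]\,M] = \tfrac{1}{\gamma}\tr[M]$ for all $M$ with $\tr M = \gamma$, and since such $M$ span $\L(\H)$ this pins down $\map{P}[\overline{W}]$ directly; you instead pass through the codimension-one decomposition $\map{P}[\L(\H)] = \mathcal{T}_0 \oplus \text{span}(\id)$ and use Hilbert--Schmidt orthogonality. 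Both are standard and equally valid. Your observation that commutation with transposition is not actually needed is accurate: the paper's version invokes the identity $\tr[A\,\map{P}[B]] = \tr[\map{P}[A]\,B]$ (Lem.~\ref{lem::PropMap}, point~4), which uses that extra hypothesis, whereas working with $\tr[\overline{W}^\dagger W]$ and the definition of the adjoint avoids it. The paper also mentions the alternative derivation via Thm.~\ref{thm::genFormProj} with trivial output space, exactly as you sketch at the end.
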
 
\begin{proof}
	This Theorem can be shown in two separate ways. On the one hand, since $\map{P}$ satisfies the requirements of Thm.~\ref{thm::genFormProj}, we can directly use it to prove the above Theorem. Secondly, we can show it directly. Since this proof has merit in its own right, we start with this latter approach. To this end, we first note that, as we discuss in detail in Sec.~\eqref{sec::LinOp}, if a linear operator $\map{P}$ is self-adjoint and commutes with the transposition, then $\tr[A\map{P}[B]] = \tr[\map{P}[A]B]$ for all $A, B$. Thus, for any $\overline W$ that satisfies Eqs.~\eqref{eqn::Dual1} and~\eqref{eqn::Dual2}, we have  
\begin{align}
	\tr[\overline W W] = \tr\left[\left(\overline{W} - \map{P} [\overline{W}] + \frac{\mathbbm{1}}{\gamma}\right)W\right] =& \tr[(\overline{W}W) - \tr[\overline{W} \map{P}[W]] + \frac{1}{\gamma}\tr W = 1,
\end{align}
where we have used $\map{P}[W]=W$ and $\tr W = \gamma$ for all $W \in \mathcal{S}$. 

To prove the converse, first note that, since $\map{P}$ is a self-adjoint and unital projector, it is also trace-preserving, and we have $\map{P}[M] \in \mathcal{S}$ for all $M\in \L(\H)$ which satisfy $\tr[M] = \gamma$. The set of all such matrices $M$ spans $\L(\H)$. Now, for any $\overline W$ that satisfies $\tr[\overline W W] = 1$ for all $W \in \mathcal{S}$, we have 
\begin{gather}
\tr[\overline W \map{P}[M]] = \tr[\map{P}[\overline W] M] = 1 = \frac{1}{\gamma} \tr[M]\, ,
\end{gather}
where we have used $\tr[M] = \gamma$. Now, since the above equation holds for a full basis of $\L(\H)$, we have 
\begin{gather}
\map{P}[\overline W] = \frac{1}{\gamma} \mathbbm{1} \quad \Rightarrow \quad \tr[\overline W] = \frac{d}{\gamma}.
\end{gather}
Together, the two statements in the above equation yield Eqs.~\eqref{eqn::Dual1} and~\eqref{eqn::Dual2}, completing the proof. 

As mentioned, and as already implicitly done in Ex.~\ref{ex::ProcMat}, we can also prove this statement by directly employing Thm.~\ref{thm::genFormProj}. To do so, we note that in the considered case, the output space $\H_\out$ and output projector $\map{P}_\out$ are trivial, while we have the rescaling factor $\gamma_\out/\gamma_\inp = 1/\gamma$, such that Eqs.~\eqref{eqn::GenProj1} and~\eqref{eqn::GenProj2} of Thm.~\ref{thm::genFormProj} are directly equivalent to Eqs.~\eqref{eqn::Dual1} and~\eqref{eqn::Dual2} of the above Theorem.
\end{proof}
We emphasise that adding a positivity constraint to the objects in the set $\mathcal{S}$, as is often naturally the case in quantum mechanics, would yield the same linear constraints on the dual set $\overline{\mathcal{S}}$. As already outlined, the characterisation of the dual affine set is simply a special case of the overall characterisation of trace-rescaling linear maps between spaces that are defined by projectors $\map{P}_\inp$ and $\map{P}_\out$. In particular, denoting the corresponding map by $\map{T}[W] = 1$, we have $\overline W^\tau = T$, where $T$ is the Choi matrix of $\map{T}$ and the additional transposition $\sbt^\tau$ appears due to the convention we chose for the Choi formalism. Since dual sets play a prominent role in quantum mechanics, here we chose to discuss this important case explicitly. 

While the above Theorem only applies for self-adjoint, unital projectors that commute with the transposition, it can easily be phrased for more general situations (see Sec.~\ref{sec::GenAppr}).

\subsection{Quantum measurement and its relationship with probabilistic transformations}
From the above discussion, we can now consider quantum measurements on quantum objects in a more general way. We start by presenting their definition.
\begin{definition}
    Let $\Scal_\inp\subseteq\L(\H)$ be a set of quantum objects and  $\overline{\Scal}_\inp$ its dual affine. A general quantum measurement on $\Scal_\inp$ is given by a set of operators $\{M^{(i)}\}_i$, with $M^{(i)}\in\L(\H)$ respecting,
\begin{align}
    M^{(i)}\geq0 \\ 
    \sum_i M^{(i)} \in \overline{\Scal}_\inp,
\end{align}
and the probability of obtaining an outcome $i$ when measuring the object $W\in\Scal_\inp$ is $p(i)=\tr[M^{(i)} W]$.
\end{definition}
General measurements are the largest set of measurements which is in principle allowed by quantum theory, and may be used to perform measurements on general quantum objects such as process matrices, as in Ref.~\cite{lewandowska23discrimination} where the authors used general measurements to discriminate between process matrices with indefinite causal order. 
Similarly to other general transformations discussed in this manuscript, it may be the case that a general measurement may not be realised by quantum circuits (due to indefinite causality), or it might even be the case that one can never obtain a 'fair' physical implementation for some general measurements (due to some other physical principle, e.g., a reversibility preserving principle~\cite{araujo16purification,yokojima20} or the requirement of logically consistent processes~\cite{baumeler16_logically,vanrietvelde22_consistent_circuits}). However, any set greater than the one defined above is certainly forbidden by quantum theory.

We remark that the set of general quantum measurements is closely related to the set of probabilistic transformations. Similarly to quantum instruments, a probabilistic transformation may be viewed as a description of a quantum measurement and a post-measurement state. Hence, every probabilistic transformation corresponds to a quantum measurement.
More precisely, if $\{C^{(i)}\}_i$ is a probabilistic quantum transformation from $\Scal_\inp$ to $\Scal_\out$, its associated general measurement operators are given by $M^{(i)}:=\map{C}^{(i)\dagger}[\id_\out]$. Indeed,
\begin{align}
    \tr[M^{(i)} W]&= \tr[\map{C}^{(i)\dagger}[\id_\out] W] \\ 
    &=\tr[ \id_\inp \map{C}^{(i)}[ W]] \\
    &=\tr[ \map{C}^{(i)}[W]] \\
    &=p(i),
\end{align}
    which is precisely the probability of obtaining the outcome $i$.
Also, every quantum measurement may be viewed as a probabilistic transformation from a quantum object set $\Scal_\inp$ to the trivial set $\Scal_\out=\{1\}\subseteq\L(\mathbb{C}) \cong \mathbb{C}$, which contains only the unit scalar. More precisely, if $\{M^{(i)}\}_i$ is a general quantum measurement on $\Scal_\inp$, one can always define the probabilistic transformation $\map{C}^{(i)}[W]:=\tr[M^{(i)} W]$, where $\map{C}^{(i)}:\L(\H)\to\L(\mathbb{C})$. It is immediate to verify that the map $\map{C}^{(i)}[W]=\tr[M^{(i)} W]$ is completely positive and respects $\sum_i \map{C}^{(i)}[W] = 1$ for all $W\in \Scal_\inp$. 
\subsection{Non-signalling channels and multipartite process matrices}
\label{sec::NonSigChann}
We now use the concept of dual affines to return a final time to the case of process matrices as the dual affine set of the set of non-signalling channels. Here, for generality, we consider the $k$-party case (also considered in Ref.~\cite{araujo15witnessing}). To this end, let us first define non-signalling channels.
\begin{figure}[ht!]
    \centering
    \includegraphics[width = 0.4\linewidth]{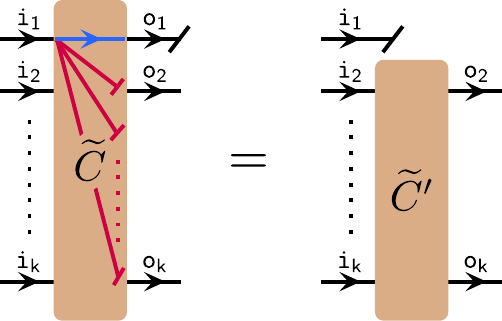}
    \caption{\textbf{Multipartite non-signalling channels.} Each input party $\inp_\ell$ can at most send signals to its corresponding output party $\out_\ell$. Here, this is depicted for the party $\inp_1$, where a blue arrow denotes the possibility to send a signal, while the red lines signify that no signal can be sent. Since the party $\inp_1$ can only send signals to $\out_1$, discarding said outcome then amounts to directly discarding the input $\inp_1$ (depicted in the figure). In terms of the Choi state $C$ of $\map{C}$, this corresponds to the requirement ${}_{\out_1}C = {}_{\inp_1\out_1}C$ of Eq.~\eqref{eqn::non_sig} and analogously for all other pairs $\{\inp_\ell, \out_\ell\}$.}
    \label{fig::non_sig}
\end{figure}
Informally, non-signalling channels are multipartite quantum channels which cannot be used for exchanging information between distinct parties. Let $k\in\mathbb{N}$ be an integer. The Hilbert spaces corresponding to the total input and output, respectively, of such a non-signalling map are given by
\begin{align}
\H_\inp:=&\H_{\inp_1}\otimes\H_{\inp_2}\otimes\ldots\otimes\H_{\inp_k}\\
\H_\out:=&\H_{\out_1}\otimes\H_{\out_2}\otimes\ldots\otimes\H_{\out_k}.
\end{align}
Then, a multipartite quantum channel  $\map{C}:\L(\H_\inp)\to\L(\H_\out)$ is non-signalling if its Choi state $C$ respects,
\begin{align}
\label{eqn::non_sig}
    _{\out_\ell}(C)=_{\inp_\ell\out_\ell}C, \quad \forall \ell\in\{1,2,\ldots,k\}.
\end{align}
Intuitively, the above property implies that discarding the output of party $\ell$ amounts to directly discarding its input, which implies that the only signalling of party $\ell$ happens from $\inp_\ell$ to $\out_\ell$, but not to any other output $\out_{\ell'}$ (see Fig.~\ref{fig::non_sig} for a graphical depiction). 

The requirements of Eq.~\eqref{eqn::non_sig} are equivalent to stating that the map $\map{C}:\L(\H_\inp)\to\L(\H_\out)$ can be written as an affine combination of independent channels, that is $\map{C}=\sum_\alpha \gamma^{(\alpha)} \map{C}^{{(\alpha)}_1}\otimes \map{C}^{(\alpha)}_2\otimes \ldots\otimes \map{C}^{(\alpha)}_k $, where $\gamma^{(\alpha)}\in\mathbb{R}$, $\sum_\alpha \gamma^{(\alpha)}=1$, and all maps $\map{C}^{(\alpha)}_\ell:\L(\H_{\inp_\ell})\to\L(\H_{\out_\ell}) $ are quantum channels~\cite{Gutoski09,chiribella09_switch}.

From this, we obtain a simple characterisation of non-signalling quantum channels. To this end, we define the projectors:
\begin{alignat}{2} \label{eq::NS1}
\map{P}^{(\ell)}&:\L(\H_{\inp_\ell}\otimes\H_{\out_\ell})\to\L(\H_{\inp_\ell}\otimes\H_{\out_\ell}), && \ell\in\{1,2,\ldots,k\}, \\\label{eq::NS2}
\map{P}^{(\ell)}[C]&:=C-_{\out_\ell}C+_{\inp_\ell\out_\ell}C, &&  \ell\in\{1,2,\ldots,k\}, \\\label{eq::NS3}
\map{P}_{NS}&:\L(\H_\inp\otimes\H_\out)\to\L(\H_\inp\otimes\H_\out), &\qquad & \\       \label{eq::NS4}
       \map{P}_{NS}&:=\map{P}^{(k)} \circ \ldots\circ\map{P}^{(2)}\circ \map{P}^{(1)}. &&
\end{alignat}
We emphasise that, here, the order in which the projectors $\map{P}^{(\ell)}$ are applied in Eq.~\eqref{eq::NS4} does not matter, since they all commute (making a construction of $\map{P}_{NS}$ via concatenation possible in the first place).
Hence, a linear operator $C\in\L(\H_\inp\otimes\H_\out)$ is a non-signalling quantum channel if and only if
\begin{align}
    &C\geq0 \\
    &\map{P}_{NS}[C]=C\\
    &\tr[C]=d_\out.
\end{align}
Since the multipartite process matrices lie in the dual affine set of non-signalling channels, Thm.~\ref{thm::DualAff} provides a simple characterisation of multipartite process matrices for an arbitrary number of parties.

\begin{example}[Multipartite process matrices]
Using the projectors for multipartite non-signalling channels defined in Eqs.~\eqref{eq::NS2} and \eqref{eq::NS4} and Thm.~\ref{thm::DualAff}, we obtain a simple characterisation for multipartite process matrices for any numbers of parties $k$.
\begin{align}
\hspace*{-12mm}
\setlength{\arrayrulewidth}{1pt}
\renewcommand{\arraystretch}{1.2}
\hspace{0.6cm}\begin{tabular}{|c|c|c|}
\hhline{|-|~|-|}
\bf{(Multipartite non-signalling channel)} & \hspace*{6mm}& \bf{(Complex number)}\\
$C_{}\in \L(\H_\inp\otimes\H_\out) $ belongs to $\Scal_\inp$ iff  & \hspace*{6mm}& $c\in \L(\mathbb{C}) $  belongs to $\Scal_\out$ iff  \\
$C\geq0$, && $c\geq0$, \\
$C=\map{P}_{NS}[C]$, &\hspace*{2mm} \Large{$\mathbf{\longrightarrow}$}  \hspace*{2mm}& $\map{P}_\out[c]:=c$, \\
$\tr[C_{}]=d_\inp$. && $\tr[c]=1$.\\
\hhline{|-|~|-|}
\end{tabular}
\end{align} 
\begin{subequations}\label{eqn::ProcMatMulti}
\begin{empheq}[box=\fcolorbox{blu}{white}]{align}
 \text{\bf{(Multipartite process}} & \text{  \bf{matrix)}}\nonumber \\
W\geq&0, \\
    W=W-\map{P}_{NS}[W]&+\tr[W]\frac{\id_{\inp\out}}{d_\inp d_\out },\\
 \tr [W] =& d_\out\,.
\end{empheq}
\end{subequations}
We emphasise that this characterisation of multi-partite process matrices has also been provided in equivalent form in App.~B3 of Ref.~\cite{araujo15witnessing}. Here, it follows straightforwardly from the (readily derived) properties of non-signalling channels, and the fact that process matrices form their dual affine set.\hfill $\blacksquare$
\end{example}

\section{Link product and key concepts}
\label{sec::Link}
The proofs of Thm.~\ref{thm::genFormProj} as well as its generalisations rely on only a handful of simple mathematical concepts, which we now discuss. Predominantly, we will rely on three main ingredients: The Choi-Jamio{\l}kowski isomorphism (CJI), which allows us to phrase all statements on maps in terms of matrices; the link product, which translates the concatenation of maps to the corresponding manipulation on the level of Choi matrices; and the fact that linear operators with particular properties can be moved around freely in the link product. 

\subsection{Link Product}
We start by discussing the link product $\star$, already informally introduced in Eq.~\eqref{eqn::LinkIntro}, which captures the action of maps in terms of the CJI~\cite{chiribella09networks}. Concretely, for any linear maps $\map{T}_{\xt\yt}: \L(\H_\xt) \rightarrow \L(\H_\yt)$, $\map{V}_{\yt\zt}: \L(\H_\yt) \rightarrow \L(\H_\zt)$ and arbitrary matrices $M_\xt\in \L(\H_\xt)$, we have 
\begin{gather}
    \text{Choi}[\map{T}_{\yt\zt} \circ \map{V}_{\xt\yt}] =: V_{\yt\zt}  \star T_{\xt\yt} \in \L(\H_\xt\otimes \H_\zt) \quad \text{and} \quad \map{T}_{\xt\yt}[M_\xt] = T_{\xt\yt} \star M_\xt \in \L(\H_\yt)\, ,
\end{gather}
where $\text{Choi}[\sbt\,]$ transforms a map to its corresponding Choi matrix. In particular, the link product of two arbitrary matrices $T_{\xt\yt}: \L(\H_\xt \otimes \H_\yt)$ and $V_{\yt\zt}: \L(\H_\yt \otimes \H_\zt)$ is given by a trace over the spaces they are both defined on and a partial transpose over the same space\footnote{The concrete form of the link product -- in particular the presence of partial transposes -- depends on the convention of the CJI one employs. The form of the link product we present here is in line with the convention chosen in Eq.~\eqref{eqn::Choi}.}, i.e., 
\begin{gather}
\label{eqn::Link}
    T_{\xt\yt} \star V_{\yt\zt} := \tr_\zt[(T_{\xt\yt} \otimes \mathbbm{1}_\zt)(\mathbbm{1}_\xt \otimes V_{\yt\zt}^{\tau_\yt})]\, ,
\end{gather}
where $\sbt^{\tau_\yt}$ denotes the partial transpose with respect to the computational basis of $\H_\yt$. As has been shown in Refs.~\cite{chiribella09networks}, the link product of positive semidefinite (Hermitian) matrices is again positive semidefinite (Hermitian), and it is both associative and -- for all cases we consider -- commutative (up to a re-ordering of tensor factors, which we always tacitly assume). Additionally, it is easy to see that the link product satisfies 
\begin{gather}
\label{eqn::Completeness}
    A \star B = A' \star B \quad \forall B \quad \Leftrightarrow \quad A = A' ,
\end{gather}
since $A\star B$ and $A'\star B$ are equal to $\map{A}[B]$ and $\map{A}'[B]$, respectively, and if two linear maps agree on all elements they act on, they coincide, i.e., $\map{A} = \map{A}'$ and thus $A = A'$ (the converse direction in Eq.~\eqref{eqn::Completeness} holds trivially). 

Importantly, the link product allows us to re-phrase the question of finding the properties of (trace-rescaling) mappings $\map{T}_{\inp\out}$ between sets $\Scal_\inp$ and $\Scal_\out$ defined by projectors $\map{P}_\inp$ and $\map{P}_\out$, respectively, in terms of Choi matrices. The requirements that $\map{T}_{\inp\out}[W_\inp] \in \Scal_\out$ and $\tr[\map{T}_{\inp\out}[W_\inp]] = \gamma_\out/\gamma_\inp \tr[W_\inp]$ for all $W_\inp \in \Scal_\inp$ can now be phrased as
\begin{gather}
 (\map{\id}_\inp \otimes \map{P}_\out)[T_{\inp\out} \star W_\inp] = T_{\inp\out} \star W_\inp \quad \text{and} \quad \tr[T_{\inp\out} \star W_\inp] = \frac{\gamma_\out}{\gamma_\inp} \tr[W_\inp]
\end{gather}
for all $W_\inp = \map{P}[W_\inp]$ and $\tr[W_\inp] = \gamma_\inp$. In order to deduce the structural properties these two equations engender for $T_{\inp\out}$, all constraints need to be `moved onto' $T_{\inp\out}$. Consequently, we now discuss how to `move around' linear maps in the link product.

\subsection{Linear operators in the link product}
\label{sec::LinOp}
The final property of the link product that we will make frequent use of is the fact that linear maps that act on one of the factors in the link product can be `moved around'{(this is akin to finding their adjoint action)}. In order to obtain simplifications for the special case of self-adjoint, unital maps that commute with the transposition -- the case most frequently encountered in quantum mechanics -- we first recall some (well-known) pertinent properties of such maps:

\begin{lemma}[Properties of linear maps]
\label{lem::PropMap}
Let $\map{P}: \L(\H) \rightarrow \L(\H)$ be a linear map. The following statements hold:
\begin{enumerate}
    \item If $\map{P}$ is self-adjoint, then it is Hermiticity preserving. 
    \item If $\map{P}$ is self-adjoint and unital, then it is trace-preserving.
    \item If $\map{P}$ is self-adjoint, then it commutes with the transposition iff it commutes with complex conjugation (with respect to the same basis).
    \item If $\map{P}$ is self-adjoint and commutes with the transposition (or complex conjugation), then $\tr[M' \widetilde P[M]] = \tr[\map{P}[M'] M]$ for all $M', M \in \L(\H)$.
    \item If $\map{P}$ is Hermiticity preserving, then $\tr[H' \map{P}[H]] = \tr[\map{P}[H'] H]$ for all Hermitian $H', H \in \L(\H)$.
\end{enumerate} 
\end{lemma}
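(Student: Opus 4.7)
The plan is to handle the five items in order, since each reduces to a short Hilbert--Schmidt manipulation built on the defining self-adjointness relation $\tr[B^\dagger \map{P}[A]] = \tr[\map{P}[B]^\dagger A]$. The recurring workhorse is that two operators in $\L(\H)$ coincide iff they yield the same Hilbert--Schmidt inner product against every test operator, which lets me turn each desired operator identity into a scalar computation.

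Items 1 and 2 are essentially one-liners. For item 1, I would test $\map{P}[A^\dagger]$ and $\map{P}[A]^\dagger$ against an arbitrary $B$: apply self-adjointness (once on each side if needed), use $\tr[X^\dagger Y] = \overline{\tr[Y^\dagger X]}$, and conclude that the two inner products agree; Hermiticity preservation is then the restriction $A = A^\dagger$. For item 2, the chain $\tr[\map{P}[M]] = \tr[\id^\dagger \map{P}[M]] = \tr[\map{P}[\id]^\dagger M] = \tr[M]$ combines self-adjointness and unitality in a single line.

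Item 3 rests on the basis-dependent identity $A^\tau = \overline{A}^\dagger = \overline{A^\dagger}$, so transposition and complex conjugation differ only by Hermitian conjugation. Since item 1 shows that a self-adjoint $\map{P}$ commutes with $\dagger$, commuting with $\tau$ and commuting with $\overline{\,\cdot\,}$ become logically equivalent, giving both directions. For item 4, I would start from self-adjointness with the substitution $B = M'^\dagger$, yielding $\tr[M' \map{P}[M]] = \tr[\map{P}[M'^\dagger]^\dagger M]$, and then collapse $\map{P}[M'^\dagger]^\dagger$ to $\map{P}[M']$ by pushing the dagger through $\map{P}$ using the transposition-commutation hypothesis together with item 1 (equivalently, item 3 plus item 1). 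This is the one step where several earlier items must be chained, and is where I expect the main bookkeeping effort.

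Item 5 then follows at once once self-adjointness (implicit from the context of use) is at hand: for Hermitian $H$ and $H'$ the daggers in the defining relation are invisible, so $\tr[H' \map{P}[H]] = \tr[(H')^\dagger \map{P}[H]] = \tr[\map{P}[H']^\dagger H] = \tr[\map{P}[H'] H]$, where the last equality uses Hermiticity of $\map{P}[H']$ from item 1. The only genuinely delicate point in the whole argument is item 3, where one must be careful that the equivalence of the two antilinear commutation properties truly requires only self-adjointness and nothing more; everything else is a direct rewriting using cyclicity of the trace and the self-adjointness identity.
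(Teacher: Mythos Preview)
Your plan tracks the paper's proof in App.~\ref{app::LinMapsProp} very closely: items~2, 4, and~5 are handled exactly as you describe (a one-line trace/self-adjointness manipulation), and for item~5 you correctly flag that self-adjointness is being used implicitly alongside Hermiticity preservation, just as the paper does without saying so.

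There is, however, a genuine gap at item~1, and your sketch will run into it. When you pair $\map{P}[A^\dagger]$ and $\map{P}[A]^\dagger$ against an arbitrary $B$, one application of self-adjointness together with the conjugate-trace identity reduces the desired equality to $\tr[A\,\map{P}[B]]=\tr[\map{P}[A]\,B]$, i.e.\ to item~4, so the argument is circular. The paper's chain has precisely the same defect: it terminates in $\braket{i}{\map{P}[H]|j}=\overline{\braket{j}{\map{P}[H]^\dagger|i}}$, which is merely the definition of~$\dagger$ and proves nothing. In fact item~1 is \emph{false} as stated. The rank-one orthogonal Hilbert--Schmidt projection $\map{P}[M]=\braket{1}{M|0}\,\ketbra{1}{0}$ onto the non-Hermitian matrix $\ketbra{1}{0}$ is self-adjoint (every orthogonal projection is) yet sends the Hermitian $\sigma_x$ to the non-Hermitian $\ketbra{1}{0}$. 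The same example commutes with complex conjugation but not with transposition, so item~3 also fails, and this in turn breaks item~4 under the ``or complex conjugation'' reading of its hypothesis. So your route through item~3 via ``self-adjoint $\Rightarrow$ commutes with~$\dagger$'' cannot be completed from self-adjointness alone; an additional hypothesis (e.g.\ Hermiticity preservation assumed outright) is needed, and the paper's argument shares this gap.
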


All the proofs follow by direct insertion and are provided in App.~\ref{app::LinMapsProp} for completeness. With these properties of linear maps $\widetilde P$ in hand, we can now investigate how linear maps can be `moved around' in the link product. 
\begin{lemma}
\label{lem::LinkGen}
Let $A_{\textup{\inp}\textup{\out}} \in \L(\H_\textup{\inp} \otimes \H_\textup{\out})$ and $B_\textup{\inp} \in \L(\H_\textup{\inp})$, and let $\widetilde P_\textup{\inp}: \L(\H_\textup{\inp}) \rightarrow \L(\H_\textup{\inp})$ be a linear operator. Then 
\begin{gather}
\label{eqn::LinkGen}
 A_{\textup{\inp}\textup{\out}} \star \widetilde P_\textup{\inp}[B_\textup{\inp}] = \widetilde P^\dagger_\textup{\inp}[A^{*}_{\textup{\inp\out}}]^* \star B_\inp =: \widetilde P_\textup{\inp}^\tau [A_{\textup{\inp\out}}] \star B_\textup{\inp}.
\end{gather}
If $\widetilde P_\textup{\inp}$ is self-adjoint and commutes with the transposition (or complex conjugation), then 
\begin{gather}
\label{eqn::LinkSpec}
    A_{\textup{\inp}\textup{\out}} \star \widetilde P_\textup{\inp}[B_\textup{\inp}] = \widetilde P_\inp[A_{\textup{\inp\out}}] \star B_\textup{\inp}
\end{gather}
holds.
\end{lemma}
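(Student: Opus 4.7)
The plan is to transfer $\map{P}_\inp$ from acting on $B_\inp$ in the second slot of the link product to acting on $A_{\inp\out}$ in the first slot. The whole argument reduces the link product back to scalar trace pairings, at which point the defining property of the adjoint $\map{P}_\inp^\dagger$ can be invoked directly.

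First, I would unpack the left-hand side of Eq.~\eqref{eqn::LinkGen} using the link product definition together with the operator-Schmidt decomposition $A_{\inp\out} = \sum_{ij} \ketbra{i}{j}_\inp \otimes A_\out^{ij}$, obtaining
\begin{equation*}
A_{\inp\out} \star \map{P}_\inp[B_\inp] = \sum_{ij} \tr\!\left[\ketbra{j}{i}\, \map{P}_\inp[B_\inp]\right] A_\out^{ij}.
\end{equation*}
Applying the defining relation of the adjoint (see the footnote in Thm.~\ref{thm::genFormProj}) with $K=\ketbra{i}{j}$ converts the scalar factor into $\tr\!\left[\map{P}_\inp^\dagger[\ketbra{i}{j}]^\dagger B_\inp\right]$, so $\map{P}_\inp$ has effectively been transferred off $B_\inp$.

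Next, I would expand the claimed right-hand side $\map{P}_\inp^\dagger[A_{\inp\out}^*]^* \star B_\inp$ in the same basis and show it produces exactly this scalar sum. The only tools needed are the elementary identities $\tr[M^\tau N] = \tr[M N^\tau]$ and $M^{*\tau} = M^{\tau *} = M^\dagger$, together with the observation that the computational-basis matrix units $\ketbra{i}{j}$ are fixed by complex conjugation, so the conjugation on $A_{\inp\out}$ only acts nontrivially on the coefficients $A_\out^{ij}$ and is undone by the outer $\ast$. Collecting terms then yields Eq.~\eqref{eqn::LinkGen}. The specialisation Eq.~\eqref{eqn::LinkSpec} follows immediately: by item~3 of Lemma~\ref{lem::PropMap}, a self-adjoint map that commutes with the transposition also commutes with complex conjugation, whence $\map{P}_\inp^\dagger[A^*]^* = \map{P}_\inp[A^*]^* = \map{P}_\inp[A^{**}] = \map{P}_\inp[A]$.

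The main obstacle is purely bookkeeping: the link product intertwines a partial transpose with a partial trace, while the adjoint relation is naturally written in terms of daggers. Keeping track of which operation acts on which tensor factor, and systematically converting between $\cdot^*$, $\cdot^\tau$, and $\cdot^\dagger$ via $M^\dagger = M^{*\tau}$, is where the computation is most prone to sign/conjugation errors. Once the scalar coefficients on both sides are cast in the same form, the remainder is mechanical.
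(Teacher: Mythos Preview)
Your argument is correct and complete. Both the general identity and the specialisation are handled properly, and the bookkeeping you flag as the main hazard is in fact under control: the key step is that the computational-basis matrix units satisfy $\ketbra{i}{j}^* = \ketbra{i}{j}$, so the outer conjugation in $\map{P}_\inp^\dagger[A_{\inp\out}^*]^*$ only touches the output coefficients $A_\out^{ij}$ and cancels against the inner one, leaving exactly $\tr\!\big[\map{P}_\inp^\dagger[\ketbra{i}{j}]^\dagger B_\inp\big]$ as the scalar factor on both sides.

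The paper takes a different route. Rather than expanding $A_{\inp\out}$ in the computational basis and invoking the abstract adjoint relation $\tr[K^\dagger \map{P}[B]] = \tr[\map{P}^\dagger[K]^\dagger B]$, it writes $\map{P}_\inp$ concretely as $\map{P}_\inp[\cdot] = \sum_\alpha L^{(\alpha)} \cdot R^{(\alpha)\dagger}$, applies the transpose to $\map{P}_\inp[B_\inp]$ inside the partial trace, and then uses cyclicity to shuffle $L^{(\alpha)\tau}$ and $R^{(\alpha)*}$ onto $A_{\inp\out}$, recognising the result as $(\sum_\alpha L^{(\alpha)\dagger} A_{\inp\out}^* R^{(\alpha)})^* = \map{P}_\inp^\dagger[A_{\inp\out}^*]^*$. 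Your approach decomposes the \emph{operator} $A_{\inp\out}$; the paper decomposes the \emph{map} $\map{P}_\inp$. The paper's version has the side benefit of immediately exhibiting the explicit form $\map{P}_\inp^\tau[\cdot] = \sum_\alpha L^{(\alpha)\tau} \cdot R^{(\alpha)*}$, which the paper records after the proof; your approach is cleaner in that it never needs a Kraus-type representation and works directly from the Hilbert--Schmidt definition of the adjoint. Both arrive at the same place in roughly the same number of lines.
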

\begin{proof}
First, we note that, if Eq.~\eqref{eqn::LinkGen} holds, then, due to the properties provided in Lem.~\ref{lem::PropMap},  Eq.~\eqref{eqn::LinkSpec} follows directly when $\widetilde P$ is self-adjoint and commutes with the transposition (or complex conjugation). For the proof of Eq.~\eqref{eqn::LinkGen}, we first recall that the action of any linear operator $\widetilde P_\inp: \L(\H_\inp) \rightarrow \L(\H_\inp)$ can be written as $\map{P}_\inp[\sbt \,] = \sum_\alpha L^{(\alpha)}_\inp \sbt \,R^{(\alpha)\dagger}_\inp$ for some matrices $L^{(\alpha)}_\inp, R^{(\alpha)\dagger}_\inp \in \H_\inp$. With this, from the definition of the adjoint, it is easy to see that $\widetilde P_\inp^\dagger[\sbt \,] = \sum_\alpha L_\inp^{(\alpha)\dagger} \sbt \, R_\inp^{(\alpha)}$ holds. Now, using the definition of the link product, we obtain 
\begin{align}
    A_{\textup{\inp}\textup{\out}} \star \map{P}_\textup{\inp}[B_\textup{\inp}] &= \tr_{\textup{\inp}}[A_{\textup{\inp\out}} (\map{P}_\textup{\inp}[B_\textup{\inp}]^\tau \otimes \mathbbm{1}_\textup{\out})] = \sum_\alpha \tr_{\textup{\inp}}[A_{\textup{\inp\out}} (R_\inp^{(\alpha)*} B_\textup{\inp}^\tau L_\inp^{(\alpha)\tau} \otimes \mathbbm{1}_\textup{\out})] \\
    \label{eqn::FinGenLink}
    &= \sum_\alpha \tr_{\textup{\inp}} [(L_\inp^{(\alpha)\dagger} A_{\textup{\inp\out}}^{*} R_\inp^{(\alpha)})^{*} (B_\textup{\inp}^\tau \otimes \mathbbm{1}_\textup{\out})] 
    = \widetilde P^\dagger_\inp[A^{*}_{\inp\out}]^* \star B_\inp\, ,
\end{align}
\end{proof}
We note that it is easy to see that, if $\map{P}_\inp[\,\sbt\,] = \sum_\alpha L_\inp^{(\alpha)} \sbt \, R_\inp^{(\alpha)\dagger}$, then $\map{P}_\inp^\tau[\,\sbt \,] := P^\dagger_\inp[\,\sbt^{*}]^* =  L_\inp^{(\alpha)\tau} \sbt \, R_\inp^{(\alpha)*}$. Importantly for our purposes, Eqs.~\eqref{eqn::LinkGen} and~\eqref{eqn::LinkSpec} allow us to move linear operators around freely in the link product, which we will now exploit to deduce the properties of $T_{\inp\out}$.

\subsection{Proving statements using the link product}
\label{sec::ProofLink}
Now, using the link product, we can easily provide the proofs for the statements made in Secs.~\ref{sec::GenTrans} and~\ref{sec::DualAff}. 
As mentioned, for any linear map $\map{T}_{\inp\out}$ that maps $\Scal_\inp$ onto $\Scal_\out$, we have
\begin{gather}
\label{eqn::ThmLink}
 (\map{\id}_\inp \otimes \map{P}_\out)[T_{\inp\out} \star W_\inp] = T_{\inp\out} \star W_\inp \quad \text{and} \quad \tr[T_{\inp\out} \star W_\inp] = \frac{\gamma_\out}{\gamma_\inp} \tr[W_\inp] \quad \forall W_\inp \in \mathcal{S}_\inp
\end{gather}
Let us start by providing the structural properties of $T_{\inp\out}$ for the case of self-adjoint, unital projectors $\map{P}_\inp$ and $\map{P}_\out$ that commute with the transposition. To do so, we first make use of $(\map{\id}_\inp \otimes \map{P}_\out)[T_{\inp \out} \star W_\inp] = (\map{\id}_\inp \otimes \map{P}_\out)[T_{\inp \out}] \star W_\inp = T_{\inp\out} \star W_\inp$ for all $W_\inp \in \mathcal{S}_\inp$. Importantly, since $\text{span}(\mathcal{S}_\inp)$ does generally \textit{not} coincide with the full space $\L(\H_\inp)$, this equation does \textit{not} allow us to deduce that $(\map{\id}_\inp \otimes \widetilde P_\out) [T_{\inp\out}] = T_{\inp\out}$. However, it is easy to see (since $\widetilde P_\inp$ is a projection) that for every $M \in \L(\H_\inp)$ with $\tr[W] \neq 0$ (for the case $\gamma_\inp = 0$ see App.~\ref{app::gam0Choi}) we have $\widetilde P_\inp[M] \in \text{span}(\mathcal{S}_\inp)$. Consequently, we obtain 
\begin{gather}
    (\map{\id}_\inp \otimes \widetilde P_\out) [T_{\inp\out}] \star \widetilde{P}_\inp[M] = T_{\inp\out} \star \widetilde{P}_\inp[M]  \quad \forall M \in \L(\H_\inp)
\end{gather}
Now, we can use the second part of Lem.~\ref{lem::LinkGen} to move the projector $\widetilde P_\inp$ inside the link product, such that 
\begin{gather}
(\widetilde P_\inp \otimes \widetilde P_\out) [T_{\inp\out}] \star M = \widetilde{P}_\inp[T_{\inp\out}] \star M  \quad \forall M \in \L(\H_\inp)\,, 
\end{gather}
which, since it holds for all $M\in \L(\H_\inp)$, implies $(\widetilde P_\inp \otimes \widetilde P_\out) [T_{\inp\out}] = \widetilde{P}_\inp[T_{\inp\out}]$.
This, in turn, can be phrased in terms of a projector on $T_{\inp\out}$ as 
\begin{gather}
\label{eqn::Proj1}
   T_{\inp\out} = T_{\inp\out} - \widetilde P_\inp [T_{\inp\out}] + (\widetilde P_\inp \otimes \widetilde P_\out) [T_{\inp\out}]\, ,
\end{gather}
where the signs in the above definition are chosen such that $\widetilde{P}_\inp[T_{\inp\out}] = (\widetilde P_\inp \otimes \widetilde P_\out) [T_{\inp\out}]$ still holds (which can be seen by direct insertion into~\eqref{eqn::Proj1} and using that $\widetilde P_\inp$ is a projector).

In a similar vein, we can analyse the trace-rescaling property $\tr[T_{\inp \out} \star W_\inp] = \gamma_\out/\gamma_\inp \tr[W]$ for all $W \in \Scal_\inp$. Following the same argument (and using the fact that $\mathbbm{1}_\inp$ is the Choi state of $\tr_\inp$), we obtain
\begin{gather}
\label{eqn::trEq1}
        \widetilde P_\inp[\tr_\out T_{\inp \out}] \star M =  \frac{\gamma_\out}{\gamma_\inp} \mathbbm{1}_\inp \star M\,.
\end{gather}
Again, this equality holds for all $M \in \L(\H_\inp)$, and thus implies 
\begin{gather}
\label{eqn::traceEq}
    \widetilde P_\inp[\tr_\out T_{\inp\out}] =  \frac{\gamma_\out}{\gamma_\inp} \mathbbm{1}_\inp\, .
\end{gather} 
Since $\widetilde P_\inp$ is unital and self-adjoint, it is trace-preserving (see Lem.~\ref{lem::PropMap}), and we see that $\tr [T_{\inp \out}] = \gamma_\out/\gamma_\inp \cdot d_\inp$. With this, by taking the tensor product of Eq.~\eqref{eqn::traceEq} with $\mathbbm{1}_\out/d_\out$, we obtain $\widetilde P_\inp[{}_\out T_{\inp \out}] =  {}_{\inp \out}T_{\inp \out}$, such that Eqs~\eqref{eqn::traceEq} and~\eqref{eqn::trEq1} can equivalently be written as 
\begin{gather}
\label{eqn::genStatement1}
    T_{\inp \out} = T_{\inp \out} - \widetilde P_\inp[{}_\out T_{\inp \out}] + {}_{\inp \out}T_{\inp \out}\,  \quad \text{and} \quad \tr [T_{\inp\out}] = \frac{\gamma_\out}{\gamma_\inp} d_\inp\, .
\end{gather}    
Now, inserting this into Eq.~\eqref{eqn::Proj1}, we obtain
\begin{gather}
\label{eqn::genStatement2}
    T_{\inp\out} = T_{\inp \out} - \widetilde P_\inp [T_{\inp\out}] + (\widetilde P_\inp \otimes \widetilde P_\out) [T_{\inp \out}]  - \widetilde P_\inp[{}_\out T_{\inp \out}] +  {}_{\inp \out}T_{\inp \out} =: \widetilde P_{\inp\out}[T_{\inp \out}] \quad \text{and} \quad \tr[T_{\inp\out}] = \frac{\gamma_\out}{\gamma_\inp} d_\inp\, ,
\end{gather}
which coincides exactly with Eqs.~\eqref{eqn::GenProj1} and~\eqref{eqn::GenProj2} of Thm.~\ref{thm::genFormProj}. For the converse direction, we first note that a self-adjoint, unital projector $\widetilde P_\out$ is trace-preserving, such that $\widetilde P_\out[{}_\out M] = {}_\out \circ \widetilde P_\out[M] = {}_\out M$ holds. With this, by direct insertion, it is easy to see that Eq.~\eqref{eqn::genStatement2} implies $\widetilde P_\inp[{}_\out T_{\inp\out}] = {}_{\inp\out}T_{\inp\out}$ and thus Eqs.~\eqref{eqn::genStatement1} and~\eqref{eqn::Proj1}; together with Eq.~\eqref{eqn::genStatement2}, these latter two equations directly lead to Eq.~\eqref{eqn::ThmLink}, thus proving Thm.~\ref{thm::genFormProj}. We emphasise, that this converse direction crucially requires the properties of $\widetilde P_\out$ (i.e., self-adjointness and unitality), while the forward direction also holds without these assumptions on $\widetilde P_\out$.

Finally, if we dropped the trace-rescaling property on $\widetilde T_{\inp\out}$ (and thus $T_{\inp\out}$), such that we only demand $\widetilde P_\out[T_{\inp\out} \star W_\inp] = T_{\inp\out} \star W_\inp$ for all $W_\inp = \widetilde P_\inp[W_\inp]${(but \textit{not} $\tr[T_{\inp\out} \star W_\inp] = \frac{\gamma_\out}{\gamma_\inp} \tr[W_\inp]$)}, following the above derivation, we arrive at 

\begin{theorem}[Transformation between linear spaces: specialised Choi version]
\label{thm::LinSpaceProj}
Let $\map{P}_\inp:\L(\H_\inp)\to\L(\H_\inp)$ and  $\map{P}_\out:\L(\H_\out)\to\L(\H_\out)$ be linear projective, self-adjoint and unital maps that commute with the transposition (or conjugation) and $\Scal_\inp \subseteq \L(\H_\inp) $ and $ \Scal_\out \subseteq \L(\H_\out)$ be linear spaces defined by
\begin{align}
\setlength{\arrayrulewidth}{1pt}
\renewcommand{\arraystretch}{1.2}
\begin{tabular}{|c|c|c|}
\hhline{|-|~|-|}
$W\in \L(\H_\inp) $ belongs to $\Scal_\inp$ iff  &\Large{$\mathbf{\longrightarrow}$} & $W'\in \L(\H_\out) $ belongs to $\Scal_\out$ iff  \\
$\map{P}_\inp[W]=W$. & \hspace*{20mm}& $\map{P}_\out[W']=W'$. \\
\hhline{|-|~|-|}
\end{tabular}
\end{align} 
A linear map $\map{T}_{\textup{\inp \out}}:\L(\H_\textup{\inp}) \rightarrow \L(\H_\textup{\out})$ satisfies
$
        \map{T}_{\textup{\inp \out}}[W]\in \Scal_\textup{\out},$ for all $ W \in \Scal_\textup{\inp} 
$
if and only if
\begin{empheq}[box=\fcolorbox{blu}{white}]{align}
\label{eqn::LinProj1}
T_{\inp\out} = T_{\inp\out} - (\map{P}_\inp \otimes \map{\id}_\out) [T_{\inp\out}] + (\map{P}_\inp \otimes \map{P}_\out) [T_{\inp\out}] =: \map{P}^{(\textup{ntr})}_{\inp\out}[T_{\inp\out}]\, ,
\end{empheq}

holds for its Choi matrix $T_{\inp\out}$, and $\map{P}^{(\text{ntr})}_{\inp\out}:\L(\H_\inp\otimes\H_\out)\to\L(\H_\inp\otimes\H_\out)$ is a self-adjoint, unital projector that commutes with the transposition. 
\end{theorem}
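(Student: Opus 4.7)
The plan follows the link-product argument already used in Sec.~\ref{sec::ProofLink} for Thm.~\ref{thm::genFormProj}, with the trace-rescaling half simply removed. First, I would translate the requirement $\map{T}_{\inp\out}[W] \in \Scal_\out$ for all $W \in \Scal_\inp$ into the Choi picture as $\map{P}_\out[T_{\inp\out} \star W_\inp] = T_{\inp\out} \star W_\inp$ for every $W_\inp$ with $\map{P}_\inp[W_\inp] = W_\inp$. Since $\Scal_\inp$ is now a pure linear subspace with $\Scal_\inp = \map{P}_\inp[\L(\H_\inp)]$, this is equivalent to $(\map{\id}_\inp \otimes \map{P}_\out)[T_{\inp\out}] \star \map{P}_\inp[M] = T_{\inp\out} \star \map{P}_\inp[M]$ holding for every $M \in \L(\H_\inp)$ — and crucially, no affine sliver needs to be patched in here, since there is no trace constraint.

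The key step is then to use Lem.~\ref{lem::LinkGen} — which applies because $\map{P}_\inp$ is self-adjoint and commutes with the transposition — to pull $\map{P}_\inp$ across the link product and onto $T_{\inp\out}$, giving
\begin{gather}
(\map{P}_\inp \otimes \map{P}_\out)[T_{\inp\out}] \star M = (\map{P}_\inp \otimes \map{\id}_\out)[T_{\inp\out}] \star M \quad \forall M \in \L(\H_\inp).
\end{gather}
The completeness property Eq.~\eqref{eqn::Completeness} then strips off $M$ and yields the operator identity $(\map{P}_\inp \otimes \map{P}_\out)[T_{\inp\out}] = (\map{P}_\inp \otimes \map{\id}_\out)[T_{\inp\out}]$, which, upon adding $T_{\inp\out} - (\map{P}_\inp \otimes \map{\id}_\out)[T_{\inp\out}]$ to both sides, is exactly Eq.~\eqref{eqn::LinProj1}. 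The converse direction reverses these steps: starting from Eq.~\eqref{eqn::LinProj1}, taking link products with $W_\inp = \map{P}_\inp[W_\inp]$ on the right and applying Lem.~\ref{lem::LinkGen} backwards recovers $\map{P}_\out[T_{\inp\out} \star W_\inp] = T_{\inp\out} \star W_\inp$, i.e., $\map{T}_{\inp\out}[W_\inp] \in \Scal_\out$.

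Finally, I would verify that $\map{P}^{(\textup{ntr})}_{\inp\out} := \map{\id} - \map{P}_\inp \otimes \map{\id}_\out + \map{P}_\inp \otimes \map{P}_\out$ is itself a self-adjoint, unital projector that commutes with the transposition. Self-adjointness, unitality, and transposition-commutativity follow term by term from the same properties of $\map{P}_\inp$, $\map{P}_\out$, and $\map{\id}$; idempotency is a one-line direct expansion in which both cross terms $(\map{P}_\inp \otimes \map{\id}_\out)(\map{P}_\inp \otimes \map{P}_\out)$ and $(\map{P}_\inp \otimes \map{P}_\out)(\map{P}_\inp \otimes \map{\id}_\out)$ collapse to $\map{P}_\inp \otimes \map{P}_\out$ using $\map{P}_\inp^2 = \map{P}_\inp$ and $\map{P}_\out^2 = \map{P}_\out$, exactly cancelling the spurious terms in the square. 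The main obstacle is essentially only notational bookkeeping — the argument is a strict simplification of the one for Thm.~\ref{thm::genFormProj} — but one must take care that the two forward moves (pushing $\map{P}_\out$ out of, and then $\map{P}_\inp$ into, the link product) are genuinely reversible, since this is what converts the pointwise constraint indexed by $W_\inp \in \Scal_\inp$ into the clean operator identity displayed in Eq.~\eqref{eqn::LinProj1}.
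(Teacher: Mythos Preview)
Your proposal is correct and follows essentially the same approach as the paper: the paper's proof simply states that it ``proceeds along the same line as the previous one [for Thm.~\ref{thm::genFormProj}], minus the additional requirement of a trace-rescaling property,'' stopping at the analogue of Eq.~\eqref{eqn::Proj1}, and you have spelled out exactly that argument in detail, including the converse via Lem.~\ref{lem::LinkGen} and the verification of the projector properties.
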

\begin{proof}
   The proof proceeds along the same line as the previous one, minus the additional requirement of a trace-rescaling property, i.e., it stops at Eq.~\eqref{eqn::genStatement1}, which coincides with Eq.~\eqref{eqn::LinProj1} of the Theorem. Conversely, it is easy to see that the above equality implies $\map{P}_\out[T_{\inp\out} \star W_\inp] = T_{\inp\out} \star W_\inp$ for all $W_\inp = \map{P}_\inp[W_\inp]$, independent of the properties of $\map{P}_\out$ (besides being a linear projector), proving Thm.~\ref{thm::LinSpaceProj}.
\end{proof}

Importantly, the above Theorem in not simply a special case of Thm.~\ref{thm::genFormProj}, particularly, it does \textit{not} coincide with it up to the affine constraint, but the respective constraints on $T_{\inp\out}$ are structurally different. This is a generalisation of the structural differences of, e.g., CP and CPTP maps; the former are not just equal to the latter up to a trace condition, but CPTP maps have an additional \textit{structural} property, that is absent in CP maps (namely, that the trace over the output degrees of freedom yields the identity matrix on the input space.).

In addition, we note that Thm.~\ref{thm::LinSpaceProj} covers the case $\gamma_\inp = 0$ of Thm.~\ref{thm::genFormProj}. As detailed in App.~\ref{app::gam0Choi}, in this case, \emph{both} $\gamma_\inp$ and $\gamma_\out$ are equal to zero, such that both spaces $\Scal_\inp$ and $\Scal_\out$ are entirely defined by linear projectors onto a vector space of traceless operators, which is a special instance of the scenario discussed in the above theorem. 

With this, we have considered all pertinent scenarios including projectors that are self-adjoint, unital, and commute with the transposition. We conclude this paper with the general case, where we impose \textit{no} constraints on $\map{P}_\inp$ and $\map{P}_\out$, besides them being {linear} projectors. 

\section{General approach}
\label{sec::GenAppr}
While physically the most relevant case, it is not necessary that the projectors $\map{P}_\inp$ and $\map{P}_{\out}$ defining the sets $\Scal_\inp$ and $\Scal_\out$, respectively, are self-adjoint, unital, and commute with the transposition. As a guiding example for a case where these properties fail to hold, consider the case where $\map{P}_\inp$ is given by a projector on the off-diagonal element $\ketbra{m}{n}$ (with $m\neq n$), i.e., it acts as $\widetilde P_\inp[B] = \braket{m}{B|n} \ketbra{m}{n}$. Naturally, the thusly defined $\map{P}_\inp$ is a projector (since it satisfies $\map{P}_\inp^2 = \map{P}_\inp$), but it is neither self-adjoint, unital, nor does it commute with the transposition. 

To derive the properties of maps $\map{T}_{\inp\out}$ between sets defined by such general projectors $\map{P}_\xt$, we can directly employ Lem.~\ref{lem::LinkGen}, which informs us how to move general linear operators around in the link product. With this, we deduce the concrete form of transformations $T_{\inp\out}$ in the same vein as the derivation for Thm.~\ref{thm::genFormProj} provided in Sec.~\ref{sec::ProofLink}.

\begin{theorem}
\label{thm::FullyGenProj}
Let $\map{P}_\inp:\L(\H_\inp)\to\L(\H_\inp)$ and  $\map{P}_\out:\L(\H_\out)\to\L(\H_\out)$ be linear projections and $\mathcal{S}_\textup{\inp} \subseteq \L(\H_\textup{\inp})$ and $\Scal_\out$ be affine spaces of matrices defined by 

\begin{align}
\setlength{\arrayrulewidth}{1pt}
\renewcommand{\arraystretch}{1.2}
\begin{tabular}{|c|c|c|}
\hhline{|-|~|-|}
$W\in \L(\H_\inp) $ belongs to $\Scal_\inp$ iff  & \hspace*{20mm}& $W'\in \L(\H_\out) $ belongs to $\Scal_\out$ iff  \\
$\map{P}_\inp[W]=W$, & \Large{$\mathbf{\longrightarrow}$}& $\map{P}_\out[W']=W'$, \\
$\tr[W] = \gamma_\inp$. && $ \tr[W']=\gamma_\out$.\\
\hhline{|-|~|-|}
\end{tabular}
\end{align} 
For $\gamma_\inp \neq 0$, a linear map $\map{T}_{\inp\out}: \L(\H_\inp) \rightarrow \L(\H_\out)$ satisfies $\widetilde{T}_{\inp\out}[W] \in \Scal_\out$ for all $W \in \Scal_\inp$ if and only if
\begin{subequations}
\begin{empheq}[box=\fcolorbox{blu}{white}]{align}
\label{eqn::FullyGenProj1}
    &T_{\textup{\inp \out}} = T_{\textup{\inp \out}} - \map{P}_\textup{\inp}^\tau [T_{\textup{\inp\out}}] + (\map{P}_\textup{\inp}^{\tau} \otimes \map{P}_\textup{\out}) [T_{\textup{\inp\out}}] =: \map{P}_{\inp\out}[T_{\textup{\inp \out}}],\\
\label{eqn::FullyGenProj2}
    &\map{P}_\textup{\inp}^{\tau}[(\tr_\textup{\out}T_{\textup{\inp\out}})] = \frac{\gamma_\out}{\gamma_\inp} \map{P}^\tau_\textup{\inp}[\mathbbm{1}_\textup{\inp}]\, .
\end{empheq}
\end{subequations}
\end{theorem}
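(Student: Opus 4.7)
The plan is to mirror the derivation of Thm.~\ref{thm::genFormProj} given in Sec.~\ref{sec::ProofLink}: translate the defining conditions $\map{T}_{\inp\out}[W]\in\Scal_\out$ and $\tr[\map{T}_{\inp\out}[W]]=(\gamma_\out/\gamma_\inp)\tr[W]$ into link-product form, migrate $\map{P}_\inp$ from $W$ onto $T_{\inp\out}$, and then strip the remaining $W$ via the completeness property~\eqref{eqn::Completeness}. The substantive change from the specialised proof is that, without self-adjointness, unitality, or commutation with the transpose, the identity~\eqref{eqn::LinkSpec} is no longer available and one must invoke the fully general identity~\eqref{eqn::LinkGen}, which replaces $\map{P}_\inp$ by its ``transposed adjoint'' $\map{P}_\inp^\tau$ whenever it is moved across the link product. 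Since $(\map{P}_\inp^\tau)^2=(\map{P}_\inp^2)^\tau=\map{P}_\inp^\tau$, this new operator is still a linear projector, so the resulting characterisation retains the structure of Eq.~\eqref{eqn::GenProj1} with $\map{P}_\inp$ replaced by $\map{P}_\inp^\tau$ and the simplifications that relied on trace preservation or unitality absent.

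For the forward direction I would first rewrite $\map{P}_\out[\map{T}_{\inp\out}[W]]=\map{T}_{\inp\out}[W]$ as $(\map{\id}_\inp\otimes\map{P}_\out)[T_{\inp\out}]\star W=T_{\inp\out}\star W$, valid for every $W\in\Scal_\inp$. Because $\gamma_\inp\neq 0$, the affine set $\Scal_\inp$ spans $\map{P}_\inp[\L(\H_\inp)]$, so linearity extends the equality to $W=\map{P}_\inp[M]$ with $M\in\L(\H_\inp)$ arbitrary. One application of Eq.~\eqref{eqn::LinkGen} pulls $\map{P}_\inp$ across as $\map{P}_\inp^\tau$ acting on the $\inp$-factor of $T_{\inp\out}$, and completeness then strips the arbitrary $M$, leaving $(\map{P}_\inp^\tau\otimes\map{P}_\out)[T_{\inp\out}]=\map{P}_\inp^\tau[T_{\inp\out}]$, which rearranges into Eq.~\eqref{eqn::FullyGenProj1}. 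The trace constraint is handled in identical fashion after noting that $\tr[\map{T}_{\inp\out}[W]]=(\tr_\out T_{\inp\out})\star W$ and $\tr[W]=\mathbbm{1}_\inp\star W$; migrating $\map{P}_\inp$ through the link product once more and applying completeness delivers Eq.~\eqref{eqn::FullyGenProj2}.

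For the converse direction I would run the same chain in reverse: given $W=\map{P}_\inp[W]\in\Scal_\inp$, shuttle $\map{P}_\inp$ between the two sides of the link product using Lem.~\ref{lem::LinkGen}, and substitute the hypotheses~\eqref{eqn::FullyGenProj1}--\eqref{eqn::FullyGenProj2} to verify both $\map{P}_\out[\map{T}_{\inp\out}[W]]=\map{T}_{\inp\out}[W]$ and $\tr[\map{T}_{\inp\out}[W]]=\gamma_\out$. That the operator $T_{\inp\out}\mapsto T_{\inp\out}-\map{P}_\inp^\tau[T_{\inp\out}]+(\map{P}_\inp^\tau\otimes\map{P}_\out)[T_{\inp\out}]$ is itself a projector follows by direct substitution from $(\map{P}_\inp^\tau)^2=\map{P}_\inp^\tau$ and $\map{P}_\out^2=\map{P}_\out$.

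The main obstacle is careful bookkeeping rather than any single manipulation: one has to track where $\map{P}_\inp^\tau$ appears in place of $\map{P}_\inp$ once the symmetries enjoyed by quantum projectors are relinquished. In particular, the trace-rescaling constraint cannot be collapsed to a plain statement about $\tr[T_{\inp\out}]$ as happens in Eq.~\eqref{eqn::GenProj2}, since $\map{P}_\inp^\tau$ is neither trace-preserving nor sends $\mathbbm{1}_\inp$ to a scalar multiple of itself; the sharpest statement one can extract is that only $\map{P}_\inp^\tau[\tr_\out T_{\inp\out}]$ is pinned down, which is precisely what Eq.~\eqref{eqn::FullyGenProj2} asserts.
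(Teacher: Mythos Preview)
Your proposal is correct and follows essentially the same route as the paper's own proof: use $\gamma_\inp\neq 0$ to identify $\Span(\Scal_\inp)$ with $\map{P}_\inp[\L(\H_\inp)]$, push $\map{P}_\inp$ across the link product via Lem.~\ref{lem::LinkGen} (picking up the $\tau$), and invoke completeness~\eqref{eqn::Completeness} to strip the arbitrary $M$, for both the projector and trace constraints; the converse and the projector property of $\map{P}_{\inp\out}$ are handled identically. Your remark that the trace condition cannot be collapsed further because $\map{P}_\inp^\tau$ need not be trace-preserving or unital is exactly the point that distinguishes this from the specialised Thm.~\ref{thm::genFormProj}.
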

Before providing a proof, we emphasise that the fact that we allow for non-unital, non-self-adjoint projectors implies that the respective sets $\Scal_\inp$ and $\Scal_\out$ do \textit{not} have to contain an element that is proportional to the identity matrix. The membership of the identity matrix facilitates many considerations in the literature when dealing with transformations between quantum objects (see, for example, Ref.~\cite{hoffreumon_projective_2022}). The relative ease with which the link product can be manipulated allows us to go beyond this case without much added difficulty (see Ex.~\ref{Ex::Gibbs}). 

Furthermore, we stress that the above Theorem exactly coincides with Thm.~\ref{thm::genFormProj} for the case of self-adjoint, unital projectors $\widetilde P_\inp$ that commute with the transposition. In this case, it is easy to see that Eq.~\eqref{eqn::FullyGenProj1} amounts to $T_{\inp\out} =T_{\textup{\inp \out}} - \widetilde P_\textup{\inp} [T_{\textup{\inp\out}}] + (\widetilde P_\textup{\inp} \otimes \widetilde P_\textup{\out}) [T_{\textup{\inp\out}}]$, while Eq.~\eqref{eqn::FullyGenProj2} implies $\widetilde P_\inp[\tr_\out T_{\inp\out}] = \gamma_\out/\gamma_\inp \mathbbm{1}_\inp$, which are exactly the properties we used in the proof of Thm.~\ref{thm::genFormProj}. 
\begin{proof}
The proof of Thm.~\ref{thm::FullyGenProj} proceeds along the same lines as that of Thm.~\ref{thm::genFormProj} with the difference that now the assumptions on the involved projectors are weaker. First, we note that, since $\gamma_\inp \neq 0$, we have $\text{span}(\Scal_\inp) = \map{P}_\inp[\L(\H_\inp)]$. Then, from $\map{P}_\out[T_{\inp\out} \star W_\inp] = T_{\inp\out} \star W_\inp $ for all $W_\inp \in \mathcal{S}_\inp$ we obtain
\begin{gather}
    \map{P}_\out[T_{\inp\out} \star \map{P}_\inp[M]] = (\map{P}_\inp^\tau \otimes \map{P}_\out)[T_{\inp\out}] \star M = T_{\inp\out} \star \map{P}_\inp[M] = \map{P}_\inp^\tau [T_{\inp\out}] \star M\,
\end{gather}
for all $M \in \L(\H_\inp)$, where $\map{P}_\inp^\tau$ has been defined in Eq.~\eqref{eqn::LinkGen}. From this, we directly obtain Eq.~\eqref{eqn::FullyGenProj1}. From the fact that $\tr[T_{\inp\out} \star W_\inp] = \gamma_\out$ for all $W_\inp \in \Scal_\inp$, it then follows that $\tr[T_{\inp\out} \star \widetilde P_\inp[M]] = \gamma_\out/\gamma_\inp \tr[\map{P}_\inp[M]]$ for all $M\in \L(\H_\inp)$. Using the fact that $\id_\xt$ is the Choi matrix of $\tr_\xt$, this can be written as $\tr[T_{\inp\out} \star \map{P}_\inp[M]] = \id_\inp \star \map{P}_\inp[M]$. Employing Lem.~\ref{lem::LinkGen} and using the fact that this equality holds for all $M\in \L(\H_\inp)$ then directly yields Eq.~\eqref{eqn::FullyGenProj2}. The fact that the resulting linear operator $\map{P}_{\inp\out} = \map{\id}_{\inp\out} - \widetilde P_\inp ^\tau + \widetilde P_\inp ^\tau \otimes \widetilde P_\out$ is indeed a projector can be seen by direct insertion and using the fact that, $\widetilde P_\inp = \widetilde P_\inp^2$ implies $\widetilde P_\inp^\tau = (\widetilde P_\inp^\tau)^2$. 

In the converse direction, using $\widetilde P_\inp^\tau = (\widetilde P_\inp^\tau)^2$, by direct insertion, it is easy to see that Eq.~\eqref{eqn::FullyGenProj1} implies $\map{P}_\out[T_{\inp\out} \star W_\inp] = T_{\inp\out} \star W_\inp$ for all $W_\inp \in \Scal_\inp$. 
\end{proof}
As for the previous Theorems, the case $\gamma_\inp = 0$ needs to be discussed in slightly more detail and is provided in App.~\ref{app::gam0Choi}. 

\begin{example}[Gibbs-preserving maps]
\label{Ex::Gibbs}
To see a concrete application of Thm.~\ref{thm::FullyGenProj}, consider the case of transformations that have a given fix point, e.g., the Gibbs state $\eta_1 = \exp{(-\beta H)}/\tr[\exp{(-\beta H)}] \in \L(\H_1)$ for some given inverse temperature $\beta$ and some Hamiltonian $H$. In this case, we have $\map{T}[\eta_1] = \eta_2 \in \L(\H_2)$ for some Gibbs state $\eta_2$. The projector on the input space is given by $\map{P}_\inp[X] = \tr[X\eta_1']\eta_1'=:{}_{\eta_1}Y$, while the projector on the output space is given by $\map{P}_\out[Y] = \tr[Y\eta_2']\eta_2' =: {}_{\eta_2}Y$, where $\eta_\xtt' := \eta_\xtt/\sqrt{\tr[\eta_\xtt^2]}$ (for $\xtt \in \{1, 2\}$), and we have $\gamma_\inp = \gamma_\out = 1$. Assuming that the Choi isomorphism on $\L(\H_1)$ is performed with respect to the eigenbasis of $\eta_1$, we have $\map{P}_\inp^\tau = \map{P}_\inp$ and $\map{P}_\inp^\tau[\id_1] = \eta_1/\tr[\eta_1^2] \neq \id_1$ for $\eta_1 \neq \id_1$. Consequently, the projector on the input space is non-unital, and we must apply Thm.~\ref{thm::FullyGenProj} (instead of Thm.~\ref{thm::genFormProj}) to deduce the properties of $T_{\inp\out}$. Employing Eqs.~\eqref{eqn::FullyGenProj1} and~\eqref{eqn::FullyGenProj2}, we obtain
\begin{gather}
    T = T - {}_{\eta_1}T  + {}_{\eta_1\eta_2}T \quad \& \quad \tr[T(\eta_1'\otimes \id_2)] = 1/\sqrt{\tr[\eta_1^2]}.
\end{gather}
These two constraints (together with $T\geq0$) entirely characterise the set of transformations that leave the Gibbs state invariant. Unsurprisingly, they do not enforce trace preservation (they only guarantee trace preservation on the span of $\eta_1$). Adding this as an additional constraint trivialises the second term of the above equation and leads to the following set of \textit{trace-preserving} Gibbs-preserving transformations:
\begin{gather}
    T = T - {}_{\eta_1}T  + {}_{\eta_1\eta_2}T =: \map{P}^{(\mathrm{GP})}[T] \quad \& \quad T = T - {}_\out T + {}_{\inp\out}T =: \map{P}^{(\mathrm{TP})}[T].
\end{gather}
Since $\map{P}^{(\mathrm{GP})}$ and $\map{P}^{(\mathrm{TP})}$ do not commute, this cannot be further combined into a single projector of the form $\map{P}^{(\mathrm{GP})} \circ \map{P}^{(\mathrm{TP})}$ or $\map{P}^{(\mathrm{TP})} \circ \map{P}^{(\mathrm{GP})}$.\hfill $\blacksquare$
\end{example}

Similarly to Thm.~\ref{thm::LinSpaceProj}, one can also drop the trace-rescaling property (i.a., the trace constraints on the elements of $\Scal_\inp$ and $\Scal_\out$) for general projectors. In this case, one would simply have to drop Eq.~\eqref{eqn::FullyGenProj2} in the above theorem to obtain the properties of $T_{\inp\out}$: 

\begin{theorem}[Transformation between linear spaces: Choi version]
\label{thm::TransChoiGen}
Let $\map{P}_\inp:\L(\H_\inp)\to\L(\H_\inp)$ and  $\map{P}_\out:\L(\H_\out)\to\L(\H_\out)$ be linear projections and $\mathcal{S}_\textup{\inp} \subseteq \L(\H_\textup{\inp})$ and $\Scal_\out$ be linear spaces of matrices defined by 

\begin{align}
\setlength{\arrayrulewidth}{1pt}
\renewcommand{\arraystretch}{1.2}
\begin{tabular}{|c|c|c|}
\hhline{|-|~|-|}
$W\in \L(\H_\inp) $ belongs to $\Scal_\inp$ iff  & \Large{$\mathbf{\longrightarrow}$}& $W'\in \L(\H_\out) $ belongs to $\Scal_\out$ iff  \\
$\map{P}_\inp[W]=W$. &\hspace*{20mm}& $\map{P}_\out[W']=W'$. \\
\hhline{|-|~|-|}
\end{tabular}
\end{align} 
A linear map $\map{T}_{\inp\out}: \L(\H_\inp) \rightarrow \L(\H_\out)$ satisfies $\widetilde{T}_{\inp\out}[W] \in \Scal_\out$ for all $W \in \Scal_\inp$ if and only if
\begin{empheq}[box=\fcolorbox{blu}{white}]{align}
\label{eqn::VecFullyGenProj1}
    &T_{\textup{\inp \out}} = T_{\textup{\inp \out}} - (\map{P}_\textup{\inp}^\tau \otimes \map{\id}_\out) [T_{\textup{\inp\out}}] + (\map{P}_\textup{\inp}^{\tau} \otimes \map{P}_\textup{\out}) [T_{\textup{\inp\out}}] =: \map{P}_{\inp\out}[T_{\textup{\inp \out}}].
\end{empheq}
\end{theorem}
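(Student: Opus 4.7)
The plan is to mirror the strategy of Thm.~\ref{thm::FullyGenProj}, but with the affine/trace-rescaling part stripped away, so that only the linear-subspace constraint survives. Since the trace constraint has been removed from the hypotheses, we have $\mathrm{span}(\Scal_\inp) = \Scal_\inp = \map{P}_\inp[\L(\H_\inp)]$ directly, without having to worry about the case $\gamma_\inp = 0$ separately. Thus the condition $\map{T}_{\inp\out}[W] \in \Scal_\out$ for all $W\in \Scal_\inp$ translates, via the link product and the CJI (Eq.~\eqref{eqn::LinkIntro}), into the single requirement
\begin{equation*}
    \map{P}_\out\bigl[T_{\inp\out} \star \map{P}_\inp[M]\bigr] = T_{\inp\out} \star \map{P}_\inp[M] \qquad \forall\, M \in \L(\H_\inp).
\end{equation*}

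Next, I would use Lem.~\ref{lem::LinkGen} twice to push the projector $\map{P}_\inp$ through the link product. On the right-hand side this yields $\map{P}_\inp^\tau[T_{\inp\out}] \star M$, while on the left it produces $(\map{P}_\inp^\tau \otimes \map{P}_\out)[T_{\inp\out}] \star M$. Invoking the completeness property~\eqref{eqn::Completeness} of the link product (since the identity holds for every $M$) eliminates the $\star M$ on both sides, giving $(\map{P}_\inp^\tau \otimes \map{P}_\out)[T_{\inp\out}] = \map{P}_\inp^\tau[T_{\inp\out}]$. Rearranging into the projector form is then a cosmetic step, producing Eq.~\eqref{eqn::VecFullyGenProj1}. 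To verify that the induced map $\map{P}_{\inp\out} := \map{\id} - \map{P}_\inp^\tau \otimes \map{\id}_\out + \map{P}_\inp^\tau \otimes \map{P}_\out$ is indeed a projector, I would compute $\map{P}_{\inp\out}^2$ directly, using that $\map{P}_\inp^\tau$ inherits idempotency from $\map{P}_\inp$ (since transposing the Kraus decomposition commutes with composition) and that $\map{P}_\out$ is idempotent by assumption; the cross terms collapse in exactly the same pattern as in Thm.~\ref{thm::genFormProj}.

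For the converse direction, I would start from Eq.~\eqref{eqn::VecFullyGenProj1}, link-multiply both sides on the right by an arbitrary $W_\inp \in \Scal_\inp$ (so $W_\inp = \map{P}_\inp[W_\inp]$), and use Lem.~\ref{lem::LinkGen} in reverse to push $\map{P}_\inp^\tau$ back through the link product, converting it into $\map{P}_\inp$ acting on $W_\inp$. Since $\map{P}_\inp[W_\inp] = W_\inp$, the subtraction term cancels exactly against the unsubtracted $T_{\inp\out}\star W_\inp$, leaving $(\map{\id}_\inp \otimes \map{P}_\out)[T_{\inp\out} \star W_\inp] = T_{\inp\out} \star W_\inp$, which is the required condition $\map{T}_{\inp\out}[W_\inp] \in \Scal_\out$.

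The main obstacle I anticipate is purely notational rather than conceptual: keeping track of the $\tau$-operation on the input projector (defined via $\map{P}_\inp^\tau[\,\cdot\,] := \map{P}_\inp^\dagger[\,\cdot^*]^*$ in Eq.~\eqref{eqn::LinkGen}) and confirming that it commutes appropriately with tensor products and preserves idempotency. Once those bookkeeping points are settled, the argument is essentially a streamlined version of the proof of Thm.~\ref{thm::FullyGenProj} with all lines referring to the trace or the rescaling ratio $\gamma_\out/\gamma_\inp$ simply deleted.
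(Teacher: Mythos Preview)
Your proposal is correct and follows essentially the same approach as the paper: the paper does not give a standalone proof of Thm.~\ref{thm::TransChoiGen} but simply remarks that one obtains it by dropping Eq.~\eqref{eqn::FullyGenProj2} from the proof of Thm.~\ref{thm::FullyGenProj}, which is precisely the streamlined link-product argument you outline. Your additional observation that $\mathrm{span}(\Scal_\inp) = \map{P}_\inp[\L(\H_\inp)]$ holds automatically here (no $\gamma_\inp = 0$ caveat) and your explicit verification that $\map{P}_{\inp\out}$ is idempotent are exactly the small bookkeeping steps the paper leaves implicit.
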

As was the case for Thm.~\ref{thm::LinSpaceProj}, we note again, that this theorem also covers the case $\gamma_\inp = 0$ for the case where additional trace constraints are required of $\Scal_\inp$ and $\Scal_\out${(see App.~\ref{app::gam0Choi})}. 

\begin{example}[Projectors on off-diagonal terms] To provide a second concrete example for the above Theorems for the case of projectors that do not preserve Hermiticity, let us return to the simple case mentioned at the beginning of this Section, where $\map{P}_\textup{\inp}$ is given by a projection on the off-diagonal term $\ketbra{m}{n}$ (where $m\neq n$), i.e., it acts as $\map{P}_\textup{\inp}[M] = \braket{m}{M|n} \ketbra{m}{n}$, and let $\map{P}_\textup{\out}$ be a projector on the off-diagonal term $\ketbra{\alpha}{\beta} \in \L(\H_\textup{\out})$, where $\alpha \neq \beta$. With this, the set $\mathcal{S}_\textup{\inp}$ consists of all matrices $W_\textup{\inp}$ that are proportional to $\ketbra{m}{n}$ and the output space $\mathcal{S}_\textup{\out}$ consists of all matrices $W_\textup{\out}$ that are proportional to $\ketbra{\alpha}{\beta}$ (by construction, all elements of $\Scal_\inp$ and $\Scal_\out$ are traceless automatically due to the properties of $\map{P}_\inp$ and $\map{P}_\out$). It is easy to see (assuming that $\{\ket{m}\}_m$ and $\{\ket{\alpha}\}_\alpha$ constitute the canonical computational basis of $\H_\inp$ and $\H_\out$, respectively) that the action of $\map{P}_\textup{\inp}^\tau$ is given by $\map{P}_\textup{\inp}^\tau[M] = \ketbra{m}{m} M \ketbra{n}{n}$, while $\map{P}_\textup{\out}[M'] = \ketbra{\alpha}{\alpha} M' \ketbra{\beta}{\beta}$. Then, the properties of the Choi matrix $T_{\textup{\inp} \textup{\out}}$ of a transformation $\map{T}_{\textup{\inp\out}}: \mathcal{S}_\textup{\inp} \rightarrow \mathcal{S}_\textup{\out}$ follow directly from Eq.~\eqref{eqn::VecFullyGenProj1} of Thm.~\ref{thm::TransChoiGen} as 
\begin{gather}
T_{\textup{\inp\out}} = T_{\textup{\inp\out}} - \ketbra{m}{m}T_{\textup{\inp\out}} \ketbra{n}{n} + \ketbra{m\alpha}{m\alpha}T_{\textup{\inp\out}} \ketbra{n\beta}{n\beta}\, .
\end{gather}
With this, for any $\lambda \ketbra{m}{n} \in \Scal_\textup{\inp}$ (where $\lambda \in \mathbbm{C}$), we have 
\begin{gather}
\begin{split}
 T_\textup{\inp\out} \star \lambda \ketbra{m}{n} &= \lambda \tr_\textup{\inp}[T_{\textup{\inp\out}} \ketbra{n}{m}] \\
 &= \lambda \tr_\textup{\inp}[(T_{\textup{\inp\out}} - \ketbra{m}{m}T_{\textup{\inp\out}} \ketbra{n}{n} + \ketbra{m\alpha}{m\alpha}T_{\textup{\inp\out}} \ketbra{n\beta}{n\beta})\ketbra{n}{m}] \\
 &= \braket{m\alpha}{T_\textup{\inp\out}|n\beta} \ketbra{\alpha}{\beta} \in \Scal_\textup{\out}\,,
\end{split}
\end{gather}
i.e., $\widetilde T_{\inp \out}$ maps any element of $\Scal_\textup{\inp}$ onto an element of $\Scal_\textup{\out}$. \hfill $\blacksquare$
\end{example}

To finish this section, we provide a characterisation of the dual set $\overline{\Scal}$ for the case of general projectors, i.e., the generalised version of Thm.~\ref{thm::DualAff}:
\begin{theorem}[Dual affine for arbitrary projectors]
\label{thm::DualAffGen}
	Let $\mathcal{S}$ be a set defined by
\begin{empheq}[box=\fcolorbox{brow}{white}]{align}
		W \in \L(\H) \ \text{be} &\text{longs to } \Scal \text{ iff} \\
    W=&\map{P}[W], \\
		\tr[W]=&\gamma \, ,
\end{empheq}
where $\widetilde P$ is a projector and $\gamma \neq 0$. An operator $\overline{W}\in\L(\H)$ belongs to the dual affine set $\overline{\mathcal{S}}$ if and only if it satisfies 
\begin{empheq}[box=\fcolorbox{blu}{white}]{align}
\label{eqn::DualnonGen}
	\widetilde P^\tau[\overline W^\tau] = \frac{1}{\alpha}\widetilde P^\tau[\mathbbm{1}]\, ,
\end{empheq}
where $\widetilde P^\tau [\overline W^\tau] := P^\dagger_\textup{\inp}[{\overline W}^{\dagger}]^*$.
\end{theorem}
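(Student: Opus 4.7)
The proof will proceed by recognizing the dual-affine characterization as a special instance of Thm.~\ref{thm::FullyGenProj}. Every candidate $\overline{W} \in \L(\H)$ defines a linear map $\widetilde{T}: \L(\H) \to \L(\mathbb{C})$ via $\widetilde{T}[W] := \tr[\overline{W}^\dagger W]$, and the condition $\overline{W} \in \overline{\mathcal{S}}$ is equivalent to requiring that $\widetilde{T}$ maps every element of $\mathcal{S}$ into the singleton affine set $\{1\} \subseteq \L(\mathbb{C})$. This reframes the dual-affine question as a transformation-between-affine-sets question, to which Thm.~\ref{thm::FullyGenProj} directly applies.

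First, I would fix the identifications needed to invoke Thm.~\ref{thm::FullyGenProj}: take $\H_\inp = \H$ with $\map{P}_\inp = \map{P}$ and $\gamma_\inp = \gamma$, and $\H_\out = \mathbb{C}$ with trivial output projector $\map{P}_\out = \map{\id}$ and $\gamma_\out = 1$. Next, I would compute the Choi matrix of $\widetilde{T}$ directly from Eq.~\eqref{eqn::Choi}: a short calculation shows $\widetilde{T}[\ketbra{j}{k}] = \bra{k}\overline{W}^\dagger\ket{j}$, so that $T = (\overline{W}^\dagger)^\tau$, which (under the paper's standing convention of working with self-adjoint operators, cf.\ the footnote accompanying Def.~1) equals $\overline{W}^\tau$.

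With these substitutions, I would then read off the two conditions of Thm.~\ref{thm::FullyGenProj}. Eq.~\eqref{eqn::FullyGenProj1} collapses to the tautology $T = T$, since $\map{P}_\out = \map{\id}$ makes $\map{P}_\inp^\tau \otimes \map{P}_\out$ equal to $\map{P}_\inp^\tau \otimes \map{\id}$, so the last two terms cancel. Eq.~\eqref{eqn::FullyGenProj2} becomes $\map{P}^\tau[\tr_\out T] = (1/\gamma)\map{P}^\tau[\id]$; because $\H_\out$ is trivial, $\tr_\out T = T = \overline{W}^\tau$, producing precisely Eq.~\eqref{eqn::DualnonGen}. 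The iff part is inherited from Thm.~\ref{thm::FullyGenProj}: if $\overline{W}$ satisfies Eq.~\eqref{eqn::DualnonGen}, then reversing the chain shows $\tr[\overline{W}^\dagger W] = 1$ for every $W\in\mathcal{S}$, and conversely.

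The only real point to be careful about is the bookkeeping of conventions: verifying that the $\tau$-adjoint in Lem.~\ref{lem::LinkGen} matches the $\widetilde{P}^\tau[\,\cdot\,] = \widetilde{P}^\dagger[(\,\cdot\,)^*]^*$ used in the theorem statement, and checking that the identification $T = \overline{W}^\tau$ is consistent with the Choi convention fixed in Eq.~\eqref{eqn::Choi}. Once these conventions are aligned, the result is an immediate corollary rather than a standalone calculation; for completeness one could alternatively give the direct proof by plugging Eq.~\eqref{eqn::DualnonGen} into $\tr[\overline{W}^\dagger \map{P}[M]]$ and using Lem.~\ref{lem::LinkGen} to move $\map{P}^\tau$ onto $\overline{W}^\tau$, mirroring the self-adjoint-case proof of Thm.~\ref{thm::DualAff} but without relying on self-adjointness or unitality of $\map{P}$.
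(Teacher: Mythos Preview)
Your proposal is correct and follows essentially the same approach as the paper: both reduce the dual-affine characterization to the special case of Thm.~\ref{thm::FullyGenProj} with trivial output space $\H_\out = \mathbb{C}$, $\map{P}_\out = \map{\id}$, and $\gamma_\out/\gamma_\inp = 1/\gamma$, observing that Eq.~\eqref{eqn::FullyGenProj1} becomes vacuous while Eq.~\eqref{eqn::FullyGenProj2} yields exactly Eq.~\eqref{eqn::DualnonGen} via the identification $T = \overline{W}^\tau$. Your treatment is in fact slightly more explicit than the paper's in computing the Choi matrix and in flagging the convention checks needed to align $\map{P}^\tau$ with Lem.~\ref{lem::LinkGen}.
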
 
\begin{proof}
The proof follows directly from Thm.~\ref{thm::FullyGenProj} in Sec.~\ref{sec::GenAppr}, where we derive the property of trace-rescaling mappings $\widetilde{T}: \L(\H_\inp) \rightarrow \L(\H_\out)$ (with rescaling factor $\gamma_\out/\gamma_\out$) between spaces defined by general linear projector $\widetilde P_\inp$ and  $\widetilde P_\out$. For this case, we have (see Eqs.~\eqref{eqn::FullyGenProj1} and~\eqref{eqn::FullyGenProj2})
\begin{align}
\label{eqn::ProofAffGen1}
&T_{\textup{\inp \out}} = T_{\textup{\inp \out}} - \widetilde P_\textup{\inp}^\tau [T_{\textup{\inp\out}}] + (\widetilde P_\textup{\inp}^{\tau} \otimes \widetilde P_\textup{\out}) [T_{\textup{\inp\out}}] \\
\label{eqn::ProofAffGen2}
\text{and} \quad &\widetilde P_\textup{\inp}^{\tau}[(\tr_\textup{\out}T_{\textup{\inp\out}})] = \frac{\gamma_\out}{\gamma_\inp} \widetilde P^\tau_\textup{\inp}[\mathbbm{1}_\textup{\inp}]\, .
\end{align}
For the case considered in the above Theorem, we have $\gamma_\out/\gamma_\inp = 1/\gamma$, $\widetilde P_\inp = \widetilde P$, and $\H_\out = \mathbbm{C}$, such that Eq.~\eqref{eqn::ProofAffGen1} becomes the trivial statement $T_{\inp\out} = T_{\inp\out}$. As mentioned above, for the convention of the CJI that we choose, we have $\overline W = T^\tau$. With this, Eq.~\eqref{eqn::ProofAffGen2} coincides exactly with Eq.~\eqref{eqn::DualnonGen} of the Theorem.
\end{proof}
Naturally, Thm.~\ref{thm::DualAffGen} contains Thm.~\ref{thm::DualAff}, where the properties of dual matrices for the case of self-adjoint, unital projectors that commute with the transposition were presented as special cases. To see this, recall that $\widetilde P$ is self-adjoint, unital, and commutes with the transposition it is also trace-preserving, such that Eq.~\eqref{eqn::DualnonGen} of Thm.~\ref{thm::DualAffGen} implies $\tr[\overline W] = \frac{d}{\alpha}$ [i.e., Eq.~\eqref{eqn::Dual2}]. Together with Eq.~\eqref{eqn::DualnonGen}, this yields Eq.~\eqref{eqn::Dual1} and we thus recover Thm.~\ref{thm::DualAff}.

\section{Applications for numerical computation and code availability}
As discussed previously, the projective characterisation of quantum objects analysed in this manuscript is also useful for tackling several problems by means of semidefinite programming. This approach was first presented at Ref.~\cite{araujo15witnessing}, where the authors derive an SDP for witnessing and quantifying indefinite causality in quantum theory. Since then, such methods have been employed in various other works and contexts, ranging from detecting indefinite causal order \cite{branciard_simplest_2015,branciard16_case_studies}, analysing quantum causal relations~\cite{nery21CCDC} and transforming quantum operations\cite{quintino19PRA,quintino22deterministic,yoshida23isometry,yoshida22_det_exact_inversion}, to the quantification of causal connection~\cite{milz_resource_2022} and channel discrimination\cite{bavaresco21,bavaresco21b}.

We have implemented all projective maps discussed in this manuscript and various other useful related functions in Matlab, and all our code is publicly available in an online repository~\cite{MTQ_github_HigherOrderProjectors}. It can be directly used for SDP problems on higher-order quantum problems and other corresponding SDP problems involving transformations between linear and affine sets. 

\section{Comparison with prior works}
General quantum transformation from a single quantum channel to a single quantum channel were first discussed in~\cite{chiribella08quantum_supermaps} in terms of deterministic quantum supermaps (here, quantum superchannels). Ref.~\cite{chiribella08quantum_supermaps} proves that all (one-slot) superchannels admit a decomposition in terms of a fixed ordere quantum circuit. Later, this concept of general transformations was extended to multipartite quantum channels which may be implemented as a sequential circuit, objects also referred to as causal-channels and channels with memory, or sequential channels. These ideas are discussed and detailed in Refs.~\cite{chiribella07circuit_architecture,chiribella09networks}, where a formalisation of $k$-slot quantum combs, the most general fixed order quantum transformations, can be found.

Later, Ref.~\cite{chiribella09_switch} considered more general quantum transformations, and analysed maps from bipartite non-signalling channels to quantum channels (due to linearity, this is equivalent to requiring that pairs of independent quantum channels being mapped to quantum channels). Ref.~\cite{chiribella09_switch} presented the quantum switch, a quantum transformation which does not have a fixed causal order. In similar vein, Ref.~\cite{oreshkov11} considered the scenario of transforming a pair of independent channels to unit probability (due to linearity, this is equivalent to transforming independent quantum instruments into probability distributions). Ref.~\cite{oreshkov11} is also the first paper to define definite causality in terms of convex combination of quantum transformations with fixed ordered.

The first work which characterised quantum transformations by means of projective maps is Ref.~\cite{araujo15witnessing}. This work focuses on the case of transformations of $N$ independent quantum channels to unit probability (this is equivalent to transforming $N$-partite non-signalling channels to unit probability), and presents a general method to obtain a projective characterisation for these scenarios. Later, this method was also adapted in Ref.~\cite{araujo16purification} to consider transformations from non-signalling channels into general channels. Subsequently, this method was used in Ref.~\cite{castro-ruiz_dynamics_2018} to characterise general transformations between bipartite process matrices.

In order to derive a resource theory of causal connection, Ref.~\cite{milz_resource_2022} introduced the concept of adapters, transformations between general sets of quantum objects. More precisely, this work focuses on transformations of quantum objects which can be written as independent quantum channels (due to linearity, equivalent to non-signalling channels) and arbitrary process matrices. The set of admissible adapters (Def.~1 of~\cite{milz_resource_2022}) is a special case of the set of quantum transformations we characterise in Thm.~\ref{thm::genFormProj}. Similarly to our work, Ref.~\cite{milz_resource_2022} characterises these transformations by means of linear projectors, but the corresponding formalism and proofs do not directly apply to all quantum sets, in particular not to many of the scenarios covered in Sec.~\ref{sec::GenTrans} as well as those discussed in Sec.~\ref{sec::GenAppr}.

In a similar vein to our considerations, Ref.~\cite{hoffreumon_projective_2022} has considered the question of characterising `structure-preserving maps', i.e., transformations between quantum sets. The resulting characterisation of structure-preserving maps (Prop.~3 of Ref.~\cite{hoffreumon_projective_2022}) is equivalent to Thm.~\ref{thm::genFormProj} presented in this work. Our results go beyond the cases considered in Ref.~\cite{hoffreumon_projective_2022} and cover a larger set of relevant quantum objects, specifically those characterised by non-unital projectors, and which do not obey the restrictions discussed in the beginning of Sec.~\ref{sec::ChoiCharSpec} (i.e., self-adjointness and commutation with the transposition). Consequently, all results from Sec.~\ref{sec::GenAppr} are novel and not covered by the methods of~\cite{hoffreumon_projective_2022} or other previous works. Additionally, our work also includes various proofs and techniques using the link product operation, considerably simplifying all derivations, which we believe to be of independent interest beyond their initial motivation.

Although not explored in greater generality, the concept of completely admissible transformations appears in Ref.~\cite{araujo16purification}, where the requirement that linear map which transforms process matrices to process matrices should be completely trace preserving is imposed. In a similar vein, Refs.~\cite{gour_comparison_2019, gour_inevitable_2024} discussed the concepts of `completely uniformity preserving' and `completely unital-channel preserving' superchannels, while Ref.~\cite{burniston_necessary_2020} introduced the concept of completely trace non-increasing maps to show that quantum testers (see Sec.~\ref{sec::DualAff}) are indeed the most general method to measure quantum channels. In Ref.~\cite{milz_resource_2022}, completely admissible transformations are identified for a particular choice of projectors and projector extensions, and it is demonstrated that the additional `completeness' requirement indeed changes the set of admissible transformations. In regard to `completeness' of properties, our work provides two main novel contributions: Our work formalises the concept of `completeness' for admissible transformations and provides a systematic way to deduce their properties.  We also present novel sufficient conditions for a transformation to be completely admissible.

Finally, the question of how to transform general quantum sets was also studied from a type-system~\cite{perinotti16higher,bisio_theoretical_2019}, category theory~\cite{Kissinger2017Categorical}, and linear logic~\cite{simmons_higher-order_2022,hoffreumon_projective_2022} perspective. While these works consider very similar questions to the ones of this manuscript, their methodological approach is different from ours, which is exclusively based on standard linear algebra.

\section{Discussion}
In this work, we have provided a systematic way to derive the properties of transformations between quantum sets. While a priori an abstract endeavour, such characterisations play an important role for many questions in quantum information theory -- in particular the study of causal order -- and our results offer a handy tool to deal with such problems in a simple and streamlined manner. We have demonstrated the versatility of our approach by explicitly showing its usefulness for a wide array of concrete examples of higher-order quantum maps, as well as the derivations of the properties of dual sets and probabilistic quantum operations. 

In addition, going beyond the cases generally considered in the literature, we have employed our approach to derive the properties of general transformations when, in addition, `completeness' of their admissibility is required. Here, as for the derivation of our main result, using the properties of Choi states and the link product allows one to straightforwardly deduce the properties of completely admissible transformations in a systematic way. In particular, we provided a concrete strategy to this end, and showed its versatility and applicability by means of concrete examples, simultaneously highlighting the dependence on the chosen extension of input and output projectors.

Importantly, our results solely rely on the properties of the link product, and do not require the respective sets we transform between (and, in particular, the projectors that define them) to have \textit{any} particular properties. Owing to this simplicity, we not only recovered structural properties of objects frequently encountered in quantum mechanics, but our results can readily be applied to any situation where the properties of a linear transformation are to be deduced from those of its input and output space. One such more general example, where, for example, the maximally mixed state is not a member of the quantum set $\Scal_\xtt$ (thus making the corresponding projector non-unital) is the set of Gibbs-preserving maps, which can readily be characterised using our approach.

Inferring the structural properties of their Choi matrices is a generic task when dealing with higher-order maps and/or trying to optimise an objective function over them. As such, the Theorems we derived in this work are of direct use to a whole host of problems in this field and substantially simplify the associated considerations. Additionally, the manipulation of the link product we introduce in order to derive the dual action of a map is a fruitful technique in its own right and can readily be employed to obtain more intuitive insights into a problem via its dual version, whenever its primal is somewhat opaque. Together, our results thus provide a powerful toolbox that is of direct applicability in a wide array of fields. 

\begin{acknowledgments}
We would like to thank Jessica Bavaresco and Esteban Castro-Ruiz for insightful discussions, and Timoth{\'e}e Hoffreumon and Ognyan Oreshkov for helpful clarifications on their work~\cite{hoffreumon_projective_2022}. S.M. acknowledges funding from the Austrian Science Fund (FWF): ZK3 (Zukunftkolleg) and Y879-N27 (START project), and from the European Union’s Horizon Europe research and innovation programme under the Marie Sk{\l}odowska-Curie grant agreement No. 101068332. This project/research was supported by grant number FQXi-RFP-IPW-1910 from the Foundational Questions Institute and Fetzer Franklin Fund, a donor advised fund of Silicon Valley Community Foundation.
\end{acknowledgments}

\nocite{apsrev42Control} 
\bibliographystyle{0_MTQ_apsrev4-2_corrected}

\bibliography{0_MTQ_bib.bib}
\appendix
\section{Thm. \ref{thm::genFormProjOp} for \texorpdfstring{$\gamma_\inp = 0$}{}}
\label{app::gam0}
For the case $\gamma_\inp= 0$, the affine constraint $\tr[W] = \gamma_\inp = 0$ on $\Scal_\inp$ becomes a linear one, making $\Scal_\inp$ a vector space (in the absence of the positivity constraint on the elements on $\Scal_\inp$, that is). Since the mapping $\map{T}_{\inp\out}$ is linear, and we demand that $\map{T}_{\inp\out}[W] \in \Scal_\out$ for all $W \in \Scal_\inp$, we see that the only possibility for $\gamma_\out = \tr[\map{T}_{\inp\out}[W]]$ is $\gamma_\out = 0$, making $\Scal_\out$ a vector space as well. For any other choice of $\gamma_\out$, the desired mapping $\map{T}_{\inp\out}$ does not exist when $\gamma_\inp = 0$.  

Now, to derive the properties of $\map{T}_{\inp\out}$ (for the case when it exists), we first define a projector onto $\Scal_\inp$. Since $\Scal_\inp$ is a vector space, such a projector $\map{P}_\inp'$ always exists, and $W \in \Scal_\inp$ is equivalent to $\map{P}_\inp'[W] = W$. More concretely, we have $\Scal_\inp = \text{span}(\{\map{P}_\inp[X]|\tr[\map{P}_\inp[X]] = 0\})$. There exists a Hermitian orthogonal basis $\{\sigma_\alpha\}$ of $\Scal_\inp$ with $\tr[\sigma_\alpha \sigma_\beta] = \delta_{\alpha\beta}$, and the projector on $\Scal_\inp$ is given by 
\begin{gather}
    \map{P}'_\inp[X] = \sum_\alpha \tr[\sigma_\alpha X] \sigma_\alpha.
\end{gather}
We emphasise that, in general, $\map{P}'_\inp[X] \neq \map{P}_\inp \circ \map{N}_\inp$ and $\map{P}'_\inp[X] \neq  \map{N}_\inp \circ \map{P}_\inp$ -- where  $\map{N}_\inp$ is the projector onto the vector space of traceless matrices -- since $\map{P}_\inp$ and $\map{N}_\inp$ do not necessarily commute. 

With these preliminary definitions out of the way, following the same argument that led to the proof of Thm.~\ref{thm::genFormProjOp} it is easy to see that 
\begin{gather}
    \map{T}_{\inp\out}[W] \in \Scal_\out \ \forall W \in \Scal_\inp \quad \Leftrightarrow \quad \map{P}_\out \circ \map{T}_{\inp\out} \circ \map{P}'_\inp = \map{T}_{\inp\out} \circ \map{P}_\inp' \quad \text{and} \quad \tr\circ \, \map{T}_{\inp\out} \circ \map{P}_\inp' = 0.
\end{gather}
Finally, let us comment on the additional positivity constraint one would generally impose on the elements of the sets $\Scal_\inp$ and $\Scal_\out$. Whenever $\gamma_\inp, \gamma_\out \neq 0$, positivity has no impact on the respective spans of $\Scal_\inp$ and $\Scal_\out$ and thus no influence on the properties of $\map{T}_{\inp\out}$ (beyond the complete positivity constraint, which we impose either way whenever dealing with actual quantum objects). In contrast, for the case that the elements of $\Scal_\inp$ and $\Scal_\out$ are traceless, positivity yields an actual simplification, since the only traceless positive semidefinite matrix is the zero matrix $\mathbf{0}$. For this case then, we would have $\Scal_i = \{\mathbf{0}\}$ and $\Scal_\out = \{\mathbf{0}\}$, which would make \textit{any} linear map (that maps between the correct spaces $\L(\H_\inp)$ and $\L(\H_\out)$) admissible, since all linear maps map the zero element to the zero element.

\section{Thms.~\ref{thm::genFormProj} and~\ref{thm::FullyGenProj} for \texorpdfstring{$\gamma_\inp = 0$}{}}
\label{app::gam0Choi}
As was the case in App.~\ref{app::gam0}, for the case $\gamma_\inp = 0$, the affine constraint on $\Scal_\inp$ becomes a linear one, and, due to the linearity of $\map{T}_{\inp\out}$, the only case we have to consider is $\gamma_\out = 0$. In this case, following the same logic as in App.~\ref{app::gam0}, we can define a linear projector $\map{P}_\inp'$ onto a vector space of traceless objects, such that $W \in \Scal_\inp$ iff $\map{P}_\inp'[W] = W$. A priori, this projector does not have to be self-adjoint and unital, or commute with the transposition, even if the original projector $\map{P}_\inp$ did. Consequently, using the same steps that led to the proof of Thm.~\ref{thm::FullyGenProj}, we see that $\map{T}_{\inp\out}[W] \in \Scal_\out \ \forall W \in \Scal_\inp $ iff  
\begin{gather}
\label{eqn::AppGam0Choi}
   T_{\inp\out} = T_{\inp\out} - \map{P}_\inp^{\prime\tau} \otimes \map{\id}_\out[T_{\inp\out}] + \map{P}_\inp^{\prime\tau} \otimes \map{P}_\out [T_{\inp\out}]' \quad \text{and} \quad \map{P}_\textup{\inp}^{\prime\tau}[(\tr_\textup{\out}T_{\textup{\inp\out}})] = 0.
\end{gather}
Analogously, one could also define a linear projector $\map{P}_\out'$ onto a space of traceless matrices, such that $W' \in \Scal_\out$ if and only if $W' = \map{P}_\out'[W'] = W'$ (i.e., the projector directly incorporates the requirement $\gamma_\out = 0$), and re-write Eq.~\eqref{eqn::AppGam0Choi} equivalently as 
\begin{gather}
\map{T}_{\inp\out}[W] \in \Scal_\out \ \forall W \in \Scal_\inp \quad \Leftrightarrow \quad T_{\inp\out} = T_{\inp\out} - (\map{P}_\inp^{\prime\tau} \otimes \map{\id}_\out)[T_{\inp\out}] + (\map{P}_\inp^{\prime\tau} \otimes \map{P}_\out') [T_{\inp\out}]'\, ,
\end{gather}
which is in line with the results of Thm.~\ref{thm::TransChoiGen}, where transformations between linear spaces defined by general projectors were characterised.

\section{Properties of linear maps}
\label{app::LinMapsProp}
 To prove the first statement of Lem.~\ref{lem::PropMap}, that self-adjoint maps $\map{P}$ are Hermiticity preserving, note that for any Hermitian matrix $H \in \L(\H)$ we have 
\begin{gather}
    \braket{i}{\widetilde P[H]|j} = \tr[\ketbra{j}{i} \widetilde P[H]] = \tr[\widetilde P[\ketbra{i}{j}]^\dagger H] = \tr[\widetilde P[\ketbra{i}{j}] H]^* = \braket{j}{ \widetilde{P}[H]^\dagger |i}^*\, ,
\end{gather}
i.e., $\widetilde P[H]$ is Hermitian whenever $H$ is Hermitian. Similarly, if, in addition, $\widetilde P$ is unital, then we have for arbitrary matrices $M\in \L(\H)$
\begin{gather}
\tr[\widetilde P[M]] = \tr[\widetilde P[\mathbbm{1}]M) = \tr[M]\, ,
\end{gather}
i.e., $\widetilde P$ is trace preserving. To prove the third statement, that for self-adjoint maps, commutation with the transposition is equivalent to commutation with complex conjugation (in the same basis), let us first show that this holds for Hermitian matrices  $H\in \L(\H)$. In this case we have 
\begin{gather}
 \widetilde P[H]^* = \widetilde P[H^\tau]^\dagger = \widetilde P[H^\tau] = \widetilde P[H^*]\, ,
\end{gather}
where we used commutation with the transposition for the first equality, and Hermiticity preservation for the second one. Now, any matrix $M$ can be written as $M = H + iH'$, where both $H$ and $H'$ are Hermitian. Consequently, using the linearity of $\widetilde P$, we obtain
\begin{gather}
 \widetilde P[M]^* = (\widetilde P[H] + i\widetilde P[H'])^* =  (\widetilde P[H^*] - i\widetilde P[(H')^*]) = \widetilde P[H^* - i (H')^*] = \widetilde P[M^*]\, .
\end{gather}
The proof of the converse direction follows along the same lines. For the fourth statement, if $\widetilde P$ is self-adjoint and commutes with the transposition, then for all matrices $M', M \in \L(\H)$ we have 
\begin{gather}
 \tr[M'\widetilde P[M]] = \tr[\widetilde P[M^{\prime\dagger}]^\dagger M] = \tr[\widetilde P[M'] M]\, ,
\end{gather}
where we have used the self-adjointness of $\widetilde P$ and the fact that linear maps that commute with the transposition also commute with complex conjugation (and thus with the Hermitian conjugate). Finally, if $\widetilde P$ is Hermiticity preserving, then we have 
\begin{gather}
\tr[H' \widetilde P[H]] = \tr[\widetilde P[H^{\prime \dagger}]^\dagger H] =  \tr[\widetilde P[H^{\prime}] H]\, ,
\end{gather}
for all Hermitian matrices $H', H \in \L(\H)$ (in fact, Hermiticity of $H'$ would have sufficed).\footnote{Note that, since most maps in quantum mechanics preserve Hermiticity and the matrices that are considered are Hermitian, one can often find the definition of an adjoint map as $\tr[M'\widetilde P[M]] = \tr[\widetilde P^\dagger[M']M]$ and self-adjointness via $\tr[M'\widetilde P[M]] = \tr[\widetilde P[M']M]$ in the quantum information literature. }

\section{Proof of Lem.~\ref{lem::CompStrucPresTransf}}
\label{app::Complete}
Here, using the link product approach discussed in Sec.~\ref{sec::ProofLink}, we prove Lem.~\ref{lem::CompStrucPresTransf}, i.e., we derive the structural properties on transformations $\map{T}_{\inp\out}$ that are completely admissible with respect to the extensions $\map{P}_{\inp\att}$ and $\map{P}_{\out\att}$. First, let the sets $\Scal_{\inp\att}$ and $\Scal_{\out\att}$ be defined via
\begin{align*}
    &W_{\inp\att} \in \Scal_{\inp\att} \quad \Leftrightarrow \quad \map{P}_{\inp\att}[W_{\inp\att}] = W_{\inp\att} \quad \textup{\&} \quad \tr[W_{\inp\att}] = \gamma_{\inp\att} \\
    \text{and} \quad  &W_{\out\att} \in \Scal_{\out\att} \quad \Leftrightarrow \quad \map{P}_{\out\att}[W_{\out\att}] = W_{\out\att} \quad \textup{\&} \quad \tr[W_{\out\att}] = \gamma_{\out\att}
\end{align*}
In order for $\map{T}_{\inp\out}$ to be completely admissible with respect to $\map{P}_{\inp\att}$ and $\map{P}_{\out\att}$, it has to satisfy 
\begin{gather}
\label{eqn::appComp}
\map{P}_{\out\att}[T_{\inp\out} \star W_{\inp\att}] = T_{\inp\out} \star W_{\inp\att} \quad \text{and} \quad \tr[T_{\inp\out} \star W_{\inp\att}] = \gamma_{\out\att} \quad \forall W_{\inp\att} \in \Scal_{\inp\att}. 
\end{gather}
By introducing the identity channel $\map{\id}_{\att\to\att'}$ with corresponding Choi state $\Phi_{\att\att'}^+$, where $\H_\att \cong \H_{\att'}$, the first of these two conditions can equivalently be phrased as 
\begin{gather}
    \map{P}_{\out\att'}[(T_{\inp\out} \otimes \Phi^+_{\att\att'}) \star \map{P}_{\inp\att}[M_{\inp\att}]] = (T_{\inp\out} \otimes \Phi_{\att\att'}^+) \star \map{P}_{\inp\att}[M_{\inp\att}] \quad \forall M_{\inp\att} \in \L(\H_{\inp\att}), 
\end{gather}
where $\map{P}_{\out\att'}: \L(\H_\out \otimes \H_{\att'}) \to  \L(\H_\out \otimes \H_{\att'})$ and $\map{P}_{\out\att'} \cong \map{P}_{\out\att}$. Now, the operator $T_{\inp\out} \otimes \Phi^+_{\att\att'}$ acts non-trivially on all of $W_{\inp\att}$, thus allowing one to deduce its properties in the same way as was done for the proof of Thm.~\ref{thm::FullyGenProj} (see Sec.~\ref{sec::ProofLink}): Since the projector $\map{P}_{\inp\att}$ is assumed to be self-adjoint, and commutes with the transposition, it can be `moved around' in the link product, such that the above equation amounts to
\begin{gather}
     (\map{P}_{\inp\att} \otimes \map{P}_{\out\att'})[T_{\inp\out} \otimes \Phi^+_{\att\att'}] = (\map{P}_{\inp\att} \otimes \map{\id}_{\out\att'})[T_{\inp\out} \otimes \Phi_{\att\att'}^+],
\end{gather} 
which is exactly Eq.~\eqref{eqn::CompStrucPres1} from Lem.~\ref{lem::CompStrucPresTransf}. A priori, there is a freedom in the sense that the explicit form of the Choi state $\Phi^+_{\att\att'}$ of the identity channel $\map{\id}_{\att\to\att'}$ depends on the choice of basis with respect to which the Choi isomorphism is carried out. While the above equation holds independent of the respective choice, its concrete form (i.e., what basis to choose for the Choi isomorphism) will always be clear from context/the concrete physical setup that is considered.

Since $\map{P}_{\inp\att}$ is self-adjoint and unital, it is also trace preserving, such that, in a similar vein to the previous consideration, we see that the second part of Eq.~\eqref{eqn::appComp} is equivalent to
\begin{gather}
    (\map{P}_{\inp\att} \otimes \map{\id_{\out}})[\tr_\out[T_{\inp\out}] \otimes \id_{\att}] \star W_{\inp\att} = \gamma_{\out\att} \quad \forall W_{\inp\att} \in \L(\H_\inp \otimes \H_\out), \ \tr[W_{\inp\att}] = \gamma_{\inp\att}.
\end{gather}
This implies $(\map{P}_{\inp\att} \otimes \map{\id_{\out}})[\tr_\out[T_{\inp\out}] \otimes \id_{\att}] = \gamma_{\out\att}/\gamma_{\inp\att}\cdot\id_{\inp\att}$. Multiplying both sides by $\id_{\out}/d_{\out}d_\att$, we see that the above equation coincides with  
\begin{gather}
(\map{P}_{\inp\att} \otimes \map{\id}_{\out}) [{}_{\out\att}T_{\inp\out}] = {}_{\inp\out\att}T_{\inp\out}\quad \text{\&} \quad \tr[T_{\inp\out}] = \frac{\gamma_{\out\att}}{\gamma_{\inp\att}},    
\end{gather}
which is exactly Eq.~\eqref{eqn::CompStrucPres2} from Lem.~\ref{lem::CompStrucPresTransf}.

\section{Proof of Lem.~\ref{lem::SuffComp}}
\label{app::proofSuffComp}
To prove Lem.~\ref{lem::SuffComp}, let us assume that $\map{P}_{\inp\att} = \map{P}_{\inp} \otimes \map{P}_\att$ and $\map{P}_{\out\att} = \map{P}_{\out} \otimes \map{P}_\att$ for all $\H_\att$ and projectors $\map{P}_\att$ that commute with the transposition. Additionally, let $\widetilde T_{\inp\out}$ be admissible for $\H_\att = \mathbbm{C}$, i.e., 
\begin{gather}
    (\map{P}_{\inp} \otimes \map{\id}_{\out})[T_{\inp\out}] = (\map{P}_{\inp} \otimes \map{P}_{\out})[T_{\inp\out}]. 
\end{gather}
Now, tensoring both sides of this equation with $\Phi_{\att\att'}^+$ and applying $\map{P}_{\att} \otimes \map{P}_{\att'}$ (where $\map{P}_{\att'} \cong \map{P}_{\att}$) yields
\begin{gather}
 (\map{P}_\inp \otimes \map{P}_\att \otimes \map{\id}_{\out} \otimes \map{P}_{\att'})[T_{\inp\out} \otimes \Phi^{+}_{\att\att'}] = (\underbrace{\map{P}_\inp \otimes \map{P}_{\att}}_{=\map{P_{\inp\att}}} \otimes \underbrace{\map{P}_\out \otimes \map{P}_{\att'}}_{=\map{P}_{\out\att'}})[T_{\inp\out} \otimes \Phi^{+}_{\att\att'}]
\end{gather}
Now, it is easy to see that $(\map{\id}_\att \otimes \map{P}_{\att'})[\Phi^{+}_{\att\att'}] = (\map{P}_{\att} \otimes \map{\id}_\att)[\Phi^{+}_{\att\att'}]$ for projectors $\map{P}_{\att'}$ that commute with the transposition. With this, using that $\map{P}_\att^2 = \map{P}_\att$, the above Equation yields 
\begin{gather}
 (\underbrace{\map{P}_\inp \otimes \map{P}_\att}_{=\map{P_{\inp\att}}} \otimes \map{\id}_{\out\att'})[T_{\inp\out} \otimes \Phi^{+}_{\att\att'}] = (\map{P}_{\inp\att} \otimes \map{P}_{\out\att'})[T_{\inp\out} \otimes \Phi^{+}_{\att\att'}], 
\end{gather}
implying (according to Lem.~\ref{lem::CompStrucPresTransf}) that $T_{\inp\out}$ is completely admissible according to the extensions $\map{P}_{\inp\att} = \map{P}_{\inp} \otimes \map{P}_\att$ and $\map{P}_{\out\att} = \map{P}_{\out} \otimes \map{P}_\att$. We emphasise that the additional restriction that $\map{P}_\att$ commutes with the transposition can be dropped, by using the results of Sec.~\ref{sec::GenAppr} to derive a generalisation of Lem.~\ref{lem::CompStrucPresTransf} that also holds for projectors that do not commute with the transposition. 
\end{document}